\renewcommand{\theequation}{\thesection\arabic{equation}}
\newtheorem{lemma}{Lemma}
\newtheorem{proposition}{Proposition}
\theoremstyle{definition}
\DeclareMathOperator{\past}{past}
\DeclareMathOperator{\pa}{pa}
\DeclareMathOperator{\de}{de}
\DeclareMathOperator{\ch}{ch}
\newcommand{\E}{\mathbb{E}} 
\def\ci{\perp\!\!\!\perp}
\newcommand{\gray}{\textcolor{gray}}
\newcommand{\stkout}[1]{\ifmmode\text{\sout{\ensuremath{#1}}}\else\sout{#1}\fi}
\begin{document}


\renewcommand{\baselinestretch}{2}

\markright{ \hbox{\footnotesize\rm Statistica Sinica
}\hfill\\[-13pt]
\hbox{\footnotesize\rm
}\hfill }

\markboth{\hfill{\footnotesize\rm RAZIEH NABI, ROHIT BHATTACHARYA, ILYA SHPITSER, AND JAMES ROBINS} \hfill}
{\hfill {\footnotesize\rm CAUSAL VIEWS OF MISSING DATA MODELS} \hfill}

\renewcommand{\thefootnote}{}
$\ $\par


\fontsize{12}{14pt plus.8pt minus .6pt}\selectfont \vspace{0.8pc}
\centerline{\large\bf CAUSAL AND COUNTERFACTUAL VIEWS OF }
\vspace{2pt} 
\centerline{\large\bf MISSING DATA MODELS}
\vspace{.4cm} 
\centerline{Razieh Nabi, Rohit Bhattacharya, Ilya Shpitser, James M.~Robins} 
\vspace{.2cm} 
\centerline{\it Emory University, Williams College}
\centerline{\it Johns Hopkins University, Harvard University }
 \vspace{.55cm} \fontsize{9}{11.5pt plus.8pt minus.6pt}\selectfont


\begin{quotation}
\noindent {\it Abstract:}
It is often said that the fundamental problem of causal inference is a missing data problem -- the comparison of responses to two hypothetical treatment assignments is made difficult because for every experimental unit only one potential response is observed. In this paper, we consider the implications of the converse view: that missing data problems are a form of causal inference. We make explicit how the missing data problem of recovering the complete data law from the observed law can be viewed as identification of a joint distribution over counterfactual variables corresponding to values had we (possibly contrary to fact) been able to observe them. Drawing analogies with causal inference, we show how identification assumptions in missing data can be encoded in terms of graphical models defined over counterfactual and observed variables. We review recent results in missing data identification from this viewpoint. In doing so, we note interesting similarities and differences between missing data and causal identification theories. 

\vspace{9pt}
\noindent {\it Key words and phrases:}
 Causal inference, causal graphs, missing not at random
\par
\end{quotation}\par

\def\thefigure{\arabic{figure}}
\def\thetable{\arabic{table}}

\renewcommand{\theequation}{\thesection.\arabic{equation}}

\fontsize{12}{14pt plus.8pt minus .6pt}\selectfont

\section{Introduction}
\label{sec:intro} 

Missing data is a common challenge in the analysis of survey, experimental, and observational data, both for the purpose of prediction and for drawing causal conclusions. Complete-case analysis is a popular and simple approach to handling missing data, but it is generally only justified when data entries are missing-completely-at-random (MCAR) \citep{rubin76inference}. When data entries are missing in a way that only depends on observed data values, the data are said to be missing-at-random (MAR) \citep{rubin76inference}. Under MAR assumptions, it is possible to identify target parameters of the underlying data distribution without the need for further parametric assumptions. Moreover, we can estimate parameters identified under MAR via likelihood-based methods such as expectation maximization \citep{dempster77maximum, horton1999maximum, little2002statistical}, multiple imputation \citep{rubin87multiple, schafer1999multiple}, inverse probability weighting \citep{robins1994estimation, li2013weighting}, or semiparametric methods that exploit information about  mechanisms that determine missingness  and are closely related to methods for estimating causal parameters \citep{robins1995analysis, scharfstein99adjusting, robins2001comment,  tsiatis06missing, tchetgen2009commentary}.

However, it is often the case that missingness status depends on the underlying values that are themselves censored. This type of missingness is known as missing-not-at-random (MNAR) \citep{rubin76inference}. Without any assumptions, parameters of interest in an MNAR model cannot be 
identified from the observed data distribution. A common approach  to MNAR problems is to impose sufficient parametric or semiparametric restrictions on the underlying data distribution and missingness selection model, such that they yield identification \citep{wu1988estimation, little2002statistical, ma2003identification, wang2014instrumental, miao2016identifiability, miao2016varieties, sun2018semiparametric}. 
Other approaches to handling MNAR mechanisms include conducting sensitivity analysis and obtaining nonparametric bounds \citep{rotnitzky1998semiparametric, robins2000sensitivity, scharfstein2003generalized,  vansteelandt2007estimation, mattei2014identification, moreno2016sensitivity, scharfstein2021semiparametric, duarte2023automated}. 

Identification may be achieved by imposing a set of independence restrictions among variables in the full data distribution that are sufficient to express parameters of interest as functions of the observed data distribution. This approach is also taken in nonparametric identification theory developed in causal inference, where independence restrictions among variables in a full data distribution in a causal model are encoded via directed acyclic graphs (DAGs); this approach has led to sound and complete algorithms for identification of a wide set of causal parameters as functions of the observed data \citep{tian02on,shpitser06id,huang2006do,shpitser18medid,richardson2023nested,bhattacharya2022semiparametric}.  Completeness here means that failure of the algorithm on a particular parameter input implies that the parameter is, in fact, not identified given the set of restrictions encoded by the proposed model. These algorithms generalize many existing results regarding special cases, such as identification by covariate adjustment that relies on the stable unit treatment value assumption and conditional ignorability \citep{rubin76inference}, or the g-computation algorithm that relies on sequential ignorability \citep{robins86new}. 

Directed acyclic graphs have also been adapted to encode independence restrictions in full data distributions in missing data models. Using such representations, many complex scenarios have been described where it is possible to recover target parameters as functions of the observed data distribution 
\citep{glymour2006using, daniel2012using, martel2013definition, mohan13missing, thoemmes2014cautious, tian2015missing, shpitser2016consistent, bhattacharya19mid, saadati2019adjustment, nabi20completeness, mohan2021graphical, scharfstein2021markov, nabi2023testability, guo2023sufficient, chen2023causal}. In particular, this line of work has shed light on several classes of  MNAR models 
that still permit identification of the target parameter  without relying on any parametric assumptions on the full data distribution. In addition to providing concise representations of statistical models by means of factorizations, graphs also illustrate the causal mechanisms responsible for missingness and provide a natural interpretation of such mechanisms in applied settings. 

It has been noted by many authors that causal inference and missing data are analogous in terminology, theory of identification, and statistical inference. Causal inference has often been phrased as a missing data problem since responses to some treatment interventions are not observed \citep{rubin74potential, Ding2018causal} and missing data is viewed as a form of causal inference where  interventions on missingness indicators can be carried out \citep{robins86new, shpitser15missing, bhattacharya19mid}. At the same time, not much discussion has been devoted to important differences between these frameworks. In this paper, we discuss identifiability of models with MNAR mechanisms and examine new developments in graphical missing data models. We show how identification theory may be understood by viewing missing data models counterfactually, by analogy with causal models, and discuss additional ingredients needed to augment causal identification theories to handle identification in missing data models.  

The paper is organized as follows. 
In Section~\ref{sec:missing_data_models}, we provide a brief overview of classical missing data models and then redefine these models using causal and counterfactual terminology. 
In Section~\ref{sec:causal_inference}, we give an overview of statistical and causal models of DAGs. 
In Section~\ref{sec:missing_data_dag_models}, we formally define missing data DAG models. 
In Section~\ref{sec:miss-ID}, we discuss several unique  techniques for nonparametric identification of complete data distributions in missing data DAGs.
In Section~\ref{sec:diss}, we conclude the paper  by providing a discussion on whether ideas explored in missing data DAG models joined with rank preservation assumptions can be used to obtain novel identification results in the causal inference settings. We also consider missing data DAGs that account for the presence of unmeasured confounders in the supplementary materials. 

\section{Missing Data Models}
\label{sec:missing_data_models}

\subsection{Classical Missing Data Models}
Inferring a parameter of interest in the presence of missing data often involves posing a statistical model that encodes a set of assumptions on the missingness mechanisms. Let $Z=\left( Z_{1}, \dots, Z_{K}\right) ^{T}$ be a vector of $K$ random variables with finite support and probability density $p_Z$. Given a  sample of the random vector $Z$, let $R=\left( R_{1}, \dots, R_{K}\right)^T$ be the vector of binary missingness indicators with $R_{k}=1$ if $Z_{k}$ is observed and $R_{k}=0$ if $Z_k$ is missing. Denote the conditional distribution of $R$ given $Z$ by $p_{R|Z}$ and the joint distribution of $Z$ and $R$ by $p_{(R, Z)}$, and assume $p_Z$, $p_{R|Z}$, and $p_{(R,Z)}$ are contained in statistical models $\mathcal{M}_{Z},$ $\mathcal{M}_{R|Z},$ and $\mathcal{M} = \mathcal{M}_{Z} \otimes \mathcal{M}_{R|Z},$ respectively, where $\otimes$ denotes the direct product of the two statistical models. $\mathcal{M}$ is referred to as a selection model in the missing data literature \citep{little16selection}.
The observed data is often denoted by $O=(R,Z_\text{obs})$, where  $Z_\text{obs}$ is the subvector of $Z$ corresponding to the subvector of $R$ whose entries are $1,$ i.e., $Z_\text{obs} \coloneqq \{Z_k \in Z \text{ s.t. } R_k = 1\}.$ If $\mathcal{M}$ imposes no restrictions on the observed data distribution $p_{O},$ then it is called a \emph{nonparametric saturated} model \citep{robins97non-a}.

\subsection{Counterfactual Views of Classical Missing Data Models}
To motivate causal and counterfactual views of missing data, we first provide a description of causal models. Causal models are often phrased in terms of counterfactual responses to interventions. Random variables of the form $Y^{(a)}$ are used to denote the response of an outcome $Y$ when a treatment $A$ is intervened on and set to value $a$.  The observed (factual) outcomes $Y$ are typically defined as coarsened versions of counterfactual outcomes via the \emph{consistency} property.  For example, for a binary treatment $A$ with values $0$ and $1$, the observed outcome is obtained via the following coarsening mechanism $Y  \coloneqq Y^{(a=1)} \times A + Y^{(a=0)} \times (1 - A)$. That is, the observed outcome $Y$ gives us an imperfect view into the underlying counterfactuals $Y^{(a=1)}$ and $Y^{(a=0)}$: for individuals that received treatment $A=1$, the observed outcome corresponds to the counterfactual response $Y^{(a=1)}$; for those that received  $A=0,$ we gain information regarding the counterfactual response $Y^{(a=0)}$.  In general, we see only one of  the (potentially several for non-binary treatments)  counterfactual responses for each individual. This complicates the task of computing causal parameters, which are phrased in terms of contrasts between different counterfactual responses. This forms the basis of the observation that the fundamental problem of causal inference is a missing data problem.

We can redefine the classical missing data terminology using the terminology of causal models described above.  We may view each missingness indicator $R_k \in R$ as a treatment variable that can be intervened on. Each $Z_k \in Z$ can then be interpreted counterfactually -- a random variable had we, possibly contrary to fact, intervened and set the corresponding missingness indicator $R_k$ to $1$. By analogy with causal models, from here onward, we refer to $Z_k$ as $L_k^{(r_k=1)}$ to highlight the counterfactual nature of the variable. This notation explicitly encodes, in counterfactual language, the assumption implicit in classical missing data models, that the value of $Z_{k}$ remains the same regardless of whether any other $Z_j \in Z$ is observed or missing (or equivalently $R_j$ is $1$ or $0$).
We collect all these counterfactual variables into a vector $L^{(r=1)} \coloneqq (L_{1}^{\left(r_{1}=1\right) }, \ldots, L_{K}^{\left( r_{K}=1 \right) }) ^{T}$, and simplify the notation for $L_{k}^{(r_k=1)}, L^{(r=1)}$ via $L_k^{(1)}, L^{(1)}$.

The link between $Z$ and $Z_\text{obs}$ can be viewed as the link between the counterfactual variables $L^{(1)}$ that are of substantive interest, treatment variables $R$, and  factual variables $L$ (a.k.a. proxies) that we observe. Specifically, for any $L_{k}^{(1)}\in {L}^{(1)}$, its corresponding proxy $L_{k}\in L$ is \emph{deterministically} defined as a function of $L_k^{(1)}$ and $R_k$ as follows: $L_{k}=L_{k}^{(1)} $ if $R_{k}=1$ and $L_{k}=``?"$ if $R_{k}=0 $.  This link is closely related to the \emph{consistency} 
property in causal inference, described above.

The state space of any $L_{k}\in L$ is equal to the state space of the corresponding $L_{k}^{(1)}$ in $L^{(1)}$ joined with the special value ``?". Hence, we denote generic realizations of $L_{k}\in {L}$ and $L_{k}^{(1)}\in {L}^{({1})}$ as $l_{k}$ and $l_{k}^{(1)}$, respectively. We denote realizations of ${L}^{({1})}$ and ${L}$ by ${l}^{({1})}$ and ${l}$, respectively. 
By consistency, it is always true that $p(L_{k}^{(1)} \mid R_{k})\vert_{R_{k}=1}=p(L_{k} \mid R_{k})\vert_{R_{k}=1},$ however $p(l_{k}^{(1)} \mid r_{k})|_{r_{k}=1}=p(l_{k} \mid r_{k})|_{r_{k}=1}$ is true if and only if $l_{k}^{(1)}=l_{k}$. 
Here, we use the notation $p(.)|_{R_k=1}$ to mean that $R_k$ in the probability expression $p(.)$ is evaluated at value $1$.  
Our convention is that in any equation or probability expression where $l_{k}^{(1)}$ and $l_{k}$ appear and $r_{k}=1$, $l^{(1)}_{k} = l_{k}$. Thus $p(l_{k}^{(1)} \mid r_{k})|_{r_{k}=1}=p(l_{k} \mid r_{k})|_{r_{k}=1}$ becomes always true. In this way, we can evaluate probability expressions at realizations rather than at the corresponding random variables while still imposing consistency. 

We redefine $Z_\text{obs}$ via $L = L^{(R)} = L^{(r)}\vert_{R = r}.$  With this new notation, the observed data then changes from $O=(R, Z_\text{obs})$ to $O=(R,L)$.  Note, however, that the state space
of $L$ is formed by augmenting the state space of $Z$ with the special value ``?''.

We end this subsection by noting an important difference between causal and missing data models. A causal model posits the existence of two counterfactuals $Y^{(a=0)}$, $Y^{(a=1)}$
for any binary treatment $A$ 
and observed variable $Y$.  In contrast, in a missing data model, there exists only one counterfactual $L_k^{(r_k=1)}$ for any binary missingness indicator $R_k$ and potentially censored variable $L_k$.  As we will show later, this crucial difference allows additional identification theory to be developed specifically for missing data problems that
has no analogue in causal inference identification theory.

\subsection{Identification in Missing Data Models}
\label{subsec:ID_missing_data_models}

A common goal in missing data problems is to determine whether the joint distribution of the complete data ${L}^{\left( {1} \right) } \coloneqq {Z}$, that is $p(l^{\left( {1}\right)}) \coloneqq p(z) \in \mathcal{M}_{{L}^{\left(1\right) }} \coloneqq \mathcal{M}_{{Z}},$ is identified from the observed data $O = (R, L) \coloneqq (R, Z_\text{obs})$, in a model $\mathcal{M=M}_{{R}|{L}^{({1})}} \otimes \mathcal{M}_{{L}^{({1})}}$ defined over the joint distribution of $\left(R, L^{\left(1\right) }\right) \coloneqq \left({R,Z}\right).$ When discussing identification in missing data problems below, we will refer to $p(l^{(1)})$ as the \emph{target law}, $p(r \mid l^{(1)})$ as the \emph{missingness mechanism}, and $p(l^{(1)}, r)$ as the \emph{full law}.  {These distributions may also be extended with a set of auxiliary variables $W$ that are always observed and/or variables $U$ that are never observed. }

Unless explicitly stated otherwise, we assume $\mathcal{M}_{R|L^{({1})}}$ encompasses positive distributions, meaning that for every $p( R=r  \mid {l}^{(1)} ) \in \mathcal{M}_{{R}|{L}^{({1})}}$, $p( R=r  \ \big| \ {l}^{({1})}) >0$ with probability $1$ for all $r \in \left\{ 0,1\right\} ^{K}
$.
{This assumption may be modified to allow pattern restrictions such as monotonicity, where $p( R=r  \ \big| \ {l}^{({1})}) >0$ for any pattern of values of $R$ where, under some ordering of missingness indicators $R_1, \ldots, R_K$, if $R_k = 0$ then $R_{k+1} = 0$ with probability $1$  for every $k \in \{1, \ldots, K-1\}$.}

The model ${\cal M}={\cal M}_{{R}|{L}^{({1})}}\otimes {\cal M}_{{L}^{({1})}}$ is said to be nonparametric just identified  if: (i) ${\cal M}_{{L}^{({1} )}}={\cal M}_{{L}^{({1})}}^{\text{np}}$, where ${\cal M}_{{L}^{({1})}}^{\text{np}} $ is the set of all distributions of $L^{({1})},$ (ii) the distribution of the observed data $L$ is unrestricted, and 
(iii) $p\left( l^{(1)}\right) $ is identified from the observed data distribution $p(l)$ \citep{robins97non-a}. 

Variables in $L$, by definition, contain all information in variables in $R$.  However, in what follows, we will employ observed data distributions $p(l, r)$ which will allow ideas from causal inference to be used for identfiability, with $R$ being viewed as treatment variables. 

A \emph{necessary and sufficient} condition for identification of the target law $p(l^{(1)})$  is that for all $p({r} \mid {l}^{(1)}) \in \mathcal{M}_{{R} | {L}^{({1})}}$, $p({R}={1}\mid {l}^{({1})})>0$  with probability $1$  and $p({R}={1}\mid {l}^{({1})})$ is identified from the observed distribution $p({l, r})$. This follows from the fact that $p\left( R = 1, {l}^{({1})}\right) = p\left( R = 1, {l} \right) $ by consistency and an application of chain rule:
\begin{align}
    p( {l}^{({1})}) ={p({l}, R=1)}/{p(R= 1 \mid {l}^{(1)} )}. 
    \label{eq:target_law}
\end{align}

A \emph{necessary and sufficient} condition for identification of the full law $p(l^{(1)}, r)$ is that for all $p({r} \mid {l}^{(1)}) \in \mathcal{M}_{{R} | {L}^{({1})}},$ $p({R}={1}\mid {l}^{({1})})>0$ with probability $1$  and $p(R = r \mid {l}^{({1})})$ is identified from the observed distribution $p({l, r})$, for any missingness pattern $r \in \{0, 1\}^K$.  This follows from the chain rule of probability, 
\begin{align}
    p( {l}^{(1)}, R = r) =\big\{ p( {l}, R=1 ) /{p(R=1\mid {l}^{(1)})} \big\} \times p(R  = r \mid {l}^{(1)}). 
    \label{eq:full_law}
\end{align}

In this paper, we study a general procedure for analyzing MNAR models proceeds by imposing a set of restrictions on the full data distribution (the target distribution and its missingness mechanism) that are sufficient to yield identification of the parameter of interest. In many models, the restrictions may be represented by a factorization of the full data law with respect to a DAG. Our objective is to identify the target law $p({l}^{(1)})$ from factual data on variables $(R,L)$ where the {missing data} model is represented via a DAG with potentially hidden (never observed) variables. As we will see, restricting attention to such missing data models allows ideas inspired by causal identification theory to be brought to bear. To this end, we first elaborate on the use of DAGs in causal inference.

\section{Directed Acyclic Graphs in Causal Inference}
\label{sec:causal_inference}

Progress in reasoning about counterfactual quantities is achieved by imposing restrictions on a causal model, which consists of sets of joint distributions defined over the factual and counterfactual random variables. Consider the task of identifying the average causal effect, defined as $\E[Y^{(a=1)} - Y^{(a=0)}].$  Given a set of baseline covariates $X$ we may restrict ourselves to  distributions satisfying an independence assumption that $A \ci Y^{(a)} \mid X$ for every value $a$, and positivity of the distribution of $A$ conditioned on $X$, denoted by $p_{A|X}.$ Under the assumptions of this causal model, the average causal effect is identified via the adjustment formula: $\E[\E[Y | A=1,X] - \E[Y | A=0,X]]$.  

Assumptions in a causal model can often be encoded in a more intuitive fashion via DAGs. We formally introduce the statistical and causal models of a DAG below. 

\subsection{Statistical DAG Models}

Let $V = (V_1, \dots, V_K)^T$ be a vector of $K$ random variables with finite support and probability density $p_V.$ We will abbreviate the joint probability $p_V(V=v)$ as simply $p(v).$ Restrictions on the distribution $p_V$ can be encoded via a DAG as follows. Define a DAG ${\cal G}(V)$ consisting of a set of nodes $V$ associated with each random variable $V_i \in V$, and a set of directed edges that form connections between these variables with the restriction that these edges do not form a directed cycle. We will sometimes abbreviate ${\cal G}(V)$ as simply ${\cal G}$ if the set of vertices $V$ is assumed or obvious. For a given DAG $\mathcal{G}$, the statistical model $\mathcal{M}^{\mathcal{G}}$ is the set of distributions that factorize as $	p({v})=\prod_{v_i \in {v}}p\left(v_i \ \big| \ \pa_{\mathcal{G}}(v_i)\right)$, where $\pa_{\mathcal{G}}(v_i)$ is the set of values of variables corresponding to the parents of $V_i$, $\pa_{{\cal G}}(V_i)$, i.e., the set of vertices in $\mathcal{G}$ with directed edges into $V_i$. Distributions in $\mathcal{M}^{\mathcal{G}}$ are said to be Markov relative to $\mathcal{G}$. 

We use the following notation for standard genealogical relations in DAGs. We denote the \emph{children} of a vertex $V_i$ in ${\cal G}$ -- the set of all vertices that have $V_i$ as a parent --  as $\ch_{\cal G}(V_i).$ We denote the \emph{descendants} of $V_i$ -- the set of all vertices $V_j$ such that there exists a directed path from $V_i$ to $V_j$ -- as $\de_{\cal G}(V_i).$ By convention, $\de_{\cal G}(V_i)$ is defined to include $V_i$ itself. 

When ${\cal G}$ is \emph{complete} -- all vertices are pairwise connected via a directed edge --  ${\cal M}^{\cal G}$ imposes no restrictions on $p_V$. When ${\cal G}$ is not complete, it is informative to compare the DAG factorization of $p_V$ to the chain rule factorization to understand how missing edges entail restrictions on the observed distribution. Consider any valid topological ordering $\prec_{\cal G}$ of the variables --  an ordering satisfying the property that whenever $V_{i}\prec _{\cal G}V_{j}$, $V_{i}$ is not a descendant of $V_{j}$ in ${\cal G}$. Then for every variable $V_i \in V$, define $\text{past}_{\mathcal{G}}(V_i)$ to be the set of vertices earlier than $V_i$ under $\prec _{\cal G}$ (we suppress explicit reference to $\prec_{{\cal G}}$ to avoid notational clutter.) Under any variable ordering $\prec_{{\cal G}}$, we have the following equality between the chain factorization and DAG factorization of $p(v):$ $p({v})=\prod_{v_i \in {v}}p(v_i \mid \text{past}_{\mathcal{G}}(v_i)) = \prod_{v_i \in {v}}p(v_i \mid \pa_{\mathcal{G}}(v_i))$. 

Whenever $\pa_{\cal G}(V_i) \subset \past_{\cal G}(V_i)$ (corresponding to missing edges in ${\cal G}$) the above equality implies that $(V_i \ci \past_{\cal G}(V_i) \setminus \pa_{{\cal G}}(V_i) \mid \pa_{{\cal G}}(V_i))$. That is, given a DAG ${\cal G},$ restrictions  in $p_V$ are characterized by the following ordered local Markov property: each variable $V_i$ is independent of its non-parental past given its parents. All restrictions entailed by a DAG ${\cal G}$ are easily read via the d-separation criterion \citep{pearl09causality}.

\subsection{Causal DAG Models}
\label{subsec:causalDAG}

In addition to a statistical DAG model $\mathcal{M}^{\mathcal{G}}$ representing restrictions on the factual variables ${V}$, it is possible to define a causal DAG model associated with ${\cal G}$. Causal DAG models can be generalized from statistical DAG models by equipping them with a special subset ${A}^{\dagger }\subseteq {V}$ referred to as treatment or action variables, {with non-action variables $V \setminus A^{\dag}$ denoted as $Y$.}

	We will index variables $X \subseteq V$ in the model via subscript indices that are consistent with
	a total ordering $\prec$ which must be topological with respect to ${\cal G}$, in the following sense: 	if $X_i$ and $X_j$ are both elements of $X \subseteq V$, and $i<j$, then $X_{i}\prec X_{j}$.
	
	Given any variable $X_i$ in $X \subseteq V$, define $\overline{X}_i$ to be the set consisting of $X_i$ and all elements in $X$ earlier than $X_i$ in the ordering $\prec$.  Similarly, define $\underline{X}_i$ to be the set consisting of $X_i$ and all elements in $X$ later than $X_i$ in the ordering $\prec$.  For any variable $V_i$ and a set $X$, define $X_{j>i}$ to be the $\prec$-smallest element of $X$ larger than $V_i$, and $X_{j<i}$ to be the $\prec$-largest element of $X$ smaller than $V_i$.
	Define $X_{-i}$ to be all elements in $X$ other than $X_i$.
	Finally, for every $V_i \in V$, define
	$\text{past}_{\prec}(V_i)$ to be all variables in $V$ earlier than $V_i$ in the ordering. 
    We will extend this notation in the natural way to values as well, e.g. $\overline{x}_i$ are values of $\overline{X}_i$.  

Although a number of different causal DAG models have been proposed \citep{robins86new,pearl09causality} they all satisfy the properties below, given a particular fixed ordering $\prec$ topological for ${\cal G}$: (Note that any topological ordering yields the same model with different variable indexing.)

\begin{enumerate}[(i)]
	\setlength{\itemsep}{-0.1cm}
	
	\item
	\label{item:existence}
	 {
	{\bf Counterfactual existence.}
	For $V_{i} \in V$, there
	exists a set of counterfactual variables $V_{i}^{\left( {a}^{\dagger }\right)}$
	representing $V_{i}$'s behavior when ${A}^{\dagger}$
	is set to ${a}^{\dagger}$ by external intervention.
	}
	
	\item
	\label{item:no-backwards}
	{
	{\bf No backwards causation.}
	For every $V_i^{(a^{\dag})}$, we have 
    $$V_{i}^{(a^{\dag})}\equiv V_{i}^{(\overline{a}^{\dag}_{j<i}, \ \underline{a}^{\dag}_{j>i})}=V_{i}^{(\overline{a}^{\dag}_{j<i}, \ \underline{a}^{\dag}_{j>i}{'})} \equiv V_{i}^{(\overline{a}^{\dag}_{j<i})},$$ 
    where $\underline{a}^{\dag}_{j>i}{'}$ and $\underline{a}^{\dag}_{j>i}$ are distinct manipulations of variables in $\underline{A}^{\dag}_{j>i}$. 
    For $A_i \in A^{\dag}$, this definition implies that $A_{i}$ does not depend on its own counterfactually set values $a_{i}$.  This follows since the above identity yields $A_{i}^{(a^{\dag})} = A_{i}^{(\overline{a}^{\dag}_{j<i})}$, and $\overline{A}^{\dag}_{j<i}$ does not include $A_i$. 
	}
 
	\item
	\label{item:rec-sub}
    {\bf Recursive substitution.} Given any subset $A \subset A^{\dag}$ and values $a$ of $A$, we define $V_i^{(a)}$ as $V_i^{\left(a, \{ A_j^{(a)} : A_j \in A^{\dag} \setminus A \}\right)}$. This means $V_i^{(a)}$ is the response of $V_i$ when $A$ is set to $a$ and $A_j$ in $A^{\dag} \setminus A$ is set to its value under this intervention. This recursive definition ensures each $V_i^{(a)}$ is definable in terms of existing counterfactuals, guaranteed by Property (i) and acyclicity of $\cal G$.

	\item
	\label{item:consistency}
    {\bf Consistency.} Every counterfactual variable $ V_i^{(a)} $ is linked to the factual variable $ V_i $ by the consistency property, which states that $ V_i = \sum_{{a}} \mathbb{I}({A} = {a}) \times V_{i}^{({a})} = \sum_{\overline{a}_{j}}\mathbb{I}(\overline{A}_{j}=\overline{a}_{j})\times V_{i}^{(\overline{a}_{j})}  $, or equivalently, $ V_i = V_i^{(a)} $ if $ {A} = {a} $. Thus, in any context where $ {A} = {a} $, we assume $ V_i = V_i^{(a)} $, allowing us to assert $ p(v_i^{(a)} \mid {a}) = p(v_i \mid {a}) $ as always true.

	\item
	\label{item:positivity}
	{
	{\bf Positivity.}
	For every $A^{\dag}_i \in A^{\dag}$, $p(a^{\dag}_i \mid \text{past}(A^{\dag}_i)) > 0$ for all values of variables in
	$\text{past}_{\prec}(A^{\dag}_i)$.
	}
	
	\item
	\label{item:seq-ignore}
	{
	{\bf Sequential ignorability.}
	For every $A_i \in A \subseteq A^{\dag}$,
	we assume $\underline{V}_{j>i}^{({a})} \perp \!\!\!\perp A_{i}\ |\ \text{past}_{\prec}(A_{i})$, for values of $\text{past}_{\prec}(A_{i})$ that are consistent with the treatment assignment $a$.
	}
	
	\item
	\label{item:markov}
	{
	{\bf Markov property.}
	The joint $p(v)$ is Markov relative to ${\cal G}$.
	}

\end{enumerate}
Given any $A \subseteq A^{\dag}$,
assumptions (\ref{item:existence}) 
and (\ref{item:rec-sub}) allows us to meaningfully discuss counterfactuals $L^{(a)}$, where $L = V \setminus A$.
Furthermore,
under the above causal model, the distribution $p(l^{(a)})$ -- the joint distribution over counterfactuals obtained by setting a subset $A$ of the treatment variables to a set of values $a$ -- can be expressed as a function of the  joint distribution $p(v)$ 
as follows.  
Assumptions (\ref{item:rec-sub}) and (\ref{item:consistency}), and our notational convention together imply $p({l}^{(a)})\times p( {a} \mid {l}^{(a)}) =p({l}^{({a})},{a})=p({l},{a})$, so

\vspace{-0.5cm}
{\small
\begin{equation}
	p(l^{(a)}) = 
	\frac{ p({l},{a})}{p({a}\mid {l}^{({a})})} =
	\frac{ p(v) }{p({a}\mid {l}^{({a})})}.   \hspace{0.7cm} \textit{(counterfactual g-formula)}
	\label{eqn:ggf}
\end{equation}
}
We refer to (\ref{eqn:ggf}) as the counterfactual g-formula, since counterfactual variables appear in the denominator. Note that this expression is exactly the same as the expression for target law identification from \eqref{eq:target_law} with $a=r$ and $r=1$.

It follows that $p({l}^{({a})})$ is identified from the distribution of the factuals $V$ if and only
if $p({a}\mid {l}^{({a})})$ is identified. The additional assumptions
(\ref{item:consistency}), (\ref{item:positivity}), (\ref{item:seq-ignore}), and (\ref{item:markov})
imply $p({a}\mid {l}^{({a})})$ is indeed identified as
follows: {\small 
\begin{align*}
p(a\mid l^{(a)})
&=\prod_{a_{k}\in a}p(a_{k}|\overline{a}_{k-1}, \{ l_i^{(a)} : l_i \prec a_{k} \}, \{ l_i^{(a)} : a_{k} \prec l_i \})\\
&=\prod_{a_{k}\in a}p(a_{k}|\text{past}_{\mathcal{G}}(a_{k}),\{ l_i^{(a)} : a_{k} \prec l_i \})
=\prod_{a_{k}\in a}p(a_{k}|\text{past}_{\mathcal{G}}(a_{k}))\\
&=\prod_{a_{k}\in a}p(a_{k}|\pa_{\mathcal{G}}(a_{k})). 
\notag
\end{align*}%
}The first equality follows from the chain rule of probability, where for each
conditional $p(a_{k}\mid .)$ we split $l^{{(a)}}$ into the sets
$\{ l_i^{(a)} : l_i \prec a_{k} \}$ and $\{ l_i^{(a)} : a_{k} \prec l_i \}$.
To see that the second equality follows, note that
no backwards causation (\ref{item:no-backwards}) implies
$\{ l_i^{(a)} : l_i \prec a_{k} \}=\{ l_i^{(\overline{a}_{k-1})} : l_i \prec a_{k} \}$.
Conditioning on $\overline{a}_{k-1}$ then implies that
$\{ l_i^{(\overline{a}_{k-1})} : l_i \prec a_{k} \} = \{ l_i : l_i \prec a_{k} \}$.
Finally, if we partition $V$ into $L$ and $A$,
$\past_{\mathcal{G}}(a_{k})=\left( \{ l_i : l_i \prec a_{k} \},\overline{a}_{k-1}\right)$.
The third equality follows from (\ref{item:seq-ignore}). The fourth from (\ref{item:markov}).
Finally, the positivity assumption (\ref{item:positivity}) implies the RHS of the
last equality is a unique function of the distribution of $V$. We conclude that, by term cancellation,
$p(l^{(a)})= {p(l,a)}/{p(a\mid l^{(a)})}$ is identified by the g-formula
\citep{robins86new} as follows: 
\begin{align}
\label{eqn:g-formula}
&
{
\frac{\prod\limits_{v_{k}\in l\cup a}p(v_{k}\mid \past_{%
\mathcal{G}}(v_{k}))}{\prod\limits_{a_{k}\in a}p\big(a_{k}|\past_{\mathcal{G}}(a_{k})%
\big)}=
}
\frac{\prod\limits_{v_{k}\in l\cup a}p(v_{k}\mid \pa_{%
\mathcal{G}}(v_{k}))}{\prod\limits _{a_{k}\in a}p\big(a_{k}|\pa_{\mathcal{G}}(a_{k})%
\big)}
=\prod\limits_{l_{k}\in {l}}\ p\left( l_{k}\mid \pa_{\mathcal{G}%
}(l_{k})\right) \Bigr|_{A=a},
\end{align}
a functional we will denote by $p_{\cal G}({l}\ \Vert \ {a})$, following the notation in \citep{Lauritzen96graphical}.
In (\ref{eqn:g-formula}), $p(.)|_{{A}={a}}$ is taken to mean that any free variables within a
probability expression $p(.)$ that intersect the set $A$ are to be evaluated
at corresponding values of ${a}$. 
Note that by our notational convention, it follows
that $p_{\cal G}(l^{\left( a\right) }\ \Vert \ a)=${\large \ }$p_{\cal G}(l\ \Vert \
a)$.

The g-formula $p_{\cal G}({l}\ \Vert \ {a})$ in (\ref{eqn:g-formula}) can be
viewed as a truncated factorization in the sense that the terms $p(a_{k}\mid \pa_{\mathcal{G}}(a_{k}))$
that occur in the Markov factorization of the
density $p(l,a)$ are no longer present in the g-formula factorization
representing the intervention distribution. As we will see in Section~\ref{sec:missing_data_dag_models},
the g-formula is also closely connected to identification in missing data problems.

The truncated factorization $p_{\cal G}({l}\ \Vert \ {a})$ is still a
distribution as it provides a mapping from values $a$ of $A$ to normalized
densities over variables in $Y$. We call objects of this type \emph{kernels}%
. A kernel acts in most respects like a conditional distribution. In
particular, given a kernel $p(v\ \Vert \ {w})$ and a subset ${Z}\subseteq {V}
$, conditioning and marginalization are defined in the usual way as
{\small 
\begin{equation}
p({z}\ \Vert \ {w})\coloneqq\sum_{{v}\setminus {z}}\ p(v\ \Vert \ w)\hspace{%
0.5cm}\text{and}\hspace{0.5cm}p({v}\setminus {z}\mid {z}\ \Vert \ {w})%
\coloneqq\frac{p({v}\ \Vert \ {w})}{p({z}\ \Vert \ {w})}.  \label{eqn:kernel}
\end{equation}%

Property (\ref{item:seq-ignore}) above is the critical identifying assumption for obtaining the g-formula: in words, for each
$A_i$ there exists a factual past: $\past_{\prec}(A_i) 
\subseteq V$ such that treatment
$A_{i}$ is as if randomly assigned conditional on this past -- and hence independent of future counterfactuals.
In the language of epidemiologists, 
conditional on 
$\past_{\prec}(A_i)$
the causal effect of $A_{i}$ on 
$\underline{V}_{j>i} = \{ V_{j} \in V : A_i \prec V_j \}$
is unconfounded.

In most observational studies, some of the variables that need to be
included in 
$\past_{\prec}(A_i)$
 (and thus in $V)$ to make $A_{i}$
unconfounded may be unknown to the investigators and/or known but not
measured for financial or logistical reasons. As a consequence $p(l^{({a})})$
will obviously not be identified from the the factual distribution $p\left( o,a\right)$ of the observed variables $\left( O,A\right),$
where $O\subset L$. Moreover, even the counterfactual distribution over observed outcomes,
\begin{eqnarray*}
p({o}^{({a})}) &=&\sum\limits_{l^{(a)}\backslash o^{(a)}}p({l}^{({a}%
)})=\sum\limits_{l\backslash o}p_{\cal G}({l}\ \Vert \ {a})\equiv p_{\cal G}({o}\
\Vert \ {a}) \\
&=&
{
\sum\limits_{l\backslash o}\frac{\prod_{v_{k}}p(v_{k}\mid \past_{\mathcal{G%
}}(v_{k}))}{\prod_{a_{k}\in a}p\big(a_{k}|\past_{\mathcal{G}}(a_{k})\big)}
=
}
\sum\limits_{l\backslash o}\frac{\prod_{v_{k}}p(v_{k}\mid \pa_{\mathcal{G%
}}(v_{k}))}{\prod_{a_{k}\in a}p\big(a_{k}|\pa_{\mathcal{G}}(a_{k})\big)}, 
\end{eqnarray*}
may not be identified from $p\left( o,a\right)$.  This is because, in
general, property (\ref{item:seq-ignore}) (sequential ignorability) does \emph{not} imply 
$\underline{{O}}_{j>i}^{({a})}\perp \!\!\!\perp A_{i}\ |\ \text{past-obv}_{%
\mathcal{G}}(A_{i})
\text{ for all }i$,
where $\underline{O}_{j>i} = \{ O_j \in O : A_i \prec O_j \}$, and $\text{past-obv}_{\mathcal{G}}(A_{k})$ 
are the elements in $\text{past}_{\cal G}(A_k)$ that are observed.

As an example illustrating Assumptions~\eqref{item:existence}-\eqref{item:markov}, consider the DAG in Fig.~\ref{fig:iterate}(a). The statistical model $\mathcal{M}^{\mathcal{G}}$ of this DAG is the set of
distributions $p(v)$ that factorize as: $p(u_1) p(u_2 | u_1) p(r_1 | u_2) p(r_2 | l_1, r_1) p(l_1 | r_1, u_1) p(l_2 | r_2, u_2).$ Suppose we choose the treatment set $A^{\dagger }=\{R_{1},R_{2}\}$ and the outcome set ${Y}\coloneqq\{U_{1},U_{2},L_{1},L_{2}\}$. 
Assumption~\eqref{item:existence} entails, for every pair of values $r_1,r_2$, the existence of counterfactuals:
$L_1^{(r_1,r_2)}$, $L_2^{(r_1,r_2)}$, $R_2^{(r_1, r_2)}$, $R_1^{(r_1,r_2)}$, $U_1^{(r_1,r_2)}$, $U_2^{(r_1,r_2)}$.
Assumption~\eqref{item:no-backwards} entails the following identifies:
$R_2^{(r_1,r_2)} = R_2^{(r_1)}$, $R_1^{(r_1,r_2)} = R_1$, $U_1^{(r_1,r_2)} = U_1$, $U_2^{(r_1,r_2)} = U_2$.  Note, in particular, that counterfactuals corresponding to $R_1$ and $R_2$ are not influenced by their own counterfactually set values. 
Assumption~\eqref{item:rec-sub} allows us to define all possible counterfactuals allowed by the model in terms of counterfactuals assumed to exist via (i).  For example, we may define $L_2^{(r_1)}$ as $L_2^{(R_2(r_1), r_1)}$, and $L_2^{(r_2)}$ as $L_2^{(R_1,r_2)}$.
Assumption~\eqref{item:consistency} implies the following identities:
$L_1 = \sum_{r_1} \mathbb{I}(R_1=r_1) L_1^{(r_1)}$, $L_2 = \sum_{r_1,r_2} \mathbb{I}(R_1=r_1,R_2=r_2) L_2^{(r_1,r_2)}$.  For binary $R_1,R_2$, these may be rewritten as $L_1 = L_1^{(1)} R_1 + L_1^{(0)} (1 - R_1)$, and $L_2 = L_2^{(1,1)} R_1 R_2 + L_2^{(1,0)} R_1 (1 - R_2) + L_2^{(0,1)} (1 - R_1) R_2 + L_2^{(0,0)} (1 - R_1) (1 - R_2)$.
Assumption~\eqref{item:positivity} implies that $p(r_1 | u_2)$ and $p(r_2 | l_1, r_1)$ are positive for all values of $u_2$ and $l_1,r_1$, respectively.
Assumption~\eqref{item:seq-ignore} implies the following independence assumptions:
$L_1^{(r_1)}, R_2^{(r_1)}, L_2^{(r_1)} \ci R_1 \mid U_1,U_2$ and
$L_2^{(r_1,r_2)} \ci R_2 \mid R_1=r_1, L_1,U_2,U_1$.
Assumption~\eqref{item:markov} implies the factorization that was described above.  Under the above described causal DAG, the
counterfactual distribution $p(u_{1}^{({r}_{1},{r}_{2})},u_{2}^{({r}_{1},{r}%
_{2})},l_{1}^{({r}_{1},{r}_{2})},l_{2}^{({r}_{1},{r}_{2})})$ is identified
via the g-formula as:
\begin{align}
p_{\cal G}(y\ \Vert \ {r}_{1},{r}_{2})={p(u_{1},u_{2},{r}%
_{1},l_{1},{r}_{2},l_{2})}/\{p({r}_{2}\mid {r}_{1},l_{1})\times p({r}%
_{1}\mid u_{2})\}.
\label{eqn:g-formula-2}
\end{align}
Similar to how the DAG factorization is the factorized representation of a factual density Markov relative to a given causal DAG, the g-formula 
$p_{\cal G}(y \ \Vert \ a)$, is the factorized representation of a counterfactual density Markov relative to a truncated version of the causal DAG, a.k.a. conditional causal DAG, in which the edges pointing into the treatment variables $A$ that are intervened upon are removed. 
The conditional causal DAG corresponding to (\ref{eqn:g-formula-2}) is shown in Fig.~\ref{fig:iterate}(b).


\begin{figure}[t] 
	\begin{center}
		\scalebox{0.85}{
			\begin{tikzpicture}[>=stealth, node distance=1.5cm]
				\tikzstyle{format} = [thick, circle, minimum size=1.0mm, inner sep=0pt]
				\tikzstyle{square} = [draw, thick, minimum size=4.5mm, inner sep=3pt]
				
				\begin{scope}[xshift=0cm]
					\path[->, thick]
					node[format] (x11) {${U_1}$}
					node[format, right of=x11, xshift=0.45cm] (x21) {${U_2}$}
					node[format, below of=x11] (r1) {$R_1$}
					node[format, below of=x21] (r2) {$R_2$}
					node[format, below of=r1] (x1) {$L_1$}
					node[format, below of=r2] (x2) {$L_2$}
					
					(x11) edge[blue] (x21) 
					(x21) edge[blue] (r1)
					(r1) edge[blue] (r2)
					(x1) edge[blue] (r2)
					
					(x11) edge[blue, bend right=25] (x1)
					(x21) edge[blue, bend left=25] (x2)
					(r1) edge[blue] (x1)
					(r2) edge[blue] (x2)
					
					node[format, below of=x1, xshift=0.95cm, yshift=0.7cm] (a) {(a)} ;
				\end{scope}

				\begin{scope}[xshift=5.5cm]
					\path[->, thick]
					node[format] (x11) {$U_1$}
					node[format, right of=x11, xshift=0.45cm] (x21) {$U_2$}
					node[square, below of=x11] (r1) {$r_1$}
					node[square, below of=x21] (r2) {$r_2$}
					node[format, below of=r1] (x1) {$L_1$}
					node[format, below of=r2] (x2) {$L_2$}
					
					(x11) edge[blue] (x21) 
					(x11) edge[blue, bend right=25] (x1)
					(x21) edge[blue, bend left=25] (x2)
					(r1) edge[blue] (x1)
					(r2) edge[blue] (x2)
					
					node[format, below of=x1, xshift=0.95cm, yshift=0.7cm] (c) {(b) } ;
				\end{scope}
				
			\end{tikzpicture}
		}
		\vspace{-0.75cm}
		\caption{(a) A DAG where $U_1$ and $U_2$ may be unmeasured;  (b) A conditional DAG illustrating interventions on $R_1$ and $R_2$. }
		\label{fig:iterate}
	\end{center}
\end{figure}
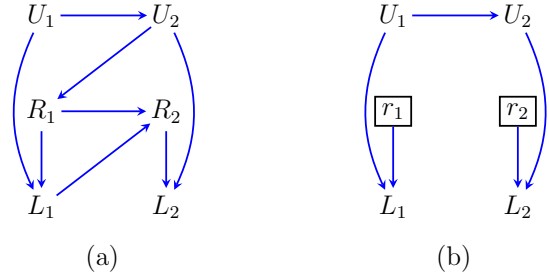

{
Suppose $U_{1},U_{2}$ are unobserved, $O=\left( L_{1},L_{2}\right)$.
The distribution of interest $$p\left( o^{({r}_{1},{r}_{2})}\right) \equiv
p(l_{1}^{({r}_{1},{r}_{2})},l_{2}^{({r}_{1},{r}_{2})})\coloneqq %
p_{\cal G}(l_{1},l_{2}\ \Vert \ r_{1},r_{2})=\sum_{u_{1},u_{2}}p_{\cal G}(l_{1},l_{2},u_{1},u_{2}
\Vert \ r_{1},r_{2})$$ is not a function of the observed data distribution $%
\sum_{u_{1},u_{2}}p(u_{1},u_{2},r_{1},r_{2},l_{1},l_{2})$ %
\citep{shpitser06id}. Intuitively, this is because the g-formula $p(l_{1},l_{2}\ \Vert \ r_{1},r_{2})$ (moving forward we drop the subscript ${\cal G}$, when it is clear from the context what graph we are referring to)
contains the term $p({r}_{1}\mid u_{2})$ in the denominator, {and there is no information on $U_2$
in the observed data.
}
}


We now describe how DAGs are used to encode independence restrictions in a missing data model.  

\section{Missing Data DAG Models}
\label{sec:missing_data_dag_models}


Unlike a causal DAG, a missing data DAG (or m-DAG for short) includes counterfactual variables directly on the graph. Specifically, an
m-DAG ${\cal G}_m(V)$ consists of a set of vertices $V$ associated with variables $L^{(1)}, R, L$. There may exist additional vertices corresponding to variables that are always observed, and variables that are never observed.  See Supplemental Section~S2
for examples.
In addition to acyclicity, an m-DAG restricts the presence of certain edges in the following ways:
\begin{enumerate}[(a)]
	
	\item $\pa_{\mathcal{G}_m}(L_{k})=\{L_{k}^{(1)},R_{k}\}$. That is, each proxy variable $L_k \in L$ has only two parents in ${\cal G}_m$ to represent the deterministic function that defines $L_k$ in terms of $L_k^{(1)}$ and $R_k$. These edges are drawn in gray in all  m-DAGs in this manuscript in order to distinguish deterministic relations from probabilistic ones. 
	
	\item $L^{(1)} \cap \left\{ \de_{\mathcal{G}_m}(R) \cup \de_{\mathcal{G}_m}(L) \right\} = \emptyset$. 
	That is, variables in $R$ and $L$ cannot have  directed paths to variables in $L^{(1)}$.
	{In the presence of always observed 
 variables $W$ or always missing variables $U$, we also assume $(W \cup U) \cap \left\{ \de_{\mathcal{G}_m}(R) \cup \de_{\mathcal{G}_m}(L) \right\} = \emptyset$.}
	
\end{enumerate}
{
A special case of missing data DAGs where $\ch_{{\cal G}_m}(L_i) = \emptyset$ for every $L_i$ was considered by \citep{mohan13missing}.
}

Restriction (a) is imposed by definition.  Since every $L_k$ is a deterministic function of $L_k^{(1)}$ and $R_k$, only those two variables can serve as causes of $L_k$, and thus as parents of $L_k$ in the graph.
Restriction (b) is imposed to ensure missingness indicators $R$, which are the missing data analogues of  treatment variables in causal inference, cannot influence counterfactual variables. This restriction is a consequence of the missing data version of consistency, which implies observed variables are caused by their corresponding counterfactuals, and not vice versa. For a detailed discussion on the implication of relaxing this assumption see
\citep{srinivasan2023graphical}.

The above restrictions imply that $\text{ch}_{{\cal G}_m}(L_{k}) \subseteq \{ R_{i} \in R \mid i \neq k\}$ and $\text{ch}_{{\cal G}_m}( R_{k}) \subseteq \{L_k\} \cup \{ R_{i} \in R \mid i \neq k\}$. That is, the values of the observed $R_k,L_k$ can only influence the decisions $R_i$ as to which of the other variables $L_i^{(1)}$ will be observed.

The statistical model of an m-DAG ${\cal G}_{m}$, denoted by ${\cal M}^{{\cal G}_{m}}$,
 consists of a set of joint distributions that 
 factorize with respect to ${\cal G}_m$.
 In the simple case where the variables in $U,W$ are not present, the joint distributions
 $p({l},{r},{l}^{({1})})$ in ${\cal M}^{{\cal G}_{m}}$
 factorize as follows
{\small
\begin{equation}
	\prod_{v_i \in {l} \cup {r}\cup {l}^{({1})}} 
	p(v_i \mid \pa_{\mathcal{G}_{m}}(v_i))
	= \prod_{l_{k}\in {l}}  p(l_{k}\mid
	l_{k}^{(1)},r_{k}) \times \prod_{v_i \in {r}\cup {l}^{({1})}} p(v_i \mid \pa_{\mathcal{G}_m}(v_i)).
	\label{eq:miss-dag-factorization}
\end{equation}%
}

The terms $p(l_{k}\mid l_{k}^{(1)},r_{k})$ are deterministically defined: $p(L_{k}=l_{k}\mid L_{k}^{(1)}=l_{k}^{(1)},R_{k}=1)=1$ and $p(L_{k} =\text{\textquotedblleft ?\textquotedblright } \mid L_{k}^{(1)}=l_{k}^{(1)}, R_{k}=0)= p(L_{k}=\text{\textquotedblleft ?\textquotedblright  }\mid R_{k}=0)=1 $ for any $L_{k} \in L$. The joint distribution  $p({l},{r},{l}^{({1})})$ further satisfies the positivity assumption stated as follows: 
$
p( R_k = r_k \mid \pa_{{\cal G}_m}({r}_k) ) > 0$ with probability $1$ for all $R_k \in R$.
This assumption excludes monotone missingness.

We can view an m-DAG as a special case of a causal DAG (with hidden variables) described in the previous section
where $Y$ is taken to be $L$, and $A^{\dag}$ is taken to be $R$, with a set of additional restrictions.
Specifically, every variable $L_k$ that is potentially missing corresponds to \emph{one} non-trivial counterfactual $L_k^{(1)}$, with the other counterfactual $L_k^{(0)}$ trivially defined as ``?''.  In addition, every treatment variable $R_k$ in $R$ can only affect exactly one outcome variable, namely $L_k$.

Given a missing data DAG model, our objective is to determine whether the target law $p({l}^{(1)})$, or a fixed function of the target law, can be identified as a function of the observed data law $p\left({r,l}\right)$, and if so, find the identifying functional. To aid subsequent developments we will reformulate the identification problem in missing data models using the language common in causal inference.  In this view, given an underlying variable $L_i^{(1)}$, the corresponding proxy variable $L_i$ is viewed as an observed version of ``the outcome,'' and the corresponding missing indicator $R_i$ as the observed version of ``the treatment.''

Since m-DAGs are a special case of causal DAGs, the following results immediately follows. 
\begin{proposition}
	\label{prop:miss-id}
	Under the missingness model 
	of an m-DAG $\mathcal{G}_m$
	{\small
	\begin{align}
		p({l}^{({1})}) &=   \prod_{v_k \in {l}^{({1})}} p(v_k \mid \pa_{\mathcal{G}_m}(v_k)) 
		= 
		\left. \frac{p({l},{r})}{ \prod_{r_k \in {r}} p(r_k \mid \pa_{\mathcal{G}_m}(r_k)) } \right|_{{r}={1}}. 
		\label{eq:mDAG-bayes}
	\end{align}
	}
\end{proposition}
The second equality in (\ref{eq:mDAG-bayes}) is the missing data DAG equivalent of the counterfactual g-formula in (\ref{eqn:ggf}) for causal inference problems. In missing data DAG models, any counterfactual variable $L_{k}^{\left( 1\right) }$ is allowed to have elements of $R$ as children. This means that the g-formula in (\ref{eq:mDAG-bayes})  does not necessarily lead to identification of $p({l}^{({1})})$ in terms of the observed data distribution $p({l},{r})$. This is because $\pa_{\mathcal{G}_{{m}}}(r_k)$ in $p(r_k \mid \pa_{\mathcal{G}_{{m}}}(r_k))$ may involve values in ${l}^{({1})}$ which are not always observed and cannot be immediately dropped via independence assumptions. {This is analogous to why unmeasured variables lead to non-identification in causal inference problems, as discussed earlier.} In the next section, we illustrate via a number of examples how identification may nevertheless be accomplished in some missing data DAG models representing MNAR mechanisms. 

\subsection{Hierarchy of missing data DAG models}

Similar to Rubin's hierarchy of missingness mechanisms, it is possible to set up a hierarchy for missing data DAG models that define the complexity of identification techniques required. The missing data DAG model for a graph ${\cal G}_m$ with vertices $\{L^{(1)}, R, L\}$ is said to be
\begin{itemize}
	\setlength{\itemsep}{0.cm} 
	
	\item Missing Completely At Random (MCAR) if $\prod_{r_k \in r} p(r_k \mid \pa_{{\cal G}_m}(r_k))$ is not a function of variables in $L^{(1)}$ and $L$. Graphically speaking there are no edges that point to variables in $R$.
	\item Missing At Random (MAR) if $\prod_{r_k \in r} p(r_k \mid \pa_{{\cal G}_m}(r_k))$ is not a function of variables in $L^{(1)}$. Graphically speaking there are no edges from variables in $L^{(1)}$ to variables in $R$. {Note that this model is a submodel of the MAR model, described below, with the additional restriction that the full law factorizes with respect to an m-DAG.}
	\item Missing Not At Random (MNAR) otherwise.
\end{itemize}
A MAR missingness mechanism, according to Rubin's definition, requires that for any given missingness pattern, the missingness is independent of the missing values given the observed values. These types of restrictions are with respect to missingness patterns $R=r$ (a total of $2^K$ distinct patterns for models with $K$ missing variables). This 
missingness mechanism cannot be represented via graphs, and several authors have noted the difficulty in interpretation 
of MAR models in practice \citep{gill97coarsening, robins1997non, schafer2002missing, mcknight2007missing, graham2012missing, tian2015missing}. The above graphical hierarchy  provides a more intuitive description of the above types of missingness mechanisms, in the sense that any missing data model associated with an m-DAG comes equipped with a ready description of a data generating process, where variables are generated sequentially according to a total order consistent with the m-DAG.

Further, while there exist MNAR models whose restrictions cannot be represented graphically, for instance the complete-case missing value model \citep{little1993pattern}, {or the discrete choice model} \citep{tchetgen16discrete}, the restrictions posed in several popular MNAR models, such as the permutation model \citep{robins97non-a}, the block-sequential MAR model \citep{zhou10block},
and those in \cite{thoemmes2013selection, martel2013definition, mohan13missing, shpitser2016consistent,  saadati2019adjustment, bhattacharya19mid} and \cite{nabi20completeness}  correspond to DAGs.  {Models described in \citep{shpitser2016consistent, sadinle16itemwise,malinsky2021semiparametric} correspond to graphical models that generalize DAG models, and are instead associated with chain graphs \citep{Lauritzen96graphical}.}


\subsection{Examples of missing data DAG models}

We now present examples of missing data models from prior literature  that impose restrictions, which can be encoded via m-DAGs.

\textbf{Permutation model.}  
Given an ordering (permutation) indexed by $k\in \{1,\ldots ,K\}$ on variables in $L^{({1})}$, \cite{robins97non-a} defined a \emph{permutation missingness} model by the restrictions that $R_{k}$ is independent of the current and past counterfactuals given the observed past and future counterfactual variables. We denote this model by ${\cal M}^\text{per}.$ Formally, ${\cal M}^\text{per}$ is defined by the following conditional independence restrictions: 
$R_{k} \ \ci \
	\overline{L}_{k}^{(1)} \ \big| \ 
	\overline{R}_{k-1}, \overline{L}_{k-1}, \underline{L}_{k+1}^{(1)}$ for all  $k \in \{1, \ldots, K\}. $
The graphical representation of $\mathcal{M}^\text{per}$ for two time points defined by these set of independencies  is shown in Fig.~\ref{fig:m-DAG-examples}(a). The local Markov property for this m-DAG model yields the following set of independence restrictions: $R_{1} \ \ci \ L_{1}^{(1)} \mid L_{2}^{(1)}$ and $R_{2} \ \ci \ L_{2}^{(1)},L_{1}^{(1)} \mid L_{1},R_{1}.$


\begin{figure}[!t] 
	\begin{center}
		\scalebox{0.8}{
			\begin{tikzpicture}[>=stealth, node distance=1.5cm]
				\tikzstyle{format} = [thick, circle, minimum size=1.0mm, inner sep=0pt]
				\tikzstyle{square} = [draw, thick, minimum size=4.5mm, inner sep=3pt]
				
				\begin{scope}[xshift=0cm]
					\path[->, thick]
					node[format] (x11) {$L^{(1)}_1$}
					node[format, right of=x11, xshift=0.45cm] (x21) {$L^{(1)}_2$}
					node[format, below of=x11] (r1) {$R_1$}
					node[format, below of=x21] (r2) {$R_2$}
					node[format, below of=r1] (x1) {$L_1$}
					node[format, below of=r2] (x2) {$L_2$}
					
					(x11) edge[blue] (x21) 
					(x21) edge[blue] (r1)
					(r1) edge[blue] (r2)
					(x1) edge[blue] (r2)
					(x11) edge[gray, bend right=25] (x1)
					(x21) edge[gray, bend left=25] (x2)
					(r1) edge[gray] (x1)
					(r2) edge[gray] (x2)
					
					node[format, below of=x1, xshift=1cm, yshift=0.5cm] (a) {(a) {\small Permutation}} ;
				\end{scope}
				
				\begin{scope}[xshift=4.5cm]
					\path[->, thick]
					node[format] (x11) {$L^{(1)}_1$}
					node[format, right of=x11, xshift=0.45cm] (x21) {$L^{(1)}_2$}
					node[format, below of=x11] (r1) {$R_1$}
					node[format, below of=x21] (r2) {$R_2$}
					node[format, below of=r1] (x1) {$L_1$}
					node[format, below of=r2] (x2) {$L_2$}
					
					(x11) edge[blue] (x21) 
					(x11) edge[blue] (r2)
					(x21) edge[blue] (r1)
					(x11) edge[gray, bend right=25] (x1)
					(x21) edge[gray, bend left=25] (x2)
					(r1) edge[gray] (x1)
					(r2) edge[gray] (x2)
					
					node[format, below of=x1, xshift=1cm, yshift=0.5cm] (b) {(b) {\small Block-parallel}} ;
				\end{scope}
				
				\begin{scope}[xshift=9cm]
					\path[->, thick]
					node[format] (x11) {$L^{(1)}_1$}
					node[format, right of=x11, xshift=0.45cm] (x21) {$L^{(1)}_2$}
					node[format, below of=x11] (r1) {$R_1$}
					node[format, below of=x21] (r2) {$R_2$}
					node[format, below of=r1] (x1) {$L_1$}
					node[format, below of=r2] (x2) {$L_2$}
					
					(x11) edge[blue] (x21) 
					(x11) edge[blue] (r2)
					(r1) edge[blue] (r2)
					(x11) edge[gray, bend right=25] (x1)
					(x21) edge[gray, bend left=25] (x2)
					(r1) edge[gray] (x1)
					(r2) edge[gray] (x2)
					
					node[format, below of=x1, xshift=1cm, yshift=0.5cm] (c) {(c) {\small Block-sequential}} ;
				\end{scope}
				
				\begin{scope}[xshift=13.5cm]
					\path[->, thick]
					node[format] (x11) {$L^{(1)}_1$}
					node[format, right of=x11, xshift=0.45cm] (x21) {$L^{(1)}_2$}
					node[format, below of=x11] (r1) {$R_1$}
					node[format, below of=x21] (r2) {$R_2$}
					node[format, below of=r1] (x1) {$L_1$}
					node[format, below of=r2] (x2) {$L_2$}
					
					(x11) edge[blue] (x21) 
					(r1) edge[blue] (r2)
					(x1) edge[blue] (r2)
					(x11) edge[gray, bend right=25] (x1)
					(x21) edge[gray, bend left=25] (x2)
					(r1) edge[gray] (x1)
					(r2) edge[gray] (x2)
					
					node[format, below of=x1, xshift=1cm, yshift=0.5cm] (d) {(d) {\small MAR example}} ;
				\end{scope}
			\end{tikzpicture}
		}
		\vspace{-2.5cm}
		\caption{Examples of missing data DAG models. }
		\label{fig:m-DAG-examples}
	\end{center}
\end{figure}
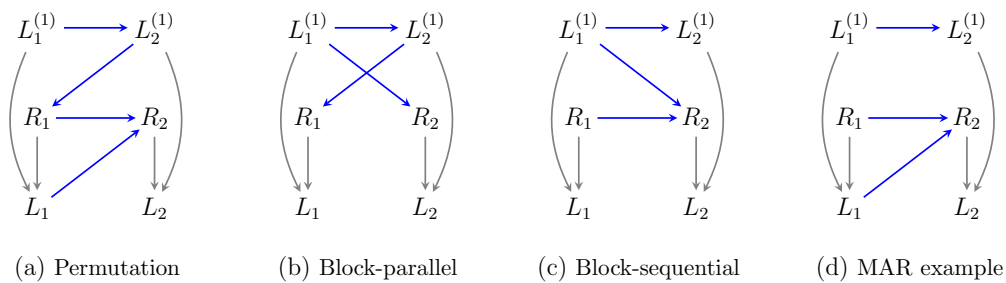

\textbf{Block-parallel MNAR model.} 
The block-parallel MNAR model, denoted $\mathcal{M}^\text{b-par}$, was introduced in \cite{mohan13missing}.
It is defined by the following conditional independence restrictions: $R_{k} \ \ci \ \{L^{(1)}_{k}, R_{-k}\}  \ \mid {L}^{(1)}_{-k},$ for all $k \in \{1, \ldots, K\}$, 
where $V_{-k} \coloneqq V \setminus V_k.$  
{The reason for our use of this name should become apparent in Section~\ref{subsec:parallel-fix} when we discuss the  identification of this missing data model.}
The graphical representation of ${\cal M}^\text{b-par}$ for two time points, defined by this set of independencies, 
is shown in Fig.~\ref{fig:m-DAG-examples}(b). The local Markov property for this m-DAG model implies: $R_{1} \ \ci \ L_{1}^{(1)}, R_2 \mid L_{2}^{(1)}$ and $R_{2} \ \ci \ L_{2}^{(1)}, R_1 \mid L_{1}^{(1)}.$

\textbf{Block-sequential MNAR model.}  
The block-sequential MNAR model, denoted by $\mathcal{M}^\text{b-seq}$, was introduced in \cite{zhou10block}{, under the name ``block-conditional MAR.'' We use our alternative name to make it clear that this model is not MAR but MNAR.}
It is defined by the following conditional independence restrictions: $R_{k+1} \ \ci \ \underline{{L}}^{(1)}_{k+1}  \ \mid \ \overline{R}_{k}, \overline{L}^{(1)}_{k},  \quad \forall k \in \{1, \ldots, K\}.$
The graphical representation of ${\cal M}^\text{b-seq}$ for two time points, defined by this set of independencies, 
is shown in Fig.~\ref{fig:m-DAG-examples}(c). The local Markov property for this m-DAG model implies: $R_{1} \ \ci \ L_{1}^{(1)}, L_{2}^{(1)}$ and $R_{2} \ \ci \ L_{2}^{(1)} \mid R_1, L_{1}^{(1)}. $

\textbf{An example of a MAR model.} 
The model introduced in \cite{rubin76inference} is defined via the following conditional independence restrictions: $\mathbb{I}({R}={r}) \ \ci  \ \left\{ L_{i}^{(1)}; \ r_{i}=0\right\} 
\ \Big| \ \left\{ L_{j}^{(1)}; \ r_{j}=1\right\} \ \text{ for all } r \in \{0, 1\}^K$, 
where $\mathbb{I}(.)$ is the indicator function. 
This MAR model cannot be represented by a DAG $\mathcal{G}_m$ with vertices $V=\{ {L}^{({1})},R, L\}$. This follows because the $2^{K}$ variables $\mathbb{I}({R}={r})$ are not vertices on the graph. However, we can have more intuitive {submodels of the} MAR model that can be represented graphically. The missing data DAG for two time points in Fig.~\ref{fig:m-DAG-examples}(d) is one example. The local Markov property for this m-DAG model implies: $R_{1} \ \ci \ L_{1}^{(1)}, L_{2}^{(1)}$ and $R_{2} \ \ci \ L_{2}^{(1)} \mid R_1, L_{1}. $

An interesting observation is that under the monotonicity assumption, the MAR model in Fig.~\ref{fig:m-DAG-examples}(d) and the block-sequential MNAR model in Fig.~\ref{fig:m-DAG-examples}(c) are identical. The monotonicity assumption in the block-sequential MNAR model imposes restrictions on the univariate conditionals of each $R_k$ given their parents on the graph: evaluating any of the parental missingness indicators at zero deterministically defines the conditional density of $p(r_k \mid \pa_{{\cal G}_m}(r_k))$. This is because if $i \prec_{{\cal G}_m} k$, and $R_i = 0$ then it must be the case that $R_k = 0$, otherwise the monotonicity assumption is violated. The only non-deterministic evaluation of the univariate conditionals occurs when $R_i = 1, \forall R_i\in \pa_{{\cal G}_m}(R_k)$. Thus, by consistency, we can replace the pair $(R_i = 1, L^{(1)}_i) \in \pa_{{\cal G}_m}(R_k)$ with  $(R_i = 1, L_i)$. The  equivalent graphical operation is replacing the set of edges $\{L^{(1)}_i \ \rightarrow \ R_j \ \leftarrow \ R_i \}$ with the set $\{L_i \ \rightarrow \ R_j \ \leftarrow  \ R_i \}$. This renders the DAG in Fig.~\ref{fig:m-DAG-examples}(c) and the one in Fig.~\ref{fig:m-DAG-examples}(d) equivalent.  The monotonicity assumption is a restriction on patterns, and cannot be directly represented in an m-DAG{, unless it is augmented with additional edge markings denoting deterministic relationships among $R$ variables that define monotonicity.}


We now compare the above missingness models on the basis of how they differ in telling a story about the underlying missingness mechanisms.

Fig.~\ref{fig:m-DAG-examples}(a) is a two-variable version of the permutation model described in \citep{robins97non-a}. 
Suppose a reform school offered HIV testing, with 35\% of those tested being HIV positive among the 30\% who accepted the offer, i.e., $p(L_1=1 \mid R=1) = 0.35$. Due to concerns about non-random non-response, data on HIV risks and fears were abstracted as variable $L_2$ from a 
stratified sample of counseling files, with sampling fractions of $20\%$, in the stratum  $L_1= ``?$'', $50\%$ in the stratum $L_1=1$, and $60\%$ in the stratum $L_1=0$.
By design, $p(R_2=1 \mid L^{(1)}_1, L^{(1)}_2, R_1)$ is only a function of $L_1$ (and thus also of $R_1$). Under the additional assumption that being tested ($R_1$) is independent of HIV status ($L^{(1)}_1$) given $L^{(1)}_2$, the data follow a non-monotone non-ignorable permutation model process where $(L^{(1)}_1, L^{(1)}_2)$ 
are true HIV status, and abstract risks and fears of HIV infection, respectively.

Fig.~\ref{fig:m-DAG-examples}(b) is a two variable version of the model first described in \cite{mohan14missing}.  In this model, the censoring status of each of the two variables is not determined by the underlying value of the variable itself, but by the underlying value of the other variable.  An example of such censoring processes occur in prisoner's dilemma situations, where $L_1$ represent recorded parts of a criminal case against defendant $1$ obtained from testimony, while $L_1^{(1)}$ represent the true facts about defendant $1$.  Many jurisdictions (such as the United States) forbid prosecutors from coercing self-incrimination, however co-conspirators are often induced to incriminate each other under plea deals.  In such cases, observability of $L_1$, represented by $R_1$ may be causally influenced by the severity of the crime $L_2^{(1)}$ of a co-conspirator $2$ of defendant $1$, and similarly the observability status $R_2$ of $L_2$ would be influenced by $L_1^{(1)}$.  Laws against self-incrimination prevent $L_1^{(1)}$ from influencing $R_1$, and $L_2^{(1)}$ from influencing $R_2$.

Fig.~\ref{fig:m-DAG-examples}(d) is the classic monotone MAR model, which has been used to represent dropout in longitudinal observational studies. For instance, consider a study of the effect of highly active antiretroviral therapy (HAART) on the CD4 outcome in HIV patients \citep{cain2016using}.  In such a study, a set of covariates $L_i^{(1)}$ are measured at every followup period $i$ for every patient that remains in the study.  However, once a patient withdraws from the study starting at a particular time period $j$, for instance due to withholding informed consent, no measurements of $L_k^{(1)}$ for $k \geq j$ take place.  Whether a patient withdraws from the study depends on covariates for the patient which have been measured so far.
Fig.~\ref{fig:m-DAG-examples}(d) represents a very simple such scenario for two time points, where the dropout indicator $R_2$ depends on observed covariates $L_1$ measured at the previous time point. 
In addition, $R_2$ depends on the dropout indicator $R_1$ to enforce the monotonicity restriction, which states that $p(R_2 = 0 \mid R_1 = 0, L_1 = ``?") = 1$.

Fig.~\ref{fig:m-DAG-examples}(c) represents a similar type of longitudinal study, but where missingness is instead \emph{intermittent}.  In such a study, a set of covariates $L_i^{(1)}$, including determinants of health and socioeconomic factors, are surveyed at every followup period $i$ for every study participant, provided that they show up for their appointment.  However, not every participant shows up to every appointment, often for reasons caused by the covariates that were not measured, such as economic precarity.

In the simple two timepoint version of such a study shown in Fig.~\ref{fig:m-DAG-examples}(c), the censoring indicator $R_2$ depends on the indicator $R_1$ at the previous time point, as well as covariates $L_1^{(1)}$ at that time point, which may be observed for some participants, and not observed for others.



Note that the permutation model in Fig.~\ref{fig:m-DAG-examples}(a) is untestable (i.e., though it implies restrictions on the full data law $p(l^{(1)}, r, l)$, it does not impose any restrictions on the observed data law $p(r, l)$) \citep{robins97non-a}; however, the MAR model in Fig.~\ref{fig:m-DAG-examples}(d) has a testable implication on the observed data distribution which can be used for falsification; see \cite{nabi2023testability}. 
%
%
The block-parallel model in Fig.~\ref{fig:m-DAG-examples}(b) also has a testable implication on the observed data distribution and hence can be falsified; see  \cite{malinsky2021semiparametric} and \cite{nabi2023testability}.

In the next section, we discuss how identification strategies developed in causal inference may be applied to  missing data problems, and how additional structures, unique to missing data, yields new identification strategies which were never encountered  in classical causal inference problems.

\section{Identification in Missing Data DAG Models} 
\label{sec:miss-ID}

In the view of missing data models associated with m-DAGs presented so far, the distribution $p({l}^{(1)})$ may be viewed instead as $p({l} \ \Vert \ r=1)$ -- the distribution where all missingness indicators in $R$ are intervened on and set to $1$.
Note that we have the following identities: $p(l^{(1)}) = p({l} \ \Vert \ r = 1) = p({l}^{(1)} \ \Vert \ r = 1)$. 
The first follows since $p(l^{(1)})$ is identified by the g-formula, and the second by consistency.

Recall from Proposition~(\ref{prop:miss-id}) that identification of the target law $p({l}^{({1})}) \coloneqq p({l} \ \Vert \ r = 1)$ is equivalent to identification of the missingness mechanism evaluated at $R=1$, i.e., the probability of observing the complete cases missing data pattern:
{$\prod_{R_i \in R} p(R_i = 1 \mid \pa_{{\cal G}_m}(r_i)) \vert_{R=1}$.}
If each conditional factor $p({r}_k \mid \pa_{{\cal G}_{{m}}}({r}_k))$ (evaluated at $R=1$) is identified in this product, then the complete cases missing data pattern, and consequently the target law would be identified. We refer to the conditional factor $p(R_k  =1 \mid \pa_{{\cal G}_{{m}}}({r}_k))$ as the \emph{propensity score} for $R_k$. Via a series of examples, we explore different identification strategies for identifying the distribution of the missingness mechanism  $p({r} \mid \pa_{{\cal G}_{{m}}}({r}))$ evaluated at the complete cases pattern values: $R=1$.

\noindent {\bf A Single Variable Interventional Reformulation of the Counterfactual G-formula.} 
Given a joint distribution $p(l^{(1)}, r, l)$ that factorizes according to a DAG ${\cal G}_m$, the joint distribution after an intervention on $R_k \in R$ is 
equal to the truncated factorization where the joint distribution is divided by the identified propensity score of $R_k$. 
Letting $r_k$ be $a$  in (\ref{eqn:g-formula}), and $l^{(1)}_{-k} \cup r_{-k} \cup l$ be $l$ in (\ref{eqn:g-formula}), we can write this distribution as
{\small
\begin{align}
	p({l}^{(1)}_{-k}, {r}_{-k}, {l} \ \Vert \ r_k = 1) \left.
	= \frac{p( l^{(1)}_{-k}, r, l)}{p(r_k \mid \pa_{{\cal G}_{{m}}}({r}_k))} \right|_{{R_k}={1}}.
	\label{eq:intervention}
\end{align}
}
When we intervene on $R_k$ and set it to $1$, it becomes redundant to include both $L^{(1)}_k$ and $L_k$ in the joint distribution as they represent the same random variable when $R_k=1$; hence we drop $L^{(1)}_k$ from both sides of the equation.

The propensity score of $R_k$ is identified if we can replace each $L^{(1)}_j \in \pa_{{\cal G}_{{m}}}(R_k)$ with $\{L_j, R_j=1\}$. Such replacements are sometimes justified due to conditional independence restrictions in the full data distribution. For instance in Fig.~\ref{fig:m-DAG-examples}(b), the propensity of $R_1$ is identified because we have $p({r}_1 \mid {l}^{(1)}_2) = p({r}_1 \mid {l}^{(1)}_2, R_2 = 1) = p({r}_1 \mid {l}_2, R_2=1)$, since $R_1 \ci R_2 \mid L^{(1)}_2$.  When missingness indicators are connected, we may lose some of these convenient independence constraints. For instance in Fig.~\ref{fig:m-DAG-examples}(a), $R_1$ and $R_2$ are no longer conditionally independent; thus identification of $p({r}_1 \mid {l}^{(1)}_2)$ is not as straightforward as it is in  Fig.~\ref{fig:m-DAG-examples}(b). However, due to an \emph{invariance} property of the propensity scores, we can sometimes succeed in identification by exploring interventional distributions where a subset of observed variables are intervened on and consequently certain edges are removed. We formalize this property in the following lemma, which is analogous to the invariance property in causal inference. 

\begin{lemma}[Invariance property]
	\label{lem:invariance}
	Given the propensity score for $R_k \in R$, the conditioning set $\pa_{{\cal G}_{{m}}}(R_k)$ captures the  direct causes of $R_k$ and hence remains invariant under any set of interventions that disrupts other parts of the joint factorization. Given $R^* \subseteq R \setminus R_k$, we have
	{\small
	\begin{align}
		p\left({r}_k \mid \pa_{{\cal G}_{{m}}}({r}_k)\right) = p\left({r}_k \mid \pa_{{\cal G}_{{m}}}({r}_k) \ \Vert \ r^* = 1\right). \hspace{1.25cm} 
		\label{eq:invariance}
	\end{align}%
	}
\end{lemma} %

Using this property, we now explore various strategies for target law identification in MNAR models. In Section~\ref{subsec:seq-fix}, we use the  permutation model as an example to illustrate how missingness mechanisms can sometimes be identified via a \emph{sequence} of interventions on missingness indicators. This is a generalization of causal inference techniques used in longitudinal studies to sequentially identify the effect of multiple treatments, where missingness indicator interventions are identified by treating counterfactuals as confounders. An intervention in missing data, unlike interventions in causal inference, can sometimes induce \emph{selection bias} due to conditioning on a subset of missingness indicators that have not yet been intervened on. Introduction of selection during intervention operations may make identification by means of sequential applications of the truncated factorization impossible.  In Section~\ref{subsec:parallel-fix}, we illustrate the selection issue of sequential interventions on the block-parallel model, and show how \emph{parallel} (simultaneous) interventions can obtain identification even if selection is present.  In some missing data models, such as the block-sequential model, we can identify the missingness probability of the complete cases pattern via either sequential or parallel application of interventions. 
In Sections~\ref{subsec:seq-par-fix} through \ref{subsec:outside-R-fix}, we go over examples where a combination of sequential and parallel interventions are needed to identify each propensity score.  In Section~\ref{sec:unified}, we unify ideas explored in this section to yield a general identification procedure for arbitrary missing data DAGs by exploring all possible \emph{partial orders} of interventions defined on the observed variables on the graph. We also discuss some graphical structures in missing data DAG models that impede nonparametric identification of the model. 


\subsection{Sequential Interventions}
\label{subsec:seq-fix}

Consider the permutation model $\mathcal{M}^\text{per}$ with two time points, redrawn in Fig.~\ref{fig:permutation}(a). Our objective is to identify $p(l_1^{(1)}, l_2^{(1)}) \coloneqq p(l_1, l_2 \ \Vert \ r_1 = 1, r_2=1)$ as a function of observed data. This can be done in two steps: we first intervene on $R_2$, then we intervene on $R_1$.   

\noindent {\bf Step 1.} Intervene on $R_2$ to get $p(. \ \Vert \ r_2=1)$.  The propensity score of $R_2$, $p(r_2 \mid r_1, l_1)$, is a function of observed data and so the corresponding post intervention distribution is immediately identified. Intervening on $R_2$ and setting it to $1$ yields the following kernel:
{\small
\begin{align}
	p(l_1^{(1)}, l_2, r_1, l_1 \ \Vert \ r_2=1) &= \frac{ p(l^{(1)}_1, l_1, l_2,
		r_1, r_2 = 1) }{ p(r_2 = 1 \mid r_1, l_1) }.
	\label{eq:perm-r2}
\end{align}%
}
By consistency in missing data, $p(l_2 \ \Vert \ r_2=1)$, $p(l_2^{(1)} \ \Vert \ r_2=1)$, and $p(l_2^{(1)}, l_2 \ \Vert \ r_2=1)$ all represent the same object. This kernel factorizes with respect to the (conditional) m-DAG shown in Fig.~\ref{fig:permutation}(b), obtained from Fig.~\ref{fig:permutation}(a) by removing edges $R_1 \to R_2$ and $L_1 \to R_2$, and denoting the vertex $R_2$ as a square showing $r_2=1$, indicating the intervention setting $R_2$ to $1$.

In general, the graphical analogue of the intervention on $R_k$ entails removing all edges with arrowheads into $R_k$ in the corresponding missing data graph. We denote vertices corresponding to variables that have been intervened on with rectangles.  Further, the pair $(L^{(1)}_k, L_k)$ is treated as a single variable on the graph after intervening and setting $R_k = 1$ due to the deterministic relation of $L_k$ with $L_k^{(1)}$ and $R_k$. We keep the proxy variable on the graph but in gray with dashed  edges.

\noindent {\bf Step 2.} Intervene on $R_1$ after intervening on $R_2$ to get $p(. \ \Vert \ r_1 = 1, r_2=1)$.  In the second step, we want to intervene on $R_1$ in the kernel $p(. \ \Vert \ r_2 = 1)$ which is Markov relative to the DAG in Fig.~\ref{fig:permutation}(b). In order to perform this intervention, we need to show that $p(r_1 \mid l^{(1)}_2 \ \Vert \ r_2 = 1)$ is a function of $p(r, l).$ Using consistency, we have $p(r_1 \mid l^{(1)}_2 \ \Vert \ r_2 = 1) = p(r_1 \mid l_2 \ \Vert \ r_2 = 1)$, and using kernel probability rules (provided in \ref{eqn:kernel}), we have $p(r_1 \mid l_2 \ \Vert \ r_2=1) = { p(l_2, r_1 \ \Vert \ r_2=1) }/{	\sum_{r_1}  p(l_2, r_1 \ \Vert \ r_2=1) }$. The numerator here is simply a marginal of the kernel $p(. \ \Vert \ r_2=1)$ in (\ref{eq:perm-r2}) and is identified as follows:
{\small
\begin{align*}
	p(l_2, r_1 \ \Vert \ r_2=1) 
	&= \!  \sum_{l_1, l^{(1)}_1} p(l^{(1)}_1, l_2, r_1, l_1 \ \Vert \ r_2=1) 
	\! = \!  \sum_{l_1, l^{(1)}_1} \frac{ p(l^{(1)}_1, l_1, l_2, r_1, r_2 = 1) }{ p(r_2 = 1 \mid r_1, l_1) }  \\
	&\hspace{-0.5cm}= \sum_{l_1} \ \frac{ p(l_1, l_2, r_1, r_2 = 1) }{ p(r_2 = 1 \mid r_1, l_1) } 
	= \sum_{l_1} p(r_1, l_1) \times p(l_2 \mid r_1, l_1, r_2=1). 
\end{align*}
}
Consequently, we are able to identify the propensity score of $R_1$ in the second step and proceed with the intervention on $R_1$, using the kernel in (\ref{eq:perm-r2}). This yields a new kernel that factorizes with respect to the DAG in Fig.~\ref{fig:permutation}(c) and corresponds to the target law $p(l_1^{(1)}, l_2^{(1)})$. Putting all the pieces together, the target law is identified as:
{\small
\begin{align}
	&p(l_1, l_2 \ \Vert \ r=1) = 
	\frac{ p( l^{(1)}_1, l_2, r_1 = 1, l_1 \ \Vert \ r_2=1) }{ p(r_1=1 \mid l_2 \ \Vert \ r_2=1) } 	\label{eq:ex-sequential-per} \\
	&\hspace{0.5cm}= \frac{p(l_1,l_2,r_1=1,r_2=1)}{ p(r_2=1 \mid r_1 = 1,l_1) \times p(r_1=1 \mid l_2 \ \Vert \
		r_2=1)}  \nonumber \\
	&\hspace{0.5cm}= \frac{p(l_1,l_2,r_1=1,r_2=1)}{ p(r_2=1 \mid r_1 = 1,l_1) \times \displaystyle \frac{ \sum_{l_1}
			p(l_2 \mid r_2=1,r_1=1,l_1) \times p(r_1=1,l_1) }{ \sum_{l_1,r_1} p(l_2 \mid
			r_2=1,r_1,l_1) \times p(r_1,l_1) }}. \nonumber 
\end{align}
}
The identified forms of the two propensity scores in this example are quite different. The propensity score of $R_2$, $p(r_2 = 1 \mid r_1,l_1)$, corresponds to the conditional factor in the full factorization of the joint. The propensity score of $R_1$, $p(r_1=1 \mid l_2 \ \Vert \ r_2=1)$, is a complex function of $p(r, l)$ and  corresponds to a conditional distribution in a hypothetical world where $L_2^{(1)}$ is rendered observed and equal to $L_2$ via an intervention that sets $R_2$ to $1.$  

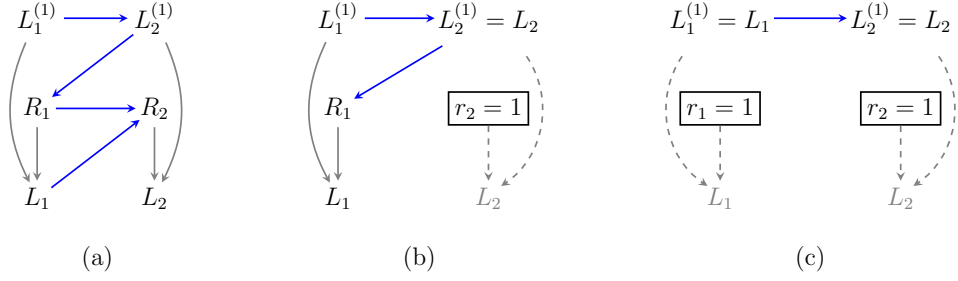
\begin{figure}[!t] 
	\begin{center}
		\scalebox{0.8}{
			\begin{tikzpicture}[>=stealth, node distance=1.5cm]
				\tikzstyle{format} = [thick, circle, minimum size=1.0mm, inner sep=0pt]
				\tikzstyle{square} = [draw, thick, minimum size=4.5mm, inner sep=3pt]
				
				\begin{scope}[xshift=0cm]
					\path[->, thick]
					node[format] (x11) {$L^{(1)}_1$}
					node[format, right of=x11, xshift=0.45cm] (x21) {$L^{(1)}_2$}
					node[format, below of=x11] (r1) {$R_1$}
					node[format, below of=x21] (r2) {$R_2$}
					node[format, below of=r1] (x1) {$L_1$}
					node[format, below of=r2] (x2) {$L_2$}
					
					(x11) edge[blue] (x21) 
					(x21) edge[blue] (r1)
					(r1) edge[blue] (r2)
					(x1) edge[blue] (r2)
					(x11) edge[gray, bend right=25] (x1)
					(x21) edge[gray, bend left=25] (x2)
					(r1) edge[gray] (x1)
					(r2) edge[gray] (x2)
					node[format, below of=x1, xshift=1cm, yshift=0.5cm] (a) {(a)} ;
				\end{scope}
				
				\begin{scope}[xshift=5.0cm]
					\path[->, thick]
					node[format] (x11) {$L^{(1)}_1$}
					node[format, right of=x11, xshift=1cm] (x21) {$L^{(1)}_2 = L_2$}
					node[format, below of=x11] (r1) {$R_1$}
					node[square, below of=x21] (r2) {$r_2 = 1$}
					node[format, below of=r1] (x1) {$L_1$}
					node[format, below of=r2] (x2) {\gray{$L_2$}}
					
					(x11) edge[blue] (x21) 
					(x21) edge[blue] (r1)
					(x11) edge[gray, bend right=27] (x1)
					(x21) edge[gray, dashed, bend left=45] (x2)
					(r1) edge[gray] (x1)
					(r2) edge[gray, dashed, ] (x2)
					node[format, below of=x1, xshift=1.35cm, yshift=0.5cm] (b) {(b)} ;
				\end{scope}
				
				\begin{scope}[xshift=11.35cm]
					\path[->, thick]
					node[format] (x11) {$L^{(1)}_1 = L_1$}
					node[format, right of=x11, xshift=1.5cm] (x21) {$L^{(1)}_2 = L_2$}
					node[square, below of=x11] (r1) {$r_1 = 1$}
					node[square, below of=x21] (r2) {$r_2 = 1$}
					node[format, below of=r1] (x1) {\gray{$L_1$}}
					node[format, below of=r2] (x2) {\gray{$L_2$}}
					
					(x11) edge[blue] (x21) 
					(x11) edge[gray, dashed, bend right=45] (x1)
					(x21) edge[gray, dashed, bend left=45] (x2)
					(r1) edge[gray, dashed] (x1)
					(r2) edge[gray, dashed] (x2)
					node[format, below of=x1, xshift=1.5cm, yshift=0.5cm] (c) {(c)};
				\end{scope}
				
			\end{tikzpicture}
		}
		\caption{An illustration of the operation of a \textbf{sequential} identification algorithm. (a) Permutation model; (b)  Intervention on $R_2$; (c) Intervention on $R_1$ after $R_2$. } 
		\label{fig:permutation}
	\end{center}
\end{figure}

The fact that $R_1$ had a counterfactual parent $L_2^{(1)}$ in Fig.~\ref{fig:permutation}(a), but $R_1 \not\ci R_2 \mid \pa_{{\cal G}_m}(R_1)$  ensured that the propensity score $p(r_1 \mid \pa_{{\cal G}_m}(r_1))$ required to intervene on $R_1$ is not immediately identifiable. This induced a strict \emph{total ordering} by which $R_1$ and $R_2$  had to be intervened on. A total order is a special case of a \emph{strict partial order} which is defined as an irreflexible, anti-symmetric, and transitive binary relation.
In our simple example above, to get identification for the target law we needed to follow the order $\{I_{r_2} \ < \ I_{r_1}\}$, where $I_{r_k}$ denotes an intervention on $R_k$. This sequential procedure generalizes to an arbitrary number of time points in the permutation model, where investigators take repeated measurements of some outcome over $K$ time points. The total order for identifying the target law $p(l_1^{(1)}, \dots, l_K^{(1)})$ is then given by a reverse topological ordering on the missingness indicators $\{I_{r_K} \ < \cdots \ < \ I_{r_1}\}$, i.e., we begin by intervening on the missingness indicator  $R_K$ corresponding to the final time point and proceed backwards \citep{robins97non-a}. 

{
Recall that in the causal model discussed earlier corresponding to Fig.~\ref{fig:iterate},
identification was not possible due to the presence of $U_1$ and $U_2$ instead of
the counterfactuals $L_1^{(1)}$ and $L_2^{(2)}$.
Clearly the key for obtaining identification was the
additional information provided by counterfactuals $L^{(1)}$, which are sometimes observed, rather than $U$ variables,
which are never observed.
}
{In analogy with missing data, one may consequently suggest to consider placing counterfactual versions of observed variables on the graph in causal inference (instead of the hidden variables), with the understanding that a pair of counterfactuals $L^{(1)},L^{(0)}$ corresponding to $L$ (for some treatment $A$) may potentially be easier to take into account for identification compared to hidden variables. This is because the counterfactual pair $L^{(1)},L^{(0)}$ is partiallly observed, due to the consistency property, unlike a purely hidden variable $U$.  However, this is also insufficient for obtaining identification. In Section~\ref{sec:diss}, we discuss why identification strategies in missing data does not readily translate to causal inference. }




\subsection{Parallel Interventions}
\label{subsec:parallel-fix}

The sequential identification strategy in $\mathcal{M}^\text{per}$ coupled with consistency (i.e., $L_i^{(1)} = L_i$ if $R_i=1$)  resembles identification strategies employed in causal inference problems where the intervention distribution associated with a multivariate treatment $R=(R_1,R_2)^T$ is obtained sequentially by intervening on treatment variables one at a time. We now discuss an example that shows sequential strategies as in causal inference are insufficient and that there exists a class of missing data models where missingness indicators must be intervened on ``in parallel'' to identify the target law. 

The simplest example of a model in this class is the block-parallel MNAR model $\mathcal{M}^\text{b-par}$ with two time points, redrawn in Fig.~\ref{fig:parallel}(a). Identification of $p(l_1^{(1)}, l_2^{(1)}) \coloneqq p(l_1, l_2 \ \Vert \ r_1 = 1, r_2=1)$ can be obtained in a single step as follows:
{\small
\begin{align}
	\label{eqn:block-parallel-id}
 p(l_1^{(1)}, l_2^{(1)}) =
	\left. \frac{p(l_1, l_2, r_1, r_2)}{ p(r_1 \mid
		l_2^{(1)}) \times p(r_2 \mid l_1^{(1)})} \right|_{r=1} \!\!\!\! =  
  \left. \frac{p(l_1,l_2, r_1, r_2)}{ p(r_1 \! \mid \! l_2, r_2) \times p(r_2 \! \mid \! l_1,r_1)} \right|_{r=1}. 
\end{align}%
}
The first equality holds by (\ref{eq:mDAG-bayes}) and the second equality holds due to the independence restrictions implied by $\mathcal{M}^\text{b-par}$ which are: $R_1 \ci R_2 \mid L_2^{(1)}$ and $R_1 \ci R_2 \mid L_1^{(1)}$ and consistency. Given these independencies, both propensity scores in the denominator are easily identified as  functions of the observed data.  

\begin{figure}[!t] 
	\begin{center}
		\scalebox{0.8}{
			\begin{tikzpicture}[>=stealth, node distance=1.5cm]
				\tikzstyle{format} = [thick, circle, minimum size=1.0mm, inner sep=0pt]
				\tikzstyle{square} = [draw, thick, minimum size=4.5mm, inner sep=3pt]
				\tikzstyle{format2} = [draw, circle, fill=gray!10, minimum size=2.5mm, inner sep=1.5pt]
				
				\begin{scope}[xshift=0cm]
					\path[->, thick]
					node[format] (x11) {$L^{(1)}_1$}
					node[format, right of=x11, xshift=0.45cm] (x21) {$L^{(1)}_2$}
					node[format, below of=x11] (r1) {$R_1$}
					node[format, below of=x21] (r2) {$R_2$}
					node[format, below of=r1] (x1) {$L_1$}
					node[format, below of=r2] (x2) {$L_2$}
					
					(x11) edge[blue] (x21) 
					(x21) edge[blue] (r1)
					(x11) edge[blue] (r2)
					(x11) edge[gray, bend right=25] (x1)
					(x21) edge[gray, bend left=25] (x2)
					(r1) edge[gray] (x1)
					(r2) edge[gray] (x2)
					
					node[format, below of=x1, xshift=0.85cm, yshift=0.5cm] (a) {(a)} ;
				\end{scope}
				
				\begin{scope}[xshift=5.5cm]
					\path[->, thick]
					node[format] (x11) {$L^{(1)}_1$}
					node[format, right of=x11, xshift=1cm] (x21) {$L^{(1)}_2 = L_2$}
					node[format2, below of=x11] (r1) {$R_1$}
					node[square, below of=x21] (r2) {$r_2 = 1$}
					node[format, below of=r1] (x1) {$L_1$}
					node[format, below of=r2] (x2) {\gray{$L_2$}}
					
					(x11) edge[blue] (x21) 
					(x21) edge[blue] (r1)
					(x11) edge[gray, bend right=30] (x1)
					(x21) edge[gray, dashed, bend left=45] (x2)
					(r1) edge[gray] (x1)
					(r2) edge[gray, dashed, ] (x2)
					
					node[format, below of=x1, xshift=1.35cm, yshift=0.5cm] (b) {(b)} ;
				\end{scope}
				
				\begin{scope}[xshift=12cm]
					\path[->, thick]
					node[format] (x11) {$L^{(1)}_1 = L_1$}
					node[format, right of=x11, xshift=1cm] (x21) {$L^{(1)}_2$}
					node[square, below of=x11] (r1) {$r_1 = 1$}
					node[format2, below of=x21] (r2) {$R_2$}
					node[format, below of=r1] (x1) {\gray{$L_1$}}
					node[format, below of=r2] (x2) {$L_2$}
					
					(x11) edge[blue] (x21) 
					(x11) edge[blue] (r2) 
					(x11) edge[gray, bend right=45, dashed] (x1)
					(x21) edge[gray, bend left=30] (x2)
					(r1) edge[gray, dashed] (x1)
					(r2) edge[gray] (x2)
					
					node[format, below of=x1, xshift=1.25cm, yshift=0.5cm] (c) {(c)} ;
				\end{scope}
				
			\end{tikzpicture}
		}
		\vspace{-0.75cm}
		\caption{An illustration of the operation of a \textbf{parallel} identification algorithm. (a) Block-parallel; (b) Intervention on $R_2$ and selection on $R_1$; (c) Intervention on $R_1$ and selection on $R_2$. } 
		\label{fig:parallel}
	\end{center}
\end{figure}
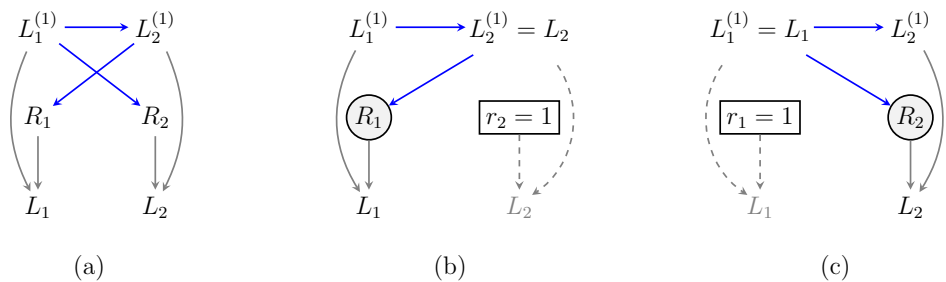

We now illustrate why the sequential approach to identification fails in this example and that the simultaneous evaluation of the two propensity scores is necessary to yield target law identification. Assume we proceed with our two-step sequential procedure by first intervening on $R_2 =1$. This results in the following kernel Markov relative to  the DAG in Fig.~\ref{fig:parallel}(b)
{\small
\begin{align}
	p(l_1^{(1)}, l_2, r_1, l_1 \! \ \Vert \ \! r_2 = 1) \!=  \! \left. \frac{p(l^{(1)}_1, l_1, l_2,
		r_1, r_2)}{p(r_2 \mid l_1^{(1)})} \right|_{r_2=1} \!\!\!\!\! = \! \frac{p(l^{(1)}_1,l_1,  l_2,
		r_1, r_2 = 1)}{p(r_2 = 1 \mid l_1^{(1)}, r_1)}. 
	\label{eq:ex-bp}
\end{align}
}
The intervention on $R_1$ in the second step of the sequential procedure requires dividing the above kernel $p(. \ \Vert \ r_2 = 1)$ by the propensity score of $R_1$, $p(r_1 \mid l^{(1)}_2) = p(r_1 \mid l_2 \ \Vert \ r_2 = 1)$. Using Bayes rule, this propensity can be obtained from $p(r_1, l_2 \ \Vert \ r_2=1)$. However, we can only identify $p(r_1 = 1, l_2 \ \Vert \ r_2=1)$ from $p(. \ \Vert \ r_2 = 1)$. This is because $l^{(1)}_1$ appears in both the numerator and denominator, and when $R_1=0$, we cannot observe $l^{(1)}_1. $ Hence, the kernel in (\ref{eq:ex-bp}) is identified only when evaluated at $R_1=1$.  The fact that this kernel is not available at all levels of $R_1$ prevents us from sequentially obtaining $p({r}_1  \mid \pa_{{\cal G}_{{m}}}({r}_1)) = p(r_1 \mid l_2 \ \Vert\ r_2=1) = p(l_2, r_1 \ \Vert\  r_2=1)/p(l_2 \ \Vert\  r_2=1)$ (where the first equality follows from invariance and the second from rules of kernel probability) due to our inability to sum out $R_1$ from $p(l_2, r_1 \ \Vert \ r_2 = 1)$ to obtain $p(l_2 \  \Vert \ r_2 = 1)$. 

The above is an example of what we term  \emph{selection} bias. In this case, division by $p(r_2 \mid l_1^{(1)})$ in (\ref{eq:ex-bp}) induces \emph{selection} on $R_1$, a variable that is not yet intervened on, in the kernel $p(. \ \Vert \ r_2 = 1)$. We draw a gray circle around the vertices on the graph that are selected on (or conditioned on) to distinguish it from interventions. Attempting a sequential identification procedure starting with $R_1$ would similarly induce selection bias on $R_2$ in the kernel $p(. \ \Vert \ r_1 = 1)$, as shown in Fig.~\ref{fig:parallel}(c). 
Since it is not possible to obtain identification by performing the intervention operation in a sequence, no total  ordering on missingness indicators can be imposed here.  Instead, interventions on $R_1$ and $R_2$ are \emph{incomparable}, and thus form a partial rather than a total ordering,  which is simply denoted via $\{ I_{r_1}, I_{r_2} \}$.  {Specifically, this notation denotes a partial order relationship on a set of two elements corresponding to interventions on indicators $R_1$ and $R_2$, where these elements are incompatible according to the order.}

In some missingness models, we might be able to identify the target law in multiple ways.  For instance, in the block-sequential MNAR model, the target law can be identified via either a sequence of interventions on missingness indicators in $R$, or via a one-step parallel intervention on all variables in $R$. These two strategies result in two equivalent representations of the same identifying functional.  These representations may, however, suggest different estimation strategies.


\subsection{Sequential and Parallel Interventions}
\label{subsec:seq-par-fix}

The previous two examples considered how identification of the target law in missing data problems entails evaluating the g-formula either sequentially or in parallel. These examples are special cases of a general identification algorithm for graphical missing data models, introduced in \cite{shpitser15missing}.  However, this algorithm is not complete in that it fails to account for the following example, and examples like it, where sequential and parallel applications of the g-formula must be combined in order to identify all of the propensity scores, and hence the target law. This strategy was introduced in \cite{bhattacharya19mid}.

Consider the missing data model shown in Fig.~\ref{fig:sequential-parallel}(a) on three variables. The target law $p(l_1^{(1)}, l_2^{(1)}, l_3^{(1)})$ $\coloneqq p(l_1, l_2, l_3 \ \Vert \ r_1=1, r_2=1, r_3=1)$ is equivalent to the following:
{\small
\begin{align}
	\left. \frac{ p(l_1, l_2, l_3, r_1, r_2, r_3) }{ p(r_1 \mid l_2^{(1)}, l_3^{(1)}) \times  p(r_2 \mid l_3^{(1)}, r_1) \times p(r_3 \mid l_2^{(1)}, r_1)  } \right|_{{r} = {1}}. 
	\label{eq:ex-parallel-sec-gformula}
\end{align}%
}
Neither of the approaches discussed in the previous two subsections would yield an identified missingness mechanism for the example of Fig.~\ref{fig:sequential-parallel}(a). First, the sequential application discussed in (\ref{subsec:seq-fix}) fails since the selection induced on $R_2$ after intervening on $R_3$ impedes our next move in doing an intervention on $R_2$. This is because obtaining $p(r_2 \mid \pa_{{\cal G}_{{m}}}(r_2))$ from the kernel $p(. \ \Vert \ r_3 = 1)$ requires summing over $R_2$ (by Bayes rule), and that is not possible as $R_2$ is conditioned/forced to be one in the kernel. A similar problem persists if we intervene on  $R_2$ first ($R_3$ is conditioned to be one in the kernel $p(. \ \Vert \ r_2=1)$). A sequential application starting with $R_1$ is not possible since we cannot immediately obtain $p(r_1 \mid \pa_{{\cal G}_{{m}}}(r_1))$. This is because $R_1$ has counterfactual parents $L_2^{(1)}$ and $L_3^{(1)}$, but is not conditionally independent of the corresponding missingness indicators $R_2$ and $R_3$ given its parents. The parallel application discussed in (\ref{subsec:parallel-fix}) fails for the same reason, i.e., the propensity score for $R_1$ is not immediately identified.

We now show identification of this target distribution is still possible, but only by evaluating the g-formula in two sequential steps, with the first step consisting of a parallel evaluation of propensity scores for $R_2$ and $R_3$, and the second step consisting of the evaluation of propensity score for $R_1$ using the kernel distribution obtained from the first step. 

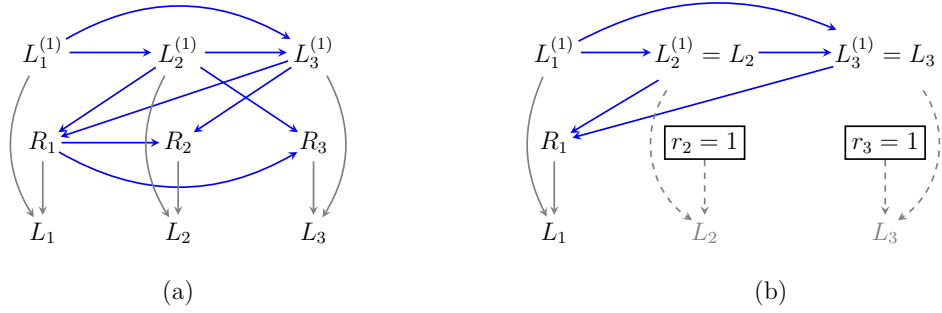
\begin{figure}[!t] 
	\begin{center}
		\scalebox{0.8}{
			\begin{tikzpicture}[>=stealth, node distance=1.5cm]
				\tikzstyle{format} = [thick, circle, minimum size=1.0mm, inner sep=0pt]
				\tikzstyle{square} = [draw, thick, minimum size=4.5mm, inner sep=3pt]
				
				\begin{scope}[xshift=0cm]
					\path[->, thick]
					node[format] (x11) {$L^{(1)}_1$}
					node[format, right of=x11, xshift=0.75cm] (x21) {$L^{(1)}_2$}
					node[format, right of=x21, xshift=0.75cm] (x31) {$L^{(1)}_3$}
					node[format, below of=x11] (r1) {$R_1$}
					node[format, below of=x21] (r2) {$R_2$}
					node[format, below of=x31] (r3) {$R_3$}
					node[format, below of=r1] (x1) {$L_1$}
					node[format, below of=r2] (x2) {$L_2$}
					node[format, below of=r3] (x3) {$L_3$}
					
					(x11) edge[blue] (x21) 
					(x21) edge[blue] (x31) 
					(x11) edge[blue, bend left] (x31) 
					
					(x21) edge[blue] (r1)
					(x21) edge[blue] (r3)
					(x31) edge[blue] (r1)
					(x31) edge[blue] (r2)
					
					(r1) edge[blue] (r2) 
					(r1) edge[blue, bend right] (r3)
					
					(x11) edge[gray, bend right=30] (x1)
					(x21) edge[gray, bend right=30] (x2)
					(x31) edge[gray, bend left=30] (x3)
					(r1) edge[gray] (x1)
					(r2) edge[gray] (x2)
					(r3) edge[gray] (x3)
					
					node[format, below of=x2, xshift=0.cm, yshift=0.5cm] (a) {(a)} ;
				\end{scope}
				\begin{scope}[xshift=8.5cm]
					\path[->, thick]
					node[format] (x11) {$L^{(1)}_1$}
					node[format, right of=x11, xshift=1.0cm] (x21) {$L^{(1)}_2 = L_2$}
					node[format, right of=x21, xshift=1.5cm] (x31) {$L^{(1)}_3 = L_3$}
					node[format, below of=x11] (r1) {$R_1$}
					node[square, below of=x21] (r2) {$r_2 = 1$}
					node[square, below of=x31] (r3) {$r_3 = 1$}
					node[format, below of=r1] (x1) {$L_1$}
					node[format, below of=r2] (x2) {\gray{$L_2$}}
					node[format, below of=r3] (x3) {\gray{$L_3$}}
					
					(x11) edge[blue] (x21) 
					(x21) edge[blue] (x31) 
					(x11) edge[blue, bend left=25] (x31) 
					(x21) edge[blue] (r1)
					(x31) edge[blue] (r1)
					
					(x11) edge[gray, bend right=25] (x1)
					(x21) edge[gray, dashed, bend right=45] (x2)
					(x31) edge[gray, dashed, bend left=45] (x3)
					(r1) edge[gray] (x1)
					(r2) edge[gray, dashed] (x2)
					(r3) edge[gray, dashed] (x3)
					
					node[format, below of=x2, xshift=1.1cm, yshift=0.5cm] (b) {(b)} ;
				\end{scope}
			\end{tikzpicture}
		}
		 \vspace{-0.75cm}
		\caption{(a) An example m-DAG corresponding to a model where interventions must be applied both \emph{sequentially and in parallel} to yield identification; (b) Graph derived from (a) representing the intermediate step  of the identification algorithm where $R_2$ and $R_3$ are simultaneously set to $1$.} 
		\label{fig:sequential-parallel}
	\end{center}
\end{figure}

{\bf Step 1.} \textit{Intervene on $R_2$ and $R_3$ in parallel to get $p(. \ \Vert \ r_2 = 1, r_3=1)$. }
The factorization of join distribution in Fig.~\ref{fig:sequential-parallel}(a) implies $R_2\ci R_3 \mid L_3^{(1)}, R_1$ and $R_3 \ci R_2 \mid L_2^{(1)}, R_1$. These independencies directly yield identified propensity scores for $R_2$ and $R_3$ as follows:
{\small
\begin{align}
	p(r_2 \mid \pa_{\mathcal{G}_{{m}}}(r_2))\vert_{r=1} &= p(r_2 \mid l_3^{(1)}, r_1)\vert_{r=1} = 
	p(r_2  = 1 \mid l_3, r_1 =  r_3 = 1), \label{ex:parallel-seq-r2} \\
	p(r_3 \mid \pa_{\mathcal{G}_{{m}}}(r_3))\vert_{r=1} &= p(r_3 \mid l_2^{(1)}, r_1)\vert_{r=1} = 
	p(r_3 = 1 \mid l_2, r_1 = r_2 = 1). \label{ex:parallel-seq-r3}
\end{align}
}
This immediately implies that $R_2$ and $R_3$ can be intervened on in parallel, similar to the block-parallel example in the previous subsection. This results in the following kernel that factorizes with respect to the DAG in Fig.~\ref{fig:sequential-parallel}(b), 
{\small
\begin{align*}
	p(l_1^{(1)}, l_2, l_3, r_1, l_1 \ \Vert \ r_2=r_3=1) 
	= \frac{ p(l_1^{(1)}, l_1, l_2, l_3, r_1, r_2 = 1, r_3 = 1) }{  p(r_2 = 1 | l_3, r_1, r_3 = 1) \! \times \! p(r_3 = 1 | l_2,
		r_1, r_2 = 1)}.
\end{align*}%
}
The same strategy cannot be used for expressing $p(r_1\mid \pa_{\mathcal{G}}(r_1)) = p(r_1 \mid l_2^{(1)}, l_3^{(1)})$ as a function of the factual distribution since $R_1 \not\ci \{ R_2, R_3 \} \mid L_2^{(1)}, L_3^{(1)}$ under this model. However, a second, more involved, step leads to identifying the propensity score of $R_1$ and thus the target law $p(l_1^{(1)}, l_2^{(1)}, l_3^{(1)})$. 

{\bf Step 2.} \textit{Intervene on $R_1$ after intervening on $R_2, R_3$ to get $p(. \ \Vert \ r_1 = 1, r_2=1, r_3=1)$. }
In the second step, we want to intervene on $R_1$ using the above kernel $p(. \ \Vert \ r_2 = 1, r_3=1)$ Markov relative to the DAG in Fig.~\ref{fig:sequential-parallel}(b). In order to perform this intervention, we need to show that the propensity of $R_1$, $p(r_1 \mid l^{(1)}_2, l^{(1)}_2)= p(r_r \mid l_2, l_3 \ \Vert \ r_2 = 1, r_3=1)$ is a function of $p(r, l).$ We have $p(r_1 \mid l_2, l_3 \ \Vert \ r_2=r_3=1) = { p(l_2, l_3, r_1  \Vert  r_2=r_3=1) }/\sum_{r_1}  p(l_2, l_3, $$ r_1 \Vert r_2=r_3=1) $. The numerator here is identified by simply marginalizing out $l^{(1)}_1$ from the kernel $p(. \Vert r_2=r_3=1)$. Consequently, the propensity score of $R_3$ is rendered identified and we can identify the target law in (\ref{eq:ex-parallel-sec-gformula})  via: 
{\small
\begin{align*}
	p(l_1, l_2, l_3 \ \Vert \ r = 1) &= \left. \frac{ p(l_1^{(1)}, l_2, l_3, r_1, l_1 \ \Vert \ r_2=1, r_3=1) }{ p(r_1 \mid l_2, l_3 \ \Vert \ r_2=1,r_3=1) } \right|_{r_1=1}.
\end{align*}%
}
In conclusion, the parallel interventions on $R_2$ and $R_3$ do not induce selection bias on $R_1$. Therefore, we are able to proceed with the sequential application of the g-formula and obtain $p(r_1 \mid \pa_{{\cal G}_{{m}}}(r_1))$ from the kernel $p(.  \ \Vert \ r_2 = 1, r_3 = 1)$. In other words, identification of the target law is only possible if $R_1$ was intervened on only after $R_2$ and $R_3$ were simultaneously intervened on. This identification procedure also induces a partial ordering for the interventions on the missingness indicators. The partial order of interventions in the above example can be written as $\{ \{I_{r_2}, I_{r_3}\} < I_{r_1} \}$; that is interventions on $R_2$ and $R_3$ are incompatible, but both must occur prior to an intervention on $R_1$.


\subsection{ Identifying Propensity Scores with Different Partial Orders} 
\label{subsec:partial-fix}

In the examples discussed so far, all missingness indicators were intervened on according to a \emph{partial order} defined on the set $\{R_i \in R\}$. A procedure for target law identification may then aim at exploring the space of all possible partial orders. This effectively entails trying out all possible combinations of parallel and sequential application of the g-formula. \cite{bhattacharya19mid} showed that such a procedure remains incomplete, meaning that it fails to recognize a class of missing data models where the target law is indeed identified. They took an alternative view of the target law identification problem, where each propensity score $p(r_i \mid \pa_{\cal G}(r_i))$ may be identified separately, using a potentially distinct partial order of intervention operations. This entails considering subproblems where selection bias, hidden variables, or both, are introduced, even if these complications are absent in the original problem. 

Consider the missing data DAG in Fig.~\ref{fig:partial-order}(a). According to Proposition~\ref{prop:miss-id}, the target law $p(l^{(1)}_1, l^{(1)}_2, l^{(1)}_3)$ $\coloneqq p(l_1, l_2, l_3 \ \Vert \ r_1=1, r_2=1, r_3=1)$ is equivalent to the following:
{\small
\begin{align}
	p(l_1, l_2, l_3 \ \Vert \ r=1) \! = \! \left. \frac{ p(l_1, l_2, l_3, r_1,
		r_2, r_3) }{ p(r_1 | l_2^{(1)}) \times  p(r_2
		| l^{(1)}_1, l_3^{(1)}, r_1) \times p(r_3 | l_2^{(1)}, r_1)  } \right|_{{r} = {1}}. 
	\label{eq:ex-partial-order-gformula}
\end{align}%
}
We can easily use the encoded independence restrictions, along with the consistency facts in missing data, to identify the propensity scores of $R_2$ and $R_3$ evaluated at $R=1$. These restrictions are: $R_2 \ci R_3 \mid R_1, L^{(1)}_1, L^{(1)}_3$ and $R_3 \ci R_2 \mid R_1, L^{(1)}_2$. We have,  
{\small
\begin{align}
	p(r_2 \mid \pa_{\mathcal{G}_{{m}}}(r_2))\vert_{r=1}  &= 
	 p(r_2 = 1 \mid
	l_1, l_3, r_1 = 1,  r_3 = 1), \label{ex:partial-seq-r2} \\
	p(r_3 \mid \pa_{\mathcal{G}_{{m}}}(r_3))\vert_{r=1} &= 
	p(r_3 = 1 \mid
	l_2, r_1 = 1, r_2 = 1). \label{ex:partial-seq-r3}
\end{align}%
}
The propensity score of $R_1$ however, is not immediately identified since $R_1 \not\ci R_2 \mid L^{(1)}_2.$ A natural question to ask is whether this propensity score is identified from a kernel distribution where $R_2$ and $R_3$ are simultaneously intervened on. If so, then we can follow a similar argument as in the previous example with the two-step sequential procedure. Performing parallel interventions on $R_2$ and $R_3$ yield the following kernel distribution,
{\small
\begin{align*}
	p(l^{(1)}_1, l_2, l_3, r_1 \ \Vert \ r_2=r_3=1)
	\! = \! \frac{ p(l_1^{(1)}, l_1, l_2, l_3, r_1, r_2 = r_3 = 1) }{  p(r_2 = 1 | l^{(1)}_1, l_3, r_1, r_3 = 1) \! \times \! p(r_3 = 1 | l_2,r_1, r_2 = 1)}.
\end{align*}%
}
Following the second step of the sequential procedure, we need to first identify the propensity score of $R_1$ via the above kernel, i.e., identifying $p(r_1 \mid l_2 \ \Vert \ r_2 = 1, r_3=1)$ which by Bayes rule can be obtained from $p(r_1, l_2 \ \Vert \ r_2 = 1, r_3=1)$. Unfortunately, the above kernel $p(. \ \Vert \ r_2=1, r_3=1)$ exhibits selection on $R_1$. Consequently, the two-stage sequential strategy in the previous example fails to identify the target law in this current example. Nevertheless, the target law is still identifiable using a more involved argument outlined below. Since the propensity scores of $R_2$ and $R_3$ are identified via (\ref{ex:partial-seq-r2}) and (\ref{ex:partial-seq-r3}), all we need to do to identify the target law is to find a way to identify the propensity score of $R_1$, $p(r_1 \mid l^{(1)}_2)$.

\begin{figure}[!t] 
	\begin{center}
		\scalebox{0.8}{
			\begin{tikzpicture}[>=stealth, node distance=1.4cm]
				\tikzstyle{format} = [thick, circle, minimum size=1.0mm, inner sep=0pt]
				\tikzstyle{square} = [draw, thick, minimum size=4.5mm, inner sep=3pt]
				\tikzstyle{format2} = [draw, circle, fill=gray!10, minimum size=2.5mm, inner sep=1.5pt]
				
				\begin{scope}[xshift=0cm]
					\path[->, thick]
					node[format] (x11) {$L^{(1)}_1$}
					node[format, right of=x11, xshift=0.25cm] (x21) {$L^{(1)}_2$}
					node[format, right of=x21, xshift=0.25cm] (x31) {$L^{(1)}_3$}
					node[format, below of=x11] (r1) {$R_1$}
					node[format, below of=x21] (r2) {$R_2$}
					node[format, below of=x31] (r3) {$R_3$}
					node[format, below of=r1] (x1) {$L_1$}
					node[format, below of=r2] (x2) {$L_2$}
					node[format, below of=r3] (x3) {$L_3$} 
					
					(x21) edge[blue] (x31) 
					
					(x11) edge[blue] (r2)
					(x21) edge[blue] (r1)
					(x21) edge[blue] (r3)
					(x31) edge[blue] (r2)
					
					(r1) edge[blue] (r2) 
					(r1) edge[blue, bend right] (r3)
					
					(x11) edge[gray, bend right=30] (x1)
					(x21) edge[gray, bend right=30] (x2)
					(x31) edge[gray, bend left=30] (x3)
					(r1) edge[gray] (x1)
					(r2) edge[gray] (x2)
					(r3) edge[gray] (x3)
					
					node[format, below of=x2, yshift=0.5cm] (a) {(a)} ;
				\end{scope}
				\begin{scope}[xshift=5.25cm, yshift=0cm]
					\path[->, thick]
					node[format] (x11) {}
					node[format, right of=x11, xshift=0.25cm] (x21) {$L^{(1)}_2$}
					node[format, right of=x21, xshift=0.25cm] (x31) {$L^{(1)}_3$}
					node[format, below of=x11] (r1) {$R_1$}
					node[format, below of=x21] (r2) {$R_2$}
					node[format, below of=x31] (r3) {$R_3$}
					node[format, below of=r2] (x2) {$L_2$}
					node[format, below of=r3] (x3) {$L_3$} 
					
					(x21) edge[blue] (x31) 
					(x21) edge[blue] (r1)
					(x21) edge[blue] (r3)
					(x31) edge[blue] (r2)
					(r1) edge[blue] (r2) 
					(r1) edge[blue, bend right] (r3)
					
					(x21) edge[gray, bend right=30] (x2)
					(x31) edge[gray, bend left=30] (x3)
					(r2) edge[gray] (x2)
					(r3) edge[gray] (x3)
					
					node[format, below of=x2, yshift=0.5cm] (b) {(b)} ;
				\end{scope}
				\begin{scope}[xshift=10.5cm, yshift=0cm]
					\path[->, thick]
					node[format] (x11) {}
					node[format, right of=x11, xshift=0.5cm] (x21) {$L^{(1)}_2 = L_2$}
					node[format, right of=x21, xshift=1.cm] (x31) {$L^{(1)}_3 = L_3$}
					node[format, below of=x11] (r1) {$R_1$}
					node[square, below of=x21] (r2) {$r_2 = 1$}
					node[square, below of=x31] (r3) {$r_3 = 1$}
					node[format, below of=r1] (x1) {}
					node[format, below of=r2] (x2) {\gray{$L_2$}}
					node[format, below of=r3] (x3) {\gray{$L_3$}}
					
					(x21) edge[blue] (x31) 
					(x21) edge[blue] (r1)
					(x21) edge[gray, dashed, bend right=45] (x2)
					(x31) edge[gray, dashed, bend left=45] (x3)
					(r2) edge[gray, dashed] (x2)
					(r3) edge[gray, dashed] (x3)
					
					node[format, below of=x2, xshift=1.6cm, yshift=0.5cm] (c) {(c)} ;
				\end{scope}
			\end{tikzpicture}
		}
		 \vspace{-1.5cm}
		\caption{(a) An example m-DAG corresponding to a model where intervention on a single missingness indicator entails following a partial order of interventions; (b), (c) Selection bias on $R_1$ is avoidable by following a partial order intervention on a graph induced by projecting out $L^{(1)}_1, L_1$.}
		\label{fig:partial-order}
	\end{center}
\end{figure}
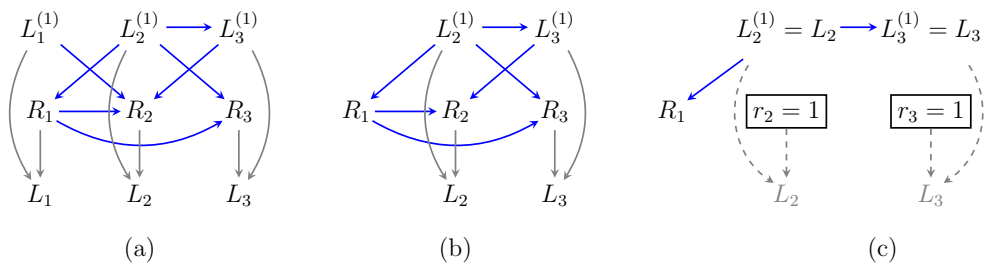

Consider the marginal distribution $\sum_{l_1, l^{(1)}_1}  p(l^{(1)}, r, l)$, where the full law factorizes according to the m-DAG in Fig.~\ref{fig:partial-order}(a). The non-deterministic portion of this marginal distribution factorizes as: $p(l^{(1)}_2, l^{(1)}_3, r_1, r_2, r_3) =  p(l_2^{(1)}) \times p(l_3^{(1)} | l^{(1)}_2)  \times p(r_1 | l_2^{(1)}) \times  p(r_2 | r_1, l_3^{(1)})  \times p(r_3 | l_2^{(1)}, r_1). $
The above factorization is Markov relative to the m-DAG in Fig.~\ref{fig:partial-order}(b), where $L_1, L^{(1)}_1$ are projected out (or in other words, treated as hidden variables) from Fig.~\ref{fig:partial-order}(a). Aside from absence of $p(l^{(1)}_1)$ and $p(l_1 \mid r_1, l^{(1)}_1)$ factors, the difference between the above factorization and the one from original m-DAG in Fig.~\ref{fig:partial-order}(a) is the difference in propensity score of $R_2$, i.e., $p(r_2 \mid r_1, l^{(1)}_1, l^{(1)}_3)$ vs.~$p(r_2 \mid r_1, l^{(1)}_3)$. We now illustrate that this change in the propensity score of $R_2$, which we call a \emph{pseudo-propensity score} for $R_2$, can overcome the selection bias issue on $R_1$ that resulted from an intervention on $R_2$. 

The pseudo-propensity score of $R_2$, denoted by $\widetilde{p}(r_2 \mid r_1, l^{(1)}_3)$, is easily identified via $\widetilde{p}(r_2 \mid r_1, l_3, r_3=1)$ since $R_2 \ci R_3 \mid R_1, L^{(1)}_3$ in Fig.~\ref{fig:partial-order}(b). Now to identify the propensity score of $R_1$, we proceed as follows. Using the pseudo-propensity score of $R_2$ and the propensity score of $R_3$, we obtain the following kernel distribution: 
{\small
\begin{align*}
	\widetilde{p}(l_2, l_3, r_1 \ \Vert \ r_2 = 1, r_3 = 1) 
	=  \frac{{p}(l_2, l_3, r_1, r_2 = 1, r_3 = 1)}{\widetilde{p}(r_2 = 1 \mid r_1, l_3, r_3 = 1) \times p(r_3 = 1 \mid l_2, r_1, r_2 = 1)}.  
\end{align*}%
}
The above kernel $\widetilde{p}(. \ \Vert \ r_2 = 1, r_3 = 1)$, which factorizes according to the m-DAG in Fig.~\ref{fig:parallel}(c), is a direct function of observed data law without inducing any selection bias on $R_1$. Consequently, we can identify the propensity score of $R_1$, $p(r_1 \mid l_2  \ \Vert \ r_2 = 1, r_3 = 1)$ using the kernel $\widetilde{p}(. \ \Vert \ r_2 = 1, r_3 = 1)$. Using Bayes rule, we have $p(r_1 \mid l_2  \ \Vert \ r_2 = 1, r_3 = 1) = \sum_{l_3} p(r_1, l_2, l_3  \ \Vert \ r_2 = 1, r_3 = 1) / \sum_{l_3, r_1} p(r_1, l_2, l_3  \ \Vert \ r_2 = 1, r_3 = 1)$. This concludes that the target law in (\ref{eq:ex-partial-order-gformula}) is identified, since we showed each term in the denominator is identified. 

The idea of  focusing on identifying  each propensity score separately  leads to new identification strategies. Specifically, in the above example, the propensity score of $R_1$ was identified only after treating $L^{(1)}_1$ as a hidden variable and marginalizing it out from the original distribution. \cite{bhattacharya19mid} developed a general procedure based on these observations which has significantly  narrowed the identifiability gap in graphical models of missing data. Their procedure generalizes the notion of finding a partial order of interventions on the set $\{R_i \in R\}$ to finding partial orders of interventions for each individual $R_k \in R$ by exploring subproblems where a set of variables are treated as hidden or unmeasured. 

In the above example, the partial orders for $R_2$ and $R_3$ are trivial -- the corresponding propensity scores are immediately identified from the observed distribution. We can summarize the identification procedure for obtaining the propensity score of $R_1$ via the following partial order executed in a graph where $\{L_1^{(1)}, L_1\}$ are treated as hidden variables, which we will denote as ${\cal G}_m(V \setminus \{L_1^{(1)}, L_1\})$: the partial order of interventions for $R_1$ can be summarized via $\big\{ \{ I_{r_2}, I_{r_3}  \} < I_{r_1}  \big\} \text{ in } {{\cal G}_m (V \setminus \{L_1^{(1)}, L_1\} )}.$ That is, intervention on $R_1$ must occur after interventions on $R_2$ and $R_3$ in a graph where $L^{(1)}_1$ and $L_1$ are marginalized out, and as mentioned earlier interventions on $R_2$ and $R_3$ are incompatible. 


\subsection{Intervention on Variables Outside of R} 
\label{subsec:outside-R-fix}

A feature common to all previous examples is that all propensity scores were obtained via partial orders of interventions defined only on the missingness indicators. This  however, is also not always sufficient. In general the propensity score of $R_k$ might be identified only by intervening on variables outside of $R$, including variables in $L^{(1)}$ that become observed after intervening or conditioning on relevant elements in $R$. As an example, consider the m-DAG in Fig~\ref{fig:UAI}(a) where $L_3$ is fully observed. This graph was considered in \cite{bhattacharya19mid} as a generalization of a model described in \cite{shpitser15missing}. According to Proposition~\ref{prop:miss-id}, we can write down the target law $p(l_1^{(1)}, l_2^{(1)}, l_4^{(1)}, l_3)$ as follows: 
{\small
\begin{align}
	p(l_1, l_2, l_3, l_4 \ \Vert \ r=1) = \left. \frac{ p(l_1, l_2, l_3, l_4, r_1,
		r_2, r_4) }{ p(r_1 | l_2^{(1)}, l_4^{(1)}, r_2) \times  p(r_2
		| l^{(1)}_1, l_4^{(1)}) \times p(r_4 | l_3)  } \right|_{{r} = {1}}. 
	\label{eq:ex-uai-gformula}
\end{align}%
}
The propensity score of $R_4$, $p(r_4 \mid l_3)$, is a direct function of observed data. The propensity score of $R_1$, $p(r_1 \mid r_2, l^{(1)}_2, l^{(1)}_4)$ evaluated at $R=1$ is also a function of observed data via $p(r_1 = 1  \mid r_2=1, l_2, l_4, r_4=1)$ since $R_1 \ci R_4 \mid R_2, L^{(1)}_2, L^{(1)}_4$. Thus, identification of the target law relies on whether the propensity score of $R_2$, $p(r_2 \mid l^{(1)}_1, l^{(1)}_4)$ is identified or not, which is not immediately clear since $R_2 \not\ci R_1 \mid L^{(1)}_1, L^{(1)}_4$. We now outline a procedure to identify this propensity score. 

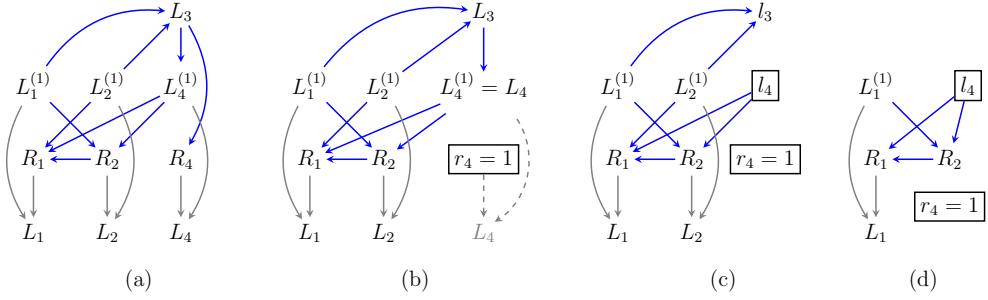
\begin{figure}[!t]
	\begin{center}
		\scalebox{0.7}{
			\begin{tikzpicture}[>=stealth, node distance=1.4cm]
				\tikzstyle{format} = [thick, circle, minimum size=1.0mm, inner sep=0pt]
				\tikzstyle{square} = [draw, thick, minimum size=1mm, inner sep=3pt]
				\begin{scope}[xshift=0cm]
					\path[->,  thick]
					node[format] (x11) {$L^{(1)}_1$}
					node[format, right of=x11] (x21) {$L^{(1)}_2$} 
					node[format, right of=x21] (x41) {$L^{(1)}_4$} 
					node[format, above of=x41] (x31) {$L_3$}
					node[format, below of=x11] (r1) {$R_1$} 
					node[format, below of=x21] (r2) {$R_2$} 
					node[format, below of=x41] (r4) {$R_4$} 
					node[format, below of=r4] (x4) {$L_4$}
					node[format, below of=r1] (x1) {$L_1$} 
					node[format, below of=r2] (x2) {$L_2$} 
					
					(x11) edge[blue, bend left=30] (x31) 			
					(x21) edge[blue] (x31)	
					(x31) edge[blue] (x41)		
					(x41) edge[blue] (r2)			
					(x41) edge[blue] (r1)
					(r2)   edge[blue] (r1)		
					(x31) edge[blue, bend left] (r4)	
					(r1) edge[gray] (x1)
					(r2) edge[gray] (x2)	
					(r4) edge[gray] (x4)	
					(x41) edge[gray, bend left=30] (x4)	
					(x11) edge[blue, bend right=0] (r2)
					(x21) edge[blue, bend left=0] (r1)	
					(x11) edge[gray, bend right=30] (x1)	
					(x21) edge[gray, bend left=30] (x2)		
					node [below of=x2, xshift=0.6cm, yshift=0.5cm] {(a)} ;
				\end{scope}
				\begin{scope}[xshift=5.25cm]
					\path[->,  thick]
					node[format] (x11) {$L^{(1)}_1$}
					node[format, right of=x11] (x21) {$L^{(1)}_2$} 
					node[format, right of=x21, xshift=0.5cm] (x41) {$L^{(1)}_4 = L_4$} 
					node[format, above of=x41] (x31) {$L_3$}
					node[format, below of=x31, yshift=0.2cm] (x0) {} 
					node[format, below of=x11] (r1) {$R_1$} 
					node[format, below of=x21] (r2) {$R_2$} 
					node[square, below of=x41] (r4) {$r_4=1$} 
					node[format, below of=r4] (x4) {\gray{$L_4$}}
					node[format, below of=r1] (x1) {$L_1$} 
					node[format, below of=r2] (x2) {$L_2$} 
					
					(x11) edge[blue, bend left=30] (x31) 			
					(x21) edge[blue] (x31)	
					(x31) edge[blue] (x0)		
					(x41) edge[blue] (r2)			
					(x41) edge[blue] (r1)
					(r2)   edge[blue] (r1)		
					(r1) edge[gray] (x1)
					(r2) edge[gray] (x2)	
					(r4) edge[gray, dashed] (x4)	
					(x41) edge[gray, bend left=45, dashed] (x4)	
					(x11) edge[blue, bend right=0] (r2)
					(x21) edge[blue, bend left=0] (r1)	
					(x11) edge[gray, bend right=30] (x1)	
					(x21) edge[gray, bend left=30] (x2)		
					node [below of=x2, xshift=0.6cm, yshift=0.5cm] {(b)} ;
				\end{scope}
				\begin{scope}[xshift=11.1cm, yshift=0cm]
					\path[->,  thick]
					node[format] (x11) {$L^{(1)}_1$}
					node[format, right of=x11] (x21) {$L^{(1)}_2$} 
					node[square, right of=x21, xshift=0cm] (x41) {$l_4$} 
					node[format, above of=x41] (x31) {$l_3$}
					node[format, below of=x11] (r1) {$R_1$} 
					node[format, below of=x21] (r2) {$R_2$} 
					node[square, below of=x41, xshift=0cm] (r4) {$r_4=1$} 
					node[format, below of=r4] (x4) {}
					node[format, below of=r1] (x1) {$L_1$} 
					node[format, below of=r2] (x2) {$L_2$} 
					
					(x11) edge[blue, bend left=30] (x31) 			
					(x21) edge[blue] (x31)	
					(x41) edge[blue] (r2)			
					(x41) edge[blue] (r1)
					(r2)   edge[blue] (r1)		
					(r1) edge[gray] (x1)
					(r2) edge[gray] (x2)	
					(x11) edge[blue, bend right=0] (r2)
					(x21) edge[blue, bend left=0] (r1)	
					(x11) edge[gray, bend right=30] (x1)	
					(x21) edge[gray, bend left=30] (x2)		
					node [below of=x2, xshift=0.6cm, yshift=0.5cm] {(c)} ;
				\end{scope}
				\begin{scope}[xshift=16.cm, yshift=0cm]
					\path[->,  thick]
					node[format] (x11) {$L^{(1)}_1$}
					node[format, right of=x11] (x21) {} 
					node[format, right of=x21] (x31) {}
					node[square, right of=x11, xshift=0.35cm] (x41) {$l_4$} 
					node[format, below of=x11] (r1) {$R_1$} 
					node[format, below of=x21] (r2) {$R_2$} 
					node[square, below of=r2, yshift=0.5cm] (r4) {$r_4=1$} 
					node[format, below of=r4] (x4) {}
					node[format, below of=r1] (x1) {$L_1$} 
					node[format, below of=r2] (x2) {} 
					
					(x41) edge[blue] (r2)			
					(x41) edge[blue] (r1)
					(r2)   edge[blue] (r1)		
					(r1) edge[gray] (x1)
					(x11) edge[blue, bend right=0] (r2)
					(x11) edge[gray, bend right=30] (x1)		
					node [below of=x1, xshift=0.9cm, yshift=0.5cm] {(d)} ;
				\end{scope}
			\end{tikzpicture}
		}
	\end{center}
	\vspace{-1cm}
	\caption{(a) An example m-DAG corresponding to a model where variables besides $R$s are required to be intervened on in order to identify the propensity scores; (b) A conditional m-DAG where  $R_4$ is intervened on; (c) A conditional m-DAG where $L_4$ is intervened on after an intervention on $R_4$; (d) A conditional m-DAG where $L_2^{(1)}$ and $L_3$ are marginalized out from the kernel $p(. \ \Vert\  r_4=1, l_4)$.}
	\label{fig:UAI}
\end{figure}

We first intervene on $R_4$, i.e., dropping the term $p(r_4 \mid l_3)$ from the original factorization, yielding the kernel below which is Markov relative to the graph in Fig.~\ref{fig:UAI}(b), 
{\small
\begin{align*}
	p(l_1^{(1)}, l_2^{(1)}, l_3, l_4, r_1, r_2, l_1, l_2 \ \Vert \ r_4=1)&=
	\frac{ p(l_1^{(1)},  l_2^{(1)}, l_3, l_4, l_1, l_2, r_1, r_2, r_4 = 1) }{ p(r_4 = 1 \mid l_3)}. 
\end{align*}%
}
After intervening on $R_4$, $L^{(1)}_4$ is fully observed. We then intervene on $L_4,$ i.e., dropping the term $p(l_4 \mid l_3 \ \Vert \ r_4 = 1)$ from the above kernel and obtain the following kernel which is Markov relative to the graph in Fig.~\ref{fig:UAI}(c),
{\small
\begin{align*}
	p(l_1^{(1)}, l_2^{(1)}, l_3, r_1, r_2, l_1, l_2 \ \Vert \ r_4=1, l_4)
	=  \frac{ p(l_1^{(1)}, l_2^{(1)}, l_3, l_4, l_1, l_2, r_1, r_2, r_4 = 1) }{ p(r_4 = 1 \mid l_3) \times p(l_4 \mid l_3, r_4 = 1)}.
\end{align*}%
}
{Note that in this example, marginalizing out $L_3$ is equivalent to intervening on $L_3$.  We can thus}
safely marginalize out $L^{(1)}_2, L_2,$ and $L_3$ from the above expression
without changing the propensity score of $R_2$, yielding a new kernel that is Markov relative to the graph in Fig.~\ref{fig:UAI}(d),
{\small
\begin{align*}
	p(l_1^{(1)}, r_1, r_2, l_1 \ \Vert \ r_4=1, l_4) 
	=  \sum_{l_3} \frac{ p(l_1^{(1)}, l_3, l_4, r_1, r_2, r_4 = 1) }{ p(r_4 = 1 \mid l_3) \times p(l_4 \mid l_3, r_4 = 1)}.
\end{align*}%
}
The propensity score of $R_1$ in the above kernel is different than the one in original factorization of the m-DAG in Fig.~\ref{fig:UAI}(a).  We refer to this as the pseudo-propensity score of $R_1$ and denote it by $\widetilde{p}(r_1 \mid r_2 \ \Vert \ r_4, l_4).$  We now intervene on $R_1$ by dropping this term from the above kernel, yielding  
{\small
\begin{align*}
	p(l_1, r_2 \ \Vert \  r_1=1, r_4=1, l_4) 
	=  \frac{p(l_1, r_1=1, r_2 \ \Vert \  r_4=1, l_4)}{\widetilde{p}(r_1 = 1 \mid  r_2 \ \Vert \  r_4=1, l_4)}.
\end{align*}%
}
The desired propensity score of $R_2$ (which remains invariant despite all previous operations as its direct causes are still present in Fig.~\ref{fig:UAI}(d)) is then identified in the above kernel as $p(l_1, r_2 \ \Vert \  r_1=1, r_4=1, l_4) / \sum_{r_2} p(l_1, r_2 \ \Vert \  r_1=1, r_4=1, l_4).$ Since all the propensity scores are identified, then the target law in (\ref{eq:ex-uai-gformula}) is identified as well. 

In the above example, the partial orders for $R_1$ and $R_4$ are trivial -- the corresponding propensity scores are immediately identified from the observed distribution. We can summarize the identification procedure for obtaining the propensity score of $R_2$ via the following partial order executed in a graph where $R_4$ then $L_4$ are intervened on and $\{L_2^{(1)}, L_2, L_3\}$ are treated as hidden variables, which we will denote as ${\cal G}_m(V \setminus \{L^{(1)}_2, L_2, L_3\}, \{r_4, l_4\})$: the partial order of interventions for $R_2$ can be summarizes via $\big\{ I_{r_4} < I_{l_4} <  I_{r_1}  < I_{r_2}  \big\} \text{ in }  {\cal G}_m(V \setminus \{L^{(1)}_2, L_2, L_3\}, \{r_4, l_4\}).$ That is, intervention on $R_2$ must occur after interventions on $R_4, L_4$ and $R_1$ in a graph where $L^{(1)}_2, L_2,$ and $L_3$ are marginalized out, and as mentioned earlier interventions on $R_1$ and $R_4$ are incompatible.

\subsection{A Unifying Identification Procedure}
\label{sec:unified}

\cite{bhattacharya19mid} proposed a procedure for target law identification that combines all the ideas discussed above. It proceeds as follows. For each missingness indicator $R_k \in R$,  it proceeds to identify its propensity score  $p({r}_k | \pa_{{\cal G}_{{m}}}({r}_k))
$ {evaluated at $R=1$}. It does so by checking if $R_k$ is conditionally independent (given its parents) of the corresponding missingness indicators of its counterfactual parents. If this is the case, the propensity score is identified by a simple conditional independence argument (d-separation). Otherwise, the procedure checks if this condition holds in any intervention distribution where a subset of missingness indicators are  intervened on, in either the original model or marginals of the model where the direct causes of $R_k$ are still preserved.  If the procedure succeeds in identifying the propensity score for each missingness indicator in this manner, then the target law is declared as being identified. 

Necessary conditions for target law identification have been discussed in the literature. A well-known result states that if an underlying variable causes its own missingness status, known as a self-censoring mechanism or nonignorable mechanism, then the target law is provably not identified. This means we can construct two missing data models that differ in target law distributions but both map to the same observed data distribution. The graphical structure simply corresponds  to existence of an edge of the form $L^{(1)}_k \rightarrow R_k$ in the m-DAG. \cite{nabi2023testability, guo2023sufficient} have also shown that the so-called ``criss-cross" structure 
{prevents}
target law identification. These structures involve a pair of variables $L^{(1)}_i, L^{(1)}_j$ that are directly connected as $L^{(1)}_i \rightarrow L^{(1)}_j$ or $L^{(1)}_i \leftarrow L^{(1)}_j$ and these edges exist simultaneously: $L^{(1)}_i \rightarrow R_j \leftarrow R_i \leftarrow L^{(1)}_j$. On the other hand, \cite{nabi20completeness} have provided sufficient conditions under which a target law is identified. They show under the absence of self-censoring edges and so-called ``colluder" structures, the target law is identified.  A colluder is a special type of collider where there exists $R_i, R_j \in R$ such that $L^{(1)}_i \rightarrow R_j \leftarrow R_i.$ Finding necessary and sufficient conditions (a sound and complete algorithm) for target law identification remains an open problem. 

In many applied problems, some variables are not just missing but completely unobserved. We can use missing data DAG models with hidden variables to encode the presence of unmeasured confounders and generalize the aforementioned identification strategies to such settings. 
{
We discuss 
extensions to identification in the presence of both missing data and hidden variables in Appendix~S2.  
}

\section{Discussion}
\label{sec:diss}

We have shown how unique features of missing data models represented by DAGs allow identification in seemingly counterintuitive situations, by taking advantage of Markov restrictions linking missingness indicators and underlying counterfactual variables.  For instance, {the target law is identified in the bivariate permutation model (Fig.~\ref{fig:permutation}(a)), but not in the analogous hidden variable causal model (Fig.~\ref{fig:iterate}(a))}. Similarly, the target law is identified in the bivariate block-parallel model, shown in Fig.~\ref{fig:missing-vs-causal}(a), by (\ref{eqn:block-parallel-id}), but not in the analogous  causal model shown in Fig.~\ref{fig:missing-vs-causal}(b).

We described how graphical missing data models are a special case of hidden variable graphical causal models, where hidden variables are replaced by counterfactual variables, which are sometimes observed.  It is this partial observability which allowed identification to be derived.  Just as in missing data models, observed variables in causal models are derived from counterfactual variables by means of consistency, and thus may be viewed as causes (parents) of observed variables in a particular type of causal graphs.  This leads to a natural question: would expressing causal models via graphs which make the relationship between observed and counterfactual variables explicit by placing counterfactuals as vertices on the graph yield new types of identification results? We now show the answer is `no' without further very strong assumptions.


{

Consider an elaboration of the graph in Fig.~\ref{fig:iterate}(a) shown in Fig.~\ref{fig:m-DAG-SWIG}(a), where each $U_i$ is replaced by two counterfactuals $L_i^{(1)},L_i^{(0)}$ for $i=1,2$. 
These counterfactuals are associated, represented by the addition of a common parent $\epsilon_i$.
These counterfactuals are shown as parents of $L_i$, to indicate that $L_i$ is determined from these counterfactuals by the value of $R_i$ via consistency:
$L_i = L_i^{(1)} R_i + L_i^{(0)} (1 - R_i)$.  Furthermore, to preserve the structure of confounding in Fig.~\ref{fig:iterate}(a), $L_2^{(1)},L_2^{(0)}$ are children of $L_1^{(1)},L_1^{(0)}$.  In fact, the graphs in Fig.~\ref{fig:m-DAG-SWIG}(a) and Fig.~\ref{fig:iterate}(a) represent the same causal model.


}

\begin{figure}
\centering
\scalebox{0.75}{
\begin{tikzpicture}[>=stealth, node distance=1.5cm]
    \tikzstyle{format} = [thick, circle, minimum size=1.0mm, inner sep=0pt]
    \tikzstyle{square} = [draw, thick, minimum size=4.5mm, inner sep=3pt]

    \begin{scope}[xshift=0cm, yshift=0cm]
        \path[->, thick]
        node[format] (x11) {$L^{(1)}_1$}
        node[format, above of=x11, yshift=0cm, red, xshift=0.0cm] (e1){$\epsilon_1$}
        node[format, above of=e1] (x10) {$L^{(0)}_1$}

        
        node[format, right of=x11, xshift=1cm] (x21) {$L^{(1)}_2$}
        node[format, above of=x21, yshift=0cm, red, xshift=0cm] (e2){$\epsilon_2$}
        node[format, above of=e2] (x20) {$L^{(0)}_2$}
    
        node[format, below of=x11, yshift=-0.cm] (r1) {$R_1$}
        node[format, below of=x21, yshift=-0.cm] (r2) {$R_2$}
        node[format, below of=r1, yshift=-0.cm] (x1) {$L_1$}
        node[format, below of=r2, yshift=-0.cm] (x2) {$L_2$}

        (e1) edge[->, red] (x11)
        (e1) edge[->, red] (x10)
        (e2) edge[->, red] (x21)
        (e2) edge[->, red] (x20)

        (x10) edge[blue] (x20)
        (x10) edge[blue] (x21)
        (x11) edge[blue] (x20)
        (x11) edge[blue] (x21)
        
        (r1) edge[blue] (r2)
        (x1) edge[blue] (r2)
        
        (x11) edge[gray, bend right=25] (x1)
        (x21) edge[gray, bend left=25] (x2)
        (x10) edge[gray, bend right=35] (x1)
        (x20) edge[gray, bend left=35] (x2)
        (r1) edge[gray] (x1)
        (r2) edge[gray] (x2)

        (x21) edge[blue] (r1)
        (x20) edge[blue] (r1)

        node[format, below of=x1, xshift=1.2cm, yshift=0.5cm] (a) {(a)}; 

         \node[draw, dashed, fit=(x11)(e1)(x10), inner sep=4pt, "$U_1$"]{};
         \node[draw, dashed, fit=(x21)(e2)(x20), inner sep=4pt, "$U_2$"] {};
    \end{scope}

    \begin{scope}[xshift=6cm, yshift=0cm]
        \path[->, thick]
        node[format] (x11) {$L^{(1)}_1$}
        node[format, above of=x11, yshift=0cm, red, xshift=0.0cm] (e1){$\epsilon_1$}
        node[format, above of=e1] (x10) {$L^{(0)}_1$}

        
        node[format, right of=x11, xshift=1cm] (x21) {$L^{(1)}_2$}
        node[format, above of=x21, yshift=0cm, red, xshift=0cm] (e2){$\epsilon_2$}
        node[format, above of=e2] (x20) {$L^{(0)}_2$}
    
        node[format, below of=x11, yshift=-0.cm] (r1) {$R_1$}
        node[square, below of=x21, yshift=-0.cm] (r2) {$R_2 = 1$}
        node[format, below of=r1, yshift=-0.cm] (x1) {$L_1$}
        node[format, below of=r2, yshift=-0.cm] (x2) {$L_2 = L^{(1)}_2$}

        node[format, below of=r2, yshift=0.2cm] (x2t) {}

        (e1) edge[->, red] (x11)
        (e1) edge[->, red] (x10)
        (e2) edge[->, red] (x21)
        (e2) edge[->, red] (x20)

        (x10) edge[blue] (x20)
        (x10) edge[blue] (x21)
        (x11) edge[blue] (x20)
        (x11) edge[blue] (x21)
        
        
        (x11) edge[gray, bend right=25] (x1)
        (x21) edge[black, very thick, bend left=60] (x2t)
        (x10) edge[gray, bend right=35] (x1)
        (r1) edge[gray] (x1)
        (r2) edge[gray] (x2t)

        (x21) edge[blue] (r1)
        (x20) edge[blue] (r1)

        node[format, below of=x1, xshift=1.2cm, yshift=0.5cm] (b) {(b)}; 

         \node[draw, dashed, fit=(x11)(e1)(x10), inner sep=4pt, "$U_1$"] {};
         \node[draw, dashed, fit=(x21)(e2)(x20), inner sep=4pt, "$U_2$"] {};
    \end{scope}

    \begin{scope}[xshift=12cm, yshift=0cm]
        \path[->, thick]
        node[format] (x11) {$L^{(1)}_1$}
        node[format, above of=x11, yshift=0cm, red, xshift=0.0cm] (e1){$\epsilon_1$}
        node[format, above of=e1] (x10) {$L^{(0)}_1$}

        
        node[format, right of=x11, xshift=1cm] (x21) {$L^{(1)}_2$}
        node[format, above of=x21, yshift=0cm, red, xshift=0cm] (e2){$\epsilon_2$}
        node[format, above of=e2] (x20) {$L^{(0)}_2$}
    
        node[square, below of=x11, yshift=-0.cm] (r1) {$R_1 = 1$}
        node[square, below of=x21, yshift=-0.cm] (r2) {$R_2 = 1$}
        node[format, below of=r1, yshift=-0.cm] (x1) {$L_1 = L^{(1)}_1$}
        node[format, below of=r2, yshift=-0.cm] (x2) {$L_2 = L^{(1)}_2$}

        node[format, below of=r1, yshift=0.2cm] (x1t) {}
        node[format, below of=r2, yshift=0.2cm] (x2t) {}

        (e1) edge[->, red] (x11)
        (e1) edge[->, red] (x10)
        (e2) edge[->, red] (x21)
        (e2) edge[->, red] (x20)

        (x10) edge[blue] (x20)
        (x10) edge[blue] (x21)
        (x11) edge[blue] (x20)
        (x11) edge[blue] (x21)
        
        
        (x11) edge[black, very thick, bend right=60] (x1t)
        (x21) edge[black, very thick, bend left=60] (x2t)
        (r1) edge[gray] (x1t)
        (r2) edge[gray] (x2t)

        (x11) edge[->, dashed, auto=left, bend left=50, "$g$", very thick, black] (x10)
        (x21) edge[->, dashed, auto=right, bend right=50, "$g$", very thick, black] (x20)
        
        node[format, below of=x1, xshift=1.2cm, yshift=0.5cm] (c) {(c)}; 

         \node[draw, dashed, fit=(x11)(e1)(x10), inner sep=4pt, "$U_1$"] {};
         \node[draw, dashed, fit=(x21)(e2)(x20), inner sep=4pt, "$U_2$"] {};
    \end{scope}
    
\end{tikzpicture}
}

\vspace{-.5cm}
		\caption{ (a) A causal inference version of the bivariate permutation model with two counterfactual versions of $L_1$ and $L_2$ on the graph;
				(b) A world where an intervention $r_2=1$ is performed, yielding a situation where $L_2$ and $L_2^{(1)}$ coincide, i.e., $L_2=L_2^{(1)}$;
				(c) A causal model where we impose a \emph{rank preservation} relationship where a known bijective function $g(.)$ exists, such that
				$L_2^{(0)} = g(L_2^{(1)})$.  In this model, the joint distribution $p(L_1^{(r_1)}, L_2^{(r_2)})$ is identified by sequentially intervening on
				$r_2$ then $r_1$.
		}
		\label{fig:m-DAG-SWIG}
\end{figure}
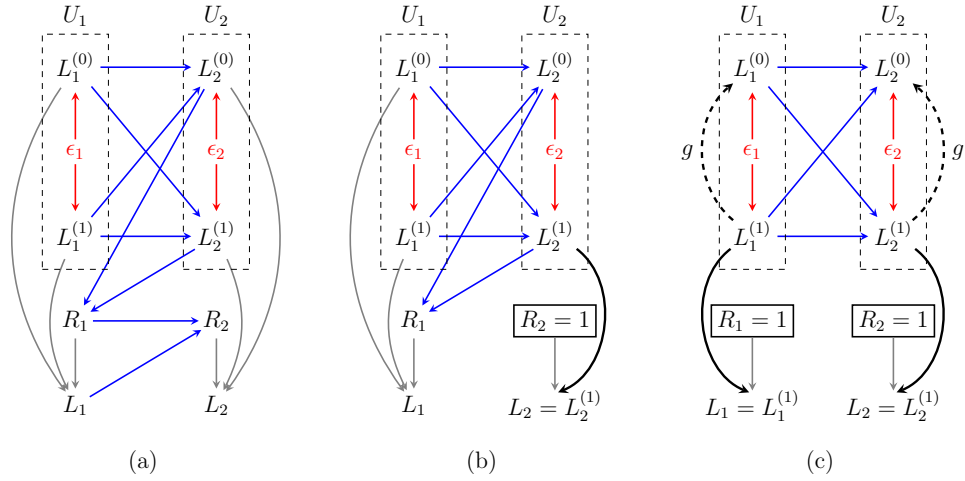

Note that in this causal model, the intervention $r_2=1$ is identified from the observed data, since $R_2$ only has observed
parents, namely $L_1$ and $R_1$.  The graph where this intervention is performed is shown via the conditional causal DAG in Fig.~\ref{fig:m-DAG-SWIG}(b), with the black edge representing the deterministic identity relationship between $L_2^{(1)}$ and $L_2$.

Just as in missing data, intervention $r_2=1$ renders the counterfactual $L_2^{(1)}$ and the observed proxy variable $L_2$ to be the same, meaning that $L_2^{(1)}$ becomes observable.
Since both $L_2^{(1)}$ and $L_2^{(0)}$ serve as parents of $R_1$, 
the fact that $L_2^{(1)}$ becomes observed after the intervention $r_2=1$
does not suffice to render the subsequent intervention on $R_1$ identified.

In particular, since 
two counterfactual versions of $L_2$, namely $L_2^{(1)}$ and $L_2^{(0)}$, exist, observing the former does not suffice to eliminate all confounding. 
This is illustrated by a non-causal path in Fig.~\ref{fig:m-DAG-SWIG}(b) from $R_1$ to $L_1$ via $L^{(0)}_2$ and either $L^{(1)}_1$ or $L^{(0)}_1$.

However, if $L^{(1)}_2$ and $L^{(0)}_2$ are continuous random variables and we make the strong additional assumption of \emph{rank preservation}, where $L_2^{(0)}$ is equal to $g(L_2^{(1)})$ for some  function $g(.)$ bijective on the statespace of $L_2^{(1)}$ (and $L_2^{(0)}$), then identification is recovered since
$p(R_1 \mid L_2^{(1)}, L_1^{(0)}) = p(R_1 \mid L_2^{(1)}, g(L_2^{(1)})) = p(R_1 \mid L_2^{(1)})$ becomes a functional of the observed data law once an intervention $r_2=1$ is performed.
This is illustrated in Fig.~\ref{fig:m-DAG-SWIG}(c), where the bijective relationship between $L_2^{(1)}$ and $L_2^{(0)}$ is represented by a dashed edge indexed by $g$.  Note that once $L_2^{(1)}$ is observed, so is $L_2^{(0)}$, which implies that conditioning on one counterfactual implicitly conditions on the other, due to their deterministic relationship.

We discuss additional causal models in which rank preservation yields identification in the Appendix~S4.


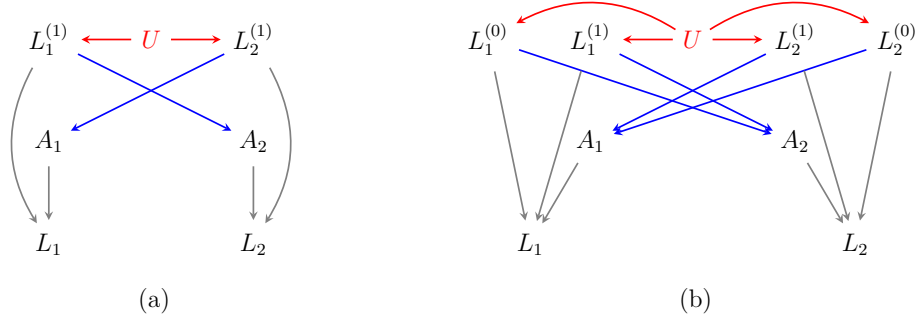
\begin{figure}[t] 
	\begin{center}
		\scalebox{0.8}{
			\begin{tikzpicture}[>=stealth, node distance=1.7cm]
				\tikzstyle{format} = [thick, circle, minimum size=1.0mm, inner sep=2pt]
				\tikzstyle{square} = [draw, thick, minimum size=4.5mm, inner sep=1pt]
				
				\begin{scope}[xshift=0cm]
					\path[->, thick]
					node[format] (l11) {$L^{(1)}_1$}
					node[format, red, right of=l11] (u) {$U$}
					node[format, right of=u] (l21) {$L^{(1)}_2$}
					
					node[format, below of=l11] (r1) {$A_1$}
					node[format, below of=l21] (r2) {$A_2$}

                    node[format, below of=r1] (l1) {$L_1$}
					node[format, below of=r2] (l2) {$L_2$}
					
					(u) edge[red] (l21)
					(u) edge[red] (l11)
					(l11) edge[blue] (r2)
					(l21) edge[blue] (r1)

                    (r1) edge[gray] (l1)
                    (l11) edge[gray, bend right] (l1)
                    (r2) edge[gray] (l2)
                    (l21) edge[gray, bend left] (l2)
					
					node[below of=l1, xshift=1.75cm, yshift=0.75cm] (t) {(a)}
					;
				\end{scope}
				
				\begin{scope}[xshift=9cm]
					\path[->, thick]
					node[format] (l11) {$L^{(1)}_1$}
					node[format, red, right of=l11] (u) {$U$}
					node[format, right of=u] (l21) {$L^{(1)}_2$}
					node[format, left of=l11] (l10) {$L^{(0)}_1$}
					node[format, right of=l21] (l20) {$L^{(0)}_2$}
					
					node[format, below of=l11] (r1) {$A_1$}
					node[format, below of=l21] (r2) {$A_2$}

                    node[format, below of=r1, xshift=-1cm] (l1) {$L_1$}
					node[format, below of=r2, xshift=1cm] (l2) {$L_2$}
					
					(u) edge[red] (l21)
					(u) edge[red] (l11)
					(u) edge[red, bend left] (l20)
					(u) edge[red, bend right] (l10)
					(l11) edge[blue] (r2)
					(l21) edge[blue] (r1)
					(l10) edge[blue] (r2)
					(l20) edge[blue] (r1)

                    (r1) edge[gray] (l1)
                    (l11) edge[gray, bend right=0] (l1)
                    (l10) edge[gray, bend left=0] (l1)
                    (r2) edge[gray] (l2)
                    (l21) edge[gray, bend left=0] (l2)
                    (l20) edge[gray, bend left=0] (l2)
					
					node[below of=l1, xshift=2.75cm, yshift=0.75cm] (t) {(b)}
					;
				\end{scope}
				
			\end{tikzpicture}
		}
	\end{center}
	\vspace{-1cm}
	\caption{
		(a) The bivariate block parallel missing data model.
		(b) The causal model analogue of the model in (a) where identification of causal effects is not possible without further assumptions, but is possible with rank preservation.
	}		
	\label{fig:missing-vs-causal}
\end{figure}


As a second example for illustrating why identification strategies in missing data models do not easily translate to causal models, consider the causal analogue of the bivariate block-parallel model shown in Fig.~\ref{fig:missing-vs-causal}(b), with two binary treatments $A_1,A_2$ instead of missingness indicators, and two observed outcomes $L_1, L_2$, which implies two counterfactual versions of each observed outcome: $L^{(1)}_1, L^{(0)}_1$, and $L^{(1)}_2, L^{(0)}_2$. 
The model is defined by the following restrictions: 
{\small
\begin{align}
	A_2 \ci L^{(1)}_2, L^{(0)}_2, A_1 \mid L^{(1)}_1, L^{(0)}_1 \hspace{0.5cm} \text{and} \hspace{0.5cm}
	A_1 \ci L^{(1)}_1, L^{(0)}_1, A_2 \mid L^{(1)}_2, L^{(0)}_2.
	\label{eqn:block-parallel-causal}
\end{align}%
}
These restrictions do not yield nonparametric identification since the propensity scores in the model depend on both versions of the counterfactuals: 
{\small
\begin{align*}
	p({a}_2 \mid {l}^{(1)}_2, {l}^{(0)}_2, {a}_1, {l}^{(1)}_1, {l}^{(0)}_1) &= p({a}_2 \mid {l}^{(1)}_1, {l}^{(0)}_1),\\
	p({a}_1 \mid {l}^{(1)}_2, {l}^{(0)}_2, {a}_1, {l}^{(1)}_1, {l}^{(0)}_1) &= p({a}_1 \mid {l}^{(1)}_2, {l}^{(0)}_2).
\end{align*}
}
However, we can recover an argument for identification of $p({l}_1^{(i)}, {l}_2^{(j)})$ for any $(i,j) \in \{ 0, 1 \}^2$ via 
{rank preservation}, which states that for $k \in \{ 0, 1 \}$ there exist bijections $g_k(.)$ such that $L_1^{(1-i)} = g_1(L_1^{(i)})$ and $L_2^{(1-j)} = g_2(L_2^{(j)})$.
Identification of $p({l}^{(i)}_1, {l}^{(j)}_2)$ then proceeds as follows:
{\scriptsize
	\begin{align*}
		&p({l}^{(i)}_1, {l}^{(j)}_2) 
		=
		\frac{
			p({l}^{(i)}_1, {l}^{(j)}_2, A_1 = i, A_2 = j)
		}{
			p(A_1 = i, A_2 = j \mid {l}^{(i)}_1, {l}^{(j)}_2)
		}\\
		&=
		\frac{
			p( {l}^{(i)}_1,  {l}^{(j)}_2, A_1 = i, A_2 = j)
		}{
			\sum_{ {l}^{(1-i)}_1,  {l}^{(1-j)}_2} p(A_1 = i, A_2 = j \mid  {l}^{(i)}_1,  {l}^{(j)}_2,  {l}^{(1-i)}_1,  {l}^{(1-j)}_2)\times  p( {l}^{(1-i)}_1,  {l}^{(1-j)}_2 \mid  {l}^{(i)}_1,  {l}^{(j)}_2)
		}\\
		&= 
		\frac{
			p( {l}^{(i)}_1,  {l}^{(j)}_2, A_1 = i, A_2 = j)
		}{
			\sum_{ {l}^{(1-i)}_1,  {l}^{(1-j)}_2} p(A_1 = i \mid  {l}^{(j)}_2,  {l}^{(1-j)}_2, A_2 = j) \times p(A_2 = j \mid  {l}^{(i)}_1,  {l}^{(1-i)}_1, A_1 = i) \times p( {l}^{(1-i)}_1,  {l}^{(1-j)}_2 \mid  {l}^{(i)}_1,  {l}^{(j)}_2)
		} \\
		&= \! 
		\frac{
			p( {l}^{(i)}_1,  {l}^{(j)}_2, A_1 = i, A_2 = j)
		}{
			\sum_{ {l}^{(1-i)}_1,  {l}^{(1-j)}_2} p(A_1 = i |  {l}^{(j)}_2,  {l}^{(1-j)}_2, A_2 = j) \! \times \! p(A_2 = j |  {l}^{(i)}_1,  {l}^{(1-i)}_1, A_1 = i) \! \times \! \mathbb{I}( {l}^{(1-i)}_1 \! = \! g_1( {l}^{(i)}_1), {l}^{(1-j)}_2  \! = \! g_2( {l}^{(j)}_2))
		} \!\!\!\! \\
		&=
		\frac{
			p( {l}^{(i)}_1,  {l}^{(j)}_2, A_1 = i, A_2 = j)
		}{
			p(A_1 = i \mid {l}^{(j)}_2, A_2 = j) \times p(A_2 = j \mid {l}^{(i)}_1, A_1 = i)
		}\\
		&=
		\frac{
			p({l}_1, {l}_2, A_1 = i, A_2 = j)
		}{
			p(A_1 = i \mid {l}_2, A_2 = j) \times p(A_2 = j \mid {l}_1, A_1 = i)
		}.
	\end{align*}
}%

Here, the first and second equalities follow by rules of probability, the third by (\ref{eqn:block-parallel-causal}), the fourth and fifth by the rank preservation assumption, and the
last by consistency.
This derivation is structurally very similar to the derivation for the bivariate block-parallel model, except for the last two steps which explicitly rely on rank preservation. In Appendix~S4, 
we show that this structural similarity is quite general, and a similar identification strategy can be defined for a $K$ variable causal analogue of the block-parallel model endowed with rank preservation.

While much of the discussion in this paper has focused on how causal identification techniques can be applied or extended to missing data settings, the above example demonstrates missing data techniques can only be applied to causal settings given much stronger untestable assumptions, such as rank preservation, than those 
plausibly assumed in causal models. 



\section*{Supplementary Materials}

The supplementary materials contain discussions on identification of missing data DAG models using the odds ratio parameterization extension of identification techniques to m-DAG models with unmesaured confounders, results on identification of the full law,  proofs, and some additional results on identification in causal models using rank preservation.

\par

\par


\bibhang=1.7pc
\bibsep=2pt
\fontsize{9}{14pt plus.8pt minus .6pt}\selectfont
\renewcommand\bibname{\large \bf References}
\expandafter\ifx\csname
natexlab\endcsname\relax\def\natexlab#1{#1}\fi
\expandafter\ifx\csname url\endcsname\relax
  \def\url#1{\texttt{#1}}\fi
\expandafter\ifx\csname urlprefix\endcsname\relax\def\urlprefix{URL}\fi

\bibliographystyle{chicago}      
\bibliography{references}   

\pagebreak
\vskip .65cm
\noindent
Razieh Nabi, Department of Biostatistics and Bioinformatics, Emory University
\vskip 2pt
\noindent
E-mail: (razieh.nabi@emory.edu)
\vskip 2pt

\noindent
Rohit Bhattacharya, Department of Computer Science, Williams College
\vskip 2pt
\noindent
E-mail: (rb17@williams.edu)

\noindent
Ilya Shpitser, Department of Computer Science, Johns Hopkins University
\vskip 2pt
\noindent
E-mail: (ilyas@cs.jhu.edu)

\noindent
James Robins, Department of Epidemiology, Harvard T. H. Chan School of Public Health 
\vskip 2pt
\noindent
E-mail: (robins@hsph.harvard.edu)

\end{document}



\renewcommand{\baselinestretch}{2}

\markright{ \hbox{\footnotesize\rm Statistica Sinica: Supplement
}\hfill\\[-13pt]
\hbox{\footnotesize\rm
}\hfill }

\markboth{\hfill{\footnotesize\rm RAZIEH NABI, ROHIT BHATTACHARYA, ILYA SHPITSER, AND JAMES M.~ROBINS} \hfill}
{\hfill {\footnotesize\rm CAUSAL VIEWS OF MISSING DATA MODELS } \hfill}

\renewcommand{\thefootnote}{}
$\ $\par \fontsize{12}{14pt plus.8pt minus .6pt}\selectfont


 \centerline{\large\bf CAUSAL AND COUNTERFACTUAL VIEWS OF}
\vspace{2pt}
 \centerline{\large\bf MISSING DATA MODELS}
\vspace{.25cm}
 \centerline{Razieh Nabi, Rohit Bhattacharya, Ilya Shpitser, James M.~Robins}
\vspace{.4cm}
 \centerline{\it Emory University, Williams College, }
 \centerline{\it Johns Hopkins University, Harvard University}
\vspace{.55cm}
 \centerline{\bf Supplementary Material}
\vspace{.55cm}
\fontsize{9}{11.5pt plus.8pt minus .6pt}\selectfont
\noindent

\setcounter{section}{0}
\setcounter{equation}{0}
\def\theequation{S\arabic{section}.\arabic{equation}}
\def\thesection{S\arabic{section}}

\fontsize{12}{14pt plus.8pt minus .6pt}\selectfont

The supplementary materials are organized as follows. 
Section~\ref{app:odds} describes identification using an odds ratio parameterization, different from m-DAG factorization. 
Section~\ref{sec:missing_data_w_hidden_vars} extends the concepts of m-DAGs to m-DAGs with hidden variables. 
Section~\ref{sec:missing_data_full_law} expands the target law identification arguments in the manuscript to full law identification. 
Section~\ref{app:proofs} contains all our proofs.

\section{Identification via an Odds Ratio Parameterization} 
\label{app:odds}

We mentioned in the main draft that \cite{nabi20completeness} used an odds ratio parameterization to derive a sound and complete algorithm for full law identification in m-DAGs.
In the following, we go over an example to show how the target law can be identified in some cases via an odds ratio parameterization of conditional distributions \citep{chen07semiparametric,shpitser23lc}.

This parameterization yields a sound and complete algorithm for the identification of the full law $p(r,l^{(1)})$ in graphical missing data models, but does not yield a complete algorithm for identification of the target law $p(l^{(1)})$.  A simple example of a model where the target law is identified but the full law is not is shown in Fig.~\ref{fig:odds-ratio}(a).  In fact, deriving a sound and complete algorithm for the identification of the target law 
by \emph{any} method -- whether by the parameterization described below, or methods based on the g-formula described in the main draft -- is currently an open problem.

Consider the m-DAG in Fig.~\ref{fig:odds-ratio}(b). The non-deterministic portion of the full law factorizes as $
	p(l_1^{(1)}) \times p(l_2^{(1)}\mid l_1^{(1)}) \times p(l_3^{(1)} \mid l_1^{(1)}, l_2^{(1)}) \times
	p(r_1 \mid r_2, l_3^{(1)}) \times p(r_2 \mid r_3, l^{(1)}_1)  \times p(r_3 \mid l_1^{(1)}). 
$ 
The following conditional independence statements follow from this factorization: $R_1 \ci \{L^{(1)}_1, L^{(1)}_2, R_3\} \mid R_2, L_3^{(1)}$, and $R_2 \ci \{L^{(1)}_2, L^{(1)}_3\} \mid R_3, L^{(1)}_1$ and $R_3 \ci \{L^{(1)}_2, L^{(1)}_3\} \mid  L^{(1)}_1. 
$

According to Proposition~\ref{prop:miss-id}, we have: 
\begin{align}
	p(l_1, l_2, l_3 \ \Vert \ r=1) = \left. \frac{ p(l_1, l_2, l_3, r_1,
		r_2, r_3) }{ p(r_1 | l_3^{(1)}, r_2) \times  p(r_2
		| l^{(1)}_1, r_3) \times p(r_3 | l_1^{(1)})  } \right|_{{r} = {1}}. 
	\label{eq:ex-odds-ratio-gformula}
\end{align}%
We can use the independencies encoded in the m-DAG and consistency in missing data models to identify the propensity score of $R_1$ as follows:  
\begin{align}
	p(r_1 \mid \pa_{\mathcal{G}}(r_1))\vert_{r=1} &= p(r_1 | l^{(1)}_3, r_2)\vert_{r=1} = p(r_1 = 1 | l_3, r_2 = 1, r_3 = 1). \label{ex:odds-ratio-seq-r1}
\end{align}%
We cannot immediately obtain the propensity score of $R_2$, i.e., $p(r_2 \mid l^{(1)}_1, r_3)\vert_{r=1}$, since $R_2 \not\ci R_1 \mid L^{(1)}_1, R_3.$ This can still be identified using a total order where $R_1$ is intervened on before $R_2$. 

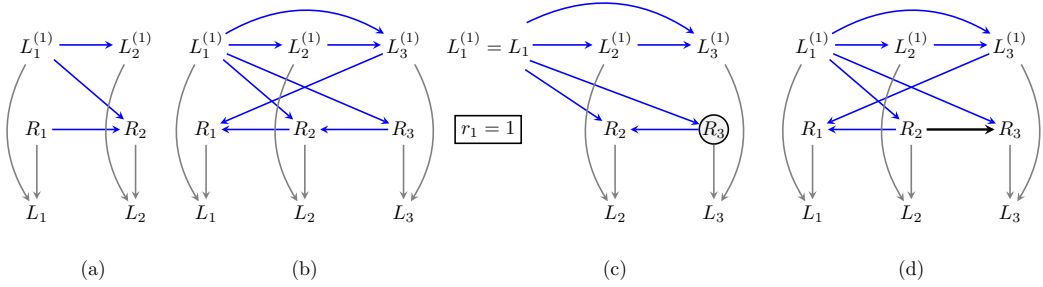
\begin{figure}[!t]
	\begin{center}
		\scalebox{0.75}{
			\begin{tikzpicture}[>=stealth, node distance=1.5cm]
				\tikzstyle{format} = [thick, circle, minimum size=1.0mm, inner sep=0pt]
				\tikzstyle{square} = [draw, thick, minimum size=1mm, inner sep=3pt]
				\tikzstyle{format2} = [draw, circle, fill=gray!10, minimum size=1.0mm, inner sep=0pt]

				\begin{scope}
					\path[->,  thick]
					node[format] (x11) {$L^{(1)}_1$}
					node[format, right of=x11, xshift=0.25cm] (x21) {$L^{(1)}_2$}
					node[format, below of=x11] (r1) {$R_1$}
					node[format, below of=x21] (r2) {$R_2$}
					node[format, below of=r1, yshift=0.cm] (x1) {$L_1$}
					node[format, below of=r2, yshift=0.cm] (x2) {$L_2$}
					
					(x11) edge[blue] (x21)
					(r1) edge[blue] (r2)
					(x11) edge[blue] (r2)
					
					(r1) edge[gray] (x1)
					(x11) edge[gray, bend right] (x1)
					(r2) edge[gray] (x2)
					(x21) edge[gray, bend right] (x2)

					node [below of=x2, xshift=-0.75cm, yshift=0.5cm] {(a)} ;
				
				\end{scope}

				\begin{scope}[xshift=3.cm]
					\path[->,  thick]
					node[format] (x11) {$L^{(1)}_1$}
					node[format, right of=x11, xshift=0.25cm] (x21) {$L^{(1)}_2$}
					node[format, right of=x21, xshift=0.25cm] (x31) {$L^{(1)}_3$}
					node[format, below of=x11] (r1) {$R_1$}
					node[format, below of=x21] (r2) {$R_2$}
					node[format, below of=x31] (r3) {$R_3$}
					node[format, below of=r1, yshift=0.cm] (x1) {$L_1$}
					node[format, below of=r2, yshift=0.cm] (x2) {$L_2$}
					node[format, below of=r3, yshift=0.cm] (x3) {$L_3$}
					
					(x11) edge[blue, bend left] (x31)
					(x11) edge[blue] (x21)
					(x21) edge[blue] (x31)
					(r3) edge[blue] (r2)
					(r2) edge[blue] (r1)
					(x11) edge[blue] (r2)
					(x11) edge[blue] (r3)
					(x31) edge[blue] (r1)
					
					(r1) edge[gray] (x1)
					(x11) edge[gray, bend right] (x1)
					(r2) edge[gray] (x2)
					(x21) edge[gray, bend right] (x2)
					(r3) edge[gray] (x3)
					(x31) edge[gray, bend left] (x3)
					node [below of=x2, xshift=0.0cm, yshift=0.5cm] {(b)} ;
				\end{scope}
				\begin{scope}[xshift=8.cm]
					\path[->,  thick]
					node[format] (x11) {$L^{(1)}_1 = L_1$}
					node[format, right of=x11, xshift=0.75cm] (x21) {$L^{(1)}_2$}
					node[format, right of=x21, xshift=0.25cm] (x31) {$L^{(1)}_3$}
					node[square, below of=x11] (r1) {$r_1=1$}
					node[format, below of=x21] (r2) {$R_2$}
					node[format2, below of=x31] (r3) {$R_3$}
					node[format, below of=r1, yshift=0.cm] (x1) {}
					node[format, below of=r2, yshift=0cm] (x2) {$L_2$}
					node[format, below of=r3, yshift=0cm] (x3) {$L_3$}
					
					(x11) edge[blue, bend left] (x31)
					(x11) edge[blue] (x21)
					(x21) edge[blue] (x31)
					(r3) edge[blue] (r2)
					(x11) edge[blue] (r2)
					(x11) edge[blue] (r3)
					(r2) edge[gray] (x2)
					(x21) edge[gray, bend right] (x2)
					(r3) edge[gray] (x3)
					(x31) edge[gray, bend left] (x3)
					
					node [below of=x2, xshift=0.0cm, yshift=0.5cm] {(c)} ;
				\end{scope}
				\begin{scope}[xshift=13.75cm]
					\path[->,  thick]
					node[format] (x11) {$L^{(1)}_1$}
					node[format, right of=x11, xshift=0.25cm] (x21) {$L^{(1)}_2$}
					node[format, right of=x21, xshift=0.25cm] (x31) {$L^{(1)}_3$}
					node[format, below of=x11] (r1) {$R_1$}
					node[format, below of=x21] (r2) {$R_2$}
					node[format, below of=x31] (r3) {$R_3$}
					node[format, below of=r1, yshift=0.cm] (x1) {$L_1$}
					node[format, below of=r2, yshift=0.cm] (x2) {$L_2$}
					node[format, below of=r3, yshift=0.cm] (x3) {$L_3$}
					
					(x11) edge[blue, bend left] (x31)
					(x11) edge[blue] (x21)
					(x21) edge[blue] (x31)
					(r2) edge[black, very thick] (r3)
					(r2) edge[blue] (r1)
					(x11) edge[blue] (r2)
					(x11) edge[blue] (r3)
					(x31) edge[blue] (r1)
					
					(r1) edge[gray] (x1)
					(x11) edge[gray, bend right] (x1)
					(r2) edge[gray] (x2)
					(x21) edge[gray, bend right] (x2)
					(r3) edge[gray] (x3)
					(x31) edge[gray, bend left] (x3)
					
					node [below of=x2, xshift=0.0cm, yshift=0.5cm] {(d)} ;
				\end{scope}
			\end{tikzpicture}
		}
	\end{center}
	\vspace{-1cm}
	\caption{(a) A simple model where the target law $p(l_1^{(1)}, l_2^{(1)})$ is identified, but the full law $p(l_1^{(1)}, l_2^{(1)}, r_1, r_2)$ is not;
	(b) Example of an m-DAG used to illustrate target law identification with odds ratio parameterization of the missingness selection model;  (c) Graph derived from (b) representing an intervention on $R_1$ and the induced selection bias on $R_3$; (d) An m-DAG that is Markov equivalent to the m-DAG in (b).  }
	\label{fig:odds-ratio}
\end{figure}

Intervening on $R_1$ results in the following kernel that is Markov relative to the graph in Fig.~\ref{fig:odds-ratio}(c), with the induced selection bias on $R_3$. 
\begin{align*}
	p(l_2^{(1)}, l_3^{(1)}, l_1, l_2, l_3, r_2, r_3 \ \Vert \ r_1 = 1) =\left. \frac{p(l_1, l_2^{(1)}, l_3^{(1)}, l_2, l_3, r_1, r_2, r_3)}{p(r_1 \mid r_2, l^{(1)}_3)}\right\vert_{r_1 = 1} 
\end{align*}
The propensity score of $R_2$ evaluated at $R_3=1$ is equivalent to $p(r_2 = 1 \mid r_3 = 1, l^{(1)}_1 \ \Vert \ r_1 = 1)$. This is identified from the marginal kernel $p(l_1, l_3^{(1)}, r_2, r_3=1 \ \Vert \ r_1 = 1)$ which is equal to ${p(l_1,  l_3,  r_1=1, r_2, r_3=1)}/p(r_1$ $= 1 \mid r_2, l_3, r_3=1)$.

We now proceed to identify the propensity score of $R_3$, $p(r_3 \mid l^{(1)}_1)\vert_{r=1}$, which is not immediately obvious since $R_3 \not\ci R_1 \mid L^{(1)}_1$.  Intervening on $R_1$ and setting it  to $1$ leads to a distribution where $R_3$ is necessarily selected on since the propensity score of  $R_1$  is identified by restricting data to cases where $R_3 = 1$. Thus, we cannot  identify the propensity score of $R_3$  in this post-intervention kernel distribution. A similar issue holds if we try to intervene on $R_2$ since identification of the propensity score of  $R_2$ is obtained from a kernel distribution where we first intervene on $R_1$, which as mentioned introduces selection bias on $R_3$. It seems that we have exhausted all of our options based on the discussion of partial orders of identification. However, there is an alternative strategy that leads to identification of not just the target law, but the full law as well. 

\cite{nabi20completeness} made the observation that the conditional density $p(r_3 \mid r_2, l^{(1)}_1)$ is  identified, since $R_3 \ci R_1 \mid R_2, L^{(1)}_1$. From the preceding discussion, it is also clear that $p(r_2 \mid r_3=1, l^{(1)}_1)$ is identified.  Given that these conditional densities $p(r_2 \mid r_3=1, l^{(1)}_1)$ and $p(r_3 \mid r_2, l^{(1)}_1)$ are identified, they considered an odds ratio parameterization of the joint density $p(r_2, r_3 \mid \pa_{\cal G}(r_2, r_3)) = p(r_2, r_3 \mid l^{(1)}_1)$ as follows \citep{chen07semiparametric},
\begin{align*}
	p(r_2, r_3 \mid l^{(1)}_1) = 
	\frac{1}{Z} \times p(r_2 | r_3 = 1, l^{(1)}_1) \times p(r_3 | r_2 = 1, l^{(1)}_1) \times \text{OR}(r_2, r_3 | l^{(1)}_1),
\end{align*}%
where $Z$ is the normalizing term, and
\begin{align*}
	&\text{OR}(r_2, r_3 \mid l^{(1)}_1) = \frac{p(r_3 \mid r_2, l^{(1)}_1)}{p(r_3 = 1 \mid r_2, l^{(1)}_1)} \times \frac{p(r_3 = 1 \mid r_2 = 1, l^{(1)}_1)}{p(r_3 \mid r_2 = 1, l^{(1)}_1)}. 
\end{align*}%
All the terms in  above parameterization are identified. This immediately implies the identifiability of the individual propensity scores for $R_2$ and $R_3$. This result, in addition to the fact that $p(r_1 \mid r_2, l^{(1)}_1)$ is identified, leads to  identification of both the target law and the full law, as the missingness process $p(r \mid l^{(1)})$ is also identified for all possible values of the missingness indicators. It is interesting to point out that the m-DAG in Fig.~\ref{fig:odds-ratio}(b) is Markov equivalent to the one in Fig.~\ref{fig:odds-ratio}(d), which means, the m-DAG model in both examples implies the same set of independence restrictions on the full data law. It is perhaps easier to see how identification in Fig.~\ref{fig:odds-ratio}(d) proceeds using techniques  discussed in the main draft -- the target law is identified via parallel interventions on $R_1$ and $R_3$ followed by a sequential intervention on $R_2$. That is, identification can be obtained via the partial order $\{ \{I_{r_1}, I_{r_3}\} < I_{r_3} \}.$



\section{m-DAG Models with Unmeasured Confounders}
\label{sec:missing_data_w_hidden_vars}

Previous sections illustrated how identification may be accomplished in missing data models represented by a DAG where all variables are either fully or partially observed. However, just as in standard causal inference problems, most realistic missing data models include variables that are completely unobserved. We represent such models with an m-DAG ${\cal G}_m(L, R, L^{(1)}, U)$, where the vertex set $U$ represents unobserved variables.  By analogy with restrictions in Section~\ref{sec:missing_data_dag_models}, we require that $(L^{(1)} \cup U) \cap \left\{ \de_{\mathcal{G}_m}(R) \cup \de_{\mathcal{G}_m}(L) \right\} = \emptyset$, i.e., there are no directed paths from any of the missingness indicators or proxy variables pointing towards variables in $U$ or $L^{(1)}$. To clearly distinguish hidden variables from others in the model, we will render edges adjacent to such vertices in red.

In some m-DAGs with hidden variables, straightforward generalizations of identification strategies developed for m-DAGs without hidden variables can be developed. Consider the hidden variable m-DAG in Fig.~\ref{fig:hidden-vars} where $U_1$, $U_2$, and $U_3$ are completely unobserved. Although the joint over all variables in this model still factorizes with respect to this m-DAG, no factors containing unobserved variables in $U$ can be used in identification or estimation strategies for the target $p({l^{(1)}})$ or the selection mechanism $p(r \mid l^{(1)})$. Thus, in this setting it is useful to consider a factorization of the marginal model defined over variables that are either fully or partially observed. Recall that under any valid topological ordering on the variables, the ordered local Markov property simplifies each factor $p(v_i \mid \past_{{\cal G}_m}(v_i))$ in the chain rule factorization to simply $p(v_i \mid \pa_{{\cal G}_m}(v_i)),$ as each variable is independent of its past (except parents) given its parents. We now describe an analogue of the ordered local Markov property and factorization that relies only on partially or fully observed variables in the m-DAG, and demonstrate how this leads to an identification strategy.

Let $V = L^{(1)} \cup R \cup L$ denote the set of all partially and fully observed variables in ${\cal G}_m$. We define the \emph{district}
of $V_i \in V$ as the set of all variables $V_j \in V$ such that there exists a path connecting $V_i$ and $V_j$ that consists of only red edges, where any unmeasured variable $U_k \in U$ along the path is not a collider and any variable $V_k \in V$ along the path is a collider. We will use $\dis_{{\cal G}_m}(V_i)$ to denote the district of $V_i$ in ${\cal G}_m$; by convention $\dis_{{\cal G}_m}(V_i)$ includes $V_i$ itself. Given any valid topological order on all the variables in ${\cal G}_m$ (including unobserved variables) define $({\cal G}_m)_{\overline{V_i}}$ to be the subgraph of ${\cal G}_m$ consisting of only the variables that appear before $V_i$ in the topological order (including $V_i$ itself) and the arrows present between these variables -- {not to be confused with alternative usage of the notation ${\cal G}_{\overline{X}}$ employed in the causal graph literature \citep{pearl00causality} to represent a graph where incoming edges into $X$ have been deleted. }
Then, the \textit{Markov pillow} of $V_i$, denoted as $\tb_{{\cal G}_m}(V_i),$ is defined as the district of $V_i$ and the observed parents of the district of $V_i$ (excluding $V_i$ itself) in the subgraph $({\cal G}_m)_{\overline{V_i}}$. That is,
$\tb_{{\cal G}_m}(V_i) \coloneqq \big\{\dis_{({\cal G}_m)_{\overline{V_i}}}(V_i) \cup \pa_{({\cal G}_m)_{\overline{V_i}}}\big(\dis_{({\cal G}_m)_{\overline{V_i}}}(V_i)\big)\big\} \cap \big\{ V \setminus V_i \big\}.$ We suppress the dependence of the definition of the Markov pillow on the topological order for notational simplicity. Given these definitions, we have the following independence relations among the observed variables in a hidden variable DAG that resemble the ordered local Markov property in fully observed DAGs \citep{tian02on, bhattacharya2022semiparametric}:
\begin{align}
	V_i \ci \past_{{\cal G}_m}(V_i) \cap V \setminus  \tb_{{\cal G}_m}(V_i) \mid \tb_{{\cal G}_m}(V_i).
	\label{eq:markov_pillow_ci}
\end{align}
That is, each variable is independent of its observed past  given its Markov pillow. Using this observation we can simplify the chain rule factorization according to any valid topological order on the observed variables as,
\begin{align}
	p(v) = \prod_{v_k \in V} p(v_k \mid \past_{{\cal G}_m}(v_k)) = \prod_{v_k \in V} p(v_k \mid \tb_{{\cal G}_m}(v_k)).
\end{align}

\begin{figure}[!t]
	\begin{center}
		\scalebox{0.7}{
			\begin{tikzpicture}[>=stealth, node distance=1.8cm]
				\tikzstyle{format} = [thick, circle, minimum size=1.0mm, inner sep=0pt]
				\tikzstyle{square} = [draw, thick, minimum size=1mm, inner sep=3pt]
				\begin{scope}
					\path[->,  thick]
					node[format] (x11) {$L^{(1)}_1$}
					node[format, right of=x11, xshift=0.25cm] (x21) {$L^{(1)}_2$}
					node[format, right of=x21, xshift=0.25cm] (x6) {$L_6$}
					node[format, right of=x6, xshift=0.25cm] (x5) {$L_5$}
					node[format, right of=x5, xshift=0.25cm] (x31) {$L^{(1)}_3$}
					node[format, right of=x31, xshift=0.25cm] (x41) {$L^{(1)}_4$}
					
					node[format, below of=x11] (r1) {$R_1$}
					node[format, below of=x21] (r2) {$R_2$}
					node[format, below of=x31] (r3) {$R_3$}
					node[format, below of=x41] (r4) {$R_4$}
					
					node[format, right of=r2, xshift=0.25cm] (u2) {$U_2$}
					node[format, above right of=x6, xshift=-0.2cm] (u3) {$U_3$}
					node[format, left of=r3, xshift=-0.25cm] (u1) {$U_1$}
					
					node[format, below of=r1, yshift=0.5cm] (x1) {$L_1$}
					node[format, below of=r2, yshift=0.5cm] (x2) {$L_2$}
					node[format, below of=r3, yshift=0.5cm] (x3) {$L_3$}
					node[format, below of=r4, yshift=0.5cm] (x4) {$L_4$}
					
					(x11) edge[blue, bend left] (x5)
					(x31) edge[blue, bend right=20] (x6)
					(x5) edge[blue, bend right = 20] (x21)
					(x6) edge[blue, bend left] (x41)
					
					(x31) edge[blue] (r1)
					(x6) edge[blue] (r1)
					(x41) edge[blue] (r2)
					(x6) edge[blue] (r2)
					(x11) edge[blue] (r3)
					(x5) edge[blue] (r3)
					(x21) edge[blue] (r4)
					(x5) edge[blue] (r4)
					
					(u1) edge[red] (r3)
					(u1) edge[red, bend right=20] (r4)
					(u1) edge[red] (x5)
					
					(u3) edge[red, bend right=5] (x11)
					(u3) edge[red, bend right=7] (x21)
					(u3) edge[red, bend left=7] (x31)
					(u3) edge[red, bend left=5] (x41)
					
					(u2) edge[red] (x6)
					(u2) edge[red, bend left=20] (r1)
					(u2) edge[red] (r2)
					
					(r1) edge[gray] (x1)
					(x11) edge[gray, bend right] (x1)
					(r2) edge[gray] (x2)
					(x21) edge[gray, bend right] (x2)
					(r3) edge[gray] (x3)
					(x31) edge[gray, bend left] (x3)
					(r4) edge[gray] (x4)
					(x41) edge[gray, bend left] (x4)
					
					node[format, below of=u2, xshift=1cm, yshift=-0.35cm] (a) {(a)} ;
				\end{scope} 
				
				\begin{scope}[xshift=13cm, yshift=-1.75cm]
					\path[->, thick]
					node[format, yshift=1cm] (x11) {$X$}
					node[format, right of=x11, xshift=0.5cm] (x21) {$A$}
					node[format, above of=x21, xshift=0cm, yshift=-0.5cm] (u1) {$U_1$}
					node[format, right of=x21, xshift=0.5cm] (x31) {$Y$}
					node[format, below of=x31, yshift=0.0cm] (x3) {$Y^*$}
					node[format,  left of=x3] (r3) {$R_Y$}
					node[format,  above left of=r3, xshift=-0.1cm, yshift=-0.35cm] (u2) {$U_2$}

					(x11) edge[blue] (x21) 
					(x21) edge[blue] (x31) 
					(u1) edge[red, bend left=0] (x11)
					(u1) edge[red, bend left=0] (x31) 
					
					(u2) edge[red] (r3)
					(u2) edge[red] (x11)
					(x31) edge[gray, bend left=0] (x3)
					(r3) edge[gray] (x3)
					
					node[format, below of=r3, xshift=-0.25cm, yshift=0.4cm] (b) {(b)} ;
				\end{scope}
			\end{tikzpicture}
		}
	\end{center}
	\vspace{-1cm}
	\caption{(a) A graph corresponding to a missing data model with hidden variables where identification of the law $p(l_1^{(1)}, l_2^{(1)}, l_3^{(1)}, l_4^{(1)}, l_5, l_6)$ is possible; (b) An example where the causal effect is identified but the target law is not identified. }
	\label{fig:hidden-vars}
\end{figure}
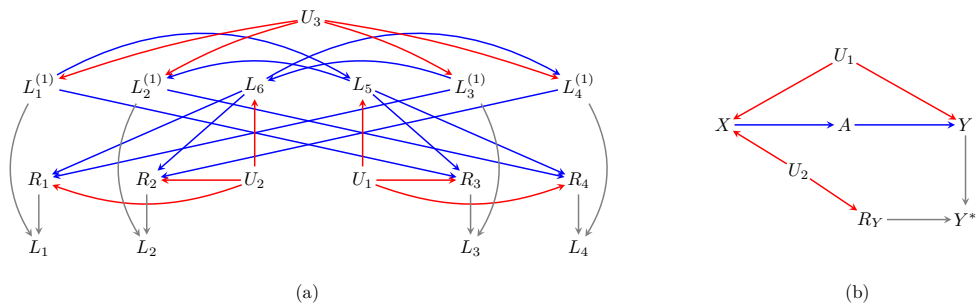

We now apply the above factorization to study the identification of the target laws $p(l^{(1)})$ in hidden variable m-DAGs. The properties of missing data graphs, as we described them, namely that for every $L_k$, $\pa_{\mathcal{G}_m}(L_k) = \{L_k^{(1)}, R_k \}$, and $\{\de_{\mathcal{G}_m}(R) \cup \de_{\mathcal{G}_m}(L) \} \cap {L}^{({1})} = \emptyset$, implies that a version of the g-formula holds for identifying $p({l}^{({1})})$ under topological orderings where variables in ${R} \cup {L}$ come after variables in ${L}^{({1})}$. That is, under a topological ordering defined on the partially and fully observed variables we have,
\begin{align}
	p({l}^{({1})}) = \left. \frac{ p({l},{r}) }{
		\prod_{r_k \in R} \ p(r_k \mid \past_{{\cal G}_m}(r_k)) }
	\right|_{{r}={1}}, 
	\label{eqn:g-formula-past-om}
\end{align}
We have defined each $\past_{{\cal G}_m}(r_k)$ here as containing only observed or missing (but not hidden) variables. However, though the above g-formula does not contain any hidden variables, it still may not necessarily yield identification, unless additional structure of the model can be exploited.

As an example, consider the model in Fig.~\ref{fig:hidden-vars}. Fix a topological
ordering $U_1, U_2, U_3, L_1^{(1)}, L_3^{(1)}, L_5, L_6,$
$L_2^{(1)}, L_4^{(1)}, R_1, R_2, R_3, R_4,$ $L_1, L_2, L_3, L_4$. Considering the subsequence of this ordering on just the observed variables the g-formula for $p(l_1^{(1)}, l_2^{(1)}, l_3^{(1)}, l_4^{(1)}, l_5, l_6)$ is 
\begin{align*}
	& \left. \frac{ p(l_1, l_2, l_3, l_4, l_5, l_6, r_1, r_2, r_3, r_4) }{
		p(r_1 | \past_{\mathcal{G}_m}\!(r_1)) \! \times  \! p(r_2 | \past_{\mathcal{G}_m}\!(r_2)) \! \times \! p(r_3 | \past_{\mathcal{G}_m}\!(r_3)) \! \times \! p(r_4 | \past_{\mathcal{G}_m}(r_4)) } \!\!\
	\right|_{r=1}. 
\end{align*}

Using the independence relations described in \eqref{eq:markov_pillow_ci}, we have that each $p(r_k \mid \past_{{\cal G}_m}(r_k))$ simplifies as $p(r_k \mid \tb_{{\cal G}_m}(r_k))$. The propensity scores for each missingness indicator then simplifies under the proposed topological order on ${\cal G}_m$ as follows:
{\small
\begin{align*}
	p(r_1 \mid \past_{\mathcal{G}_m}
	(r_1)) \vert_{r=1} &=
	p(r_1 \mid l_3^{(1)}, l_6) \vert_{r=1} = p(r_1 = 1 \mid	l_6, l_3, r_3 = 1), 
	\\
	p(r_2 \mid \past_{\mathcal{G}_m}
	(r_2)) \vert_{r=1} &=
	p(r_2 \mid l_3^{(1)}, l_4^{(1)}, l_6, r_1) \vert_{r=1} = p(r_2 = 1 \mid l_3, l_4, l_6, r_1=1, r_3 = 1, r_4=1), \\
	p(r_3 \mid \past_{\mathcal{G}_m}
	(r_3)) \vert_{r=1} &=
	p(r_3 \mid l_5, l_1^{(1)}, r_1)\vert_{r=1} = p(r_3 \mid	l_5, l_1, r_1=1), 
	\\
	p(r_4 \mid \past_{\mathcal{G}_m}
	(r_4))\vert_{r=1} &=
	p(r_4 \mid l_1^{(1)}, l_2^{(1)}, l_5, r_3)\vert_{r=1} = p(r_4 \mid l_1, l_2, l_5, r_3=1, r_1=1, r_2=1).
\end{align*}%
}
Since these terms are all functions of the observed data law, $p(l_1^{(1)}, l_2^{(1)}, l_3^{(1)}, l_4^{(1)}, l_5,$ $l_6)$ is identified.

\section{Identification of the Full Law}
\label{sec:missing_data_full_law}
All of our examples so far have focused on identification of the target law, or equivalently the missingness mechanism evaluated at $1$, i.e., $p(R = 1 \mid \pa_{{\cal G}_{{m}}}({r}))$. If identification of the full law is of interest (for instance for model selection purposes as in \cite{gain18missing} and \cite{tu2019causal}), the missingness mechanism  $p( r \mid \pa_{{\cal G}_{{m}}}({r})), \text{ for all } r \in \{0, 1\}^K$ must be identified. It is possible that in certain missing data DAG models, the target law is identified whereas the full law is not. For instance, in the model shown in Fig.~\ref{fig:partial-order}(a) in the main draft, $p({r}_2 \mid R_1 = 0, {l}^{(1)}_1)$ is not identified, and in the model in Fig.~\ref{fig:UAI}(a) in the main body of the draft, $p({r}_1 \mid R_2 = 0, {l}^{(1)}_2)$ is not identified, though the target law is identified in both cases. Both examples have a special colluder structure $L^{(1)}_j \rightarrow R_i \leftarrow R_j$ in common. \cite{bhattacharya19mid} show that the presence of colluders in a graph always implies the full law of the corresponding missing data model is not identified. 

\cite{nabi20completeness} studied  identification of the full law  in missing data DAG models, and provided the first completeness result in a subclass of missing data DAGs where the proxy variables $L$ are childless. They show the missingness mechanism $p({r} \mid {l}^{(1)})$ that is Markov relative to a missing data DAG ${\cal G}_{{m}}$, where $L$s are childless, is identified \emph{if and only if} ${\cal G}_{{m}}$ does not contain self-censoring edges and colluders. The identification is given via an odds ratio parameterization \citep{chen07semiparametric} of the missingness mechanism. An example of identification with odds ratio parameterization is provided in Appendix~\ref{app:odds}.  \cite{nabi20completeness} drew an important connection between missing data models of a DAG ${\cal G}_{{m}}$ that are devoid of self-censoring and colluders, and the itemwise conditionally independent nonresponse (ICIN) model described in \citep{shpitser2016consistent, sadinle16itemwise} {(the ICIN model is referred to as the ``no self-censoring" model in \cite{shpitser2016consistent}).} As a substantive model, the ICIN model implies that no partially observed variable directly determines its own missingness, and is defined by the restrictions that for every pair $L_k^{(1)}, R_k,$ it is the case that $L_k^{(1)} \ci R_k \mid R_{-k}, L_{-k}^{(1)}$. 

The no-self-censoring and no-colluder assumptions imply that $L^{(1)}_i$ is not in the \emph{Markov blanket} of $R_i,$ where the Markov blanket is defined as $\mb_{{\cal G}_{{m}}}(V_i) = \pa_{{\cal G}_{{m}}}(V_i) \cup \ch_{{\cal G}_{{m}}}(V_i) \cup \pa_{{\cal G}_{{m}}}(\ch_{\cal G}(V_i)).$ Given the local Markov property, $V_i \ci V \setminus \mb_{{\cal G}_{{m}}}(V_i) \mid \mb_{{\cal G}_{{m}}}(V_i).$ If the full law is identified, then the target law is guaranteed to be identified. For instance, since there is no self-censoring edges or colluder structures in Figs.~\ref{fig:parallel}(a) and \ref{fig:sequential-parallel}(a), we can immediately conclude that the full law and hence the target law are identified. However, the reverse is not necessarily true -- that is if the full law is not identified (due to presence of colluders or self-censoring edges), the target law might still be identified as discussed in examples related to Figs.~\ref{fig:partial-order}(a) and \ref{fig:UAI}(a). \cite{nabi20completeness} generalized this theory to scenarios where some variables are not just missing, but completely unobserved. They proposed necessary and sufficient graphical conditions that must hold in a missing data DAG model with unmeasured confounders to  permit identification of the full law. They defined a \emph{colluding path} between $L^{(1)}_k$ and $R_k$ as a path where every collider is a variable in $L^{(1)} \cup R$ and every non-collider is a variable in $U.$ They showed that in the absence of such paths, the odds ratio parameterization can be used to identify the full law, while their presence results in non-identification.

Often, instead of identifying the entire full law or target law, we might simply be interested in a simple outcome mean or a causal effect. There are plenty of examples where such parameters are indeed identified, but the underlying joint distribution is not. For instance, consider the graph in Fig.~\ref{fig:hidden-vars}(b), which is  discussed in \cite{mohan2021graphical}. The outcome is missing due to a common unmeasured confounder with pre-treatment variables $X$. The causal effect of $A$ on $Y$ here  is indeed identified, even though the target law is not identified. 
{Briefly, the target law is not identified due to the presence of a colluding path between $Y^{(1)}$ and $R_Y$, which prevents identification of $p(y^{(1)} \mid a, x)$ \citep{mohan2021graphical, nabi20completeness}. However, the model encodes the following independence restrictions which enable identification of the  causal effect: $Y^{(a, R_Y=1)} \ci R_Y$ and $Y^{(a, R_Y=1)} \ci A \mid X, R_Y$,  where $Y^{(a, R_Y=1)}$ denotes the potential outcome when $A$ is set to some value $a$ and had we, in fact, been able to observe it. Such counterfactual independence restrictions are often read using d-separation rules applied to single-world intervention graphs (SWIGs).
A detailed discussion on how missing data graphical models which contain counterfactuals relate to SWIGs is left to future work.
}
	The counterfactual distribution $p(y^{(a, R_Y=1)})$ is identified as:
\begin{align*}
	p(y^{(a, R_Y=1)}) &= p(y^{(a, R_Y=1)} \mid R_Y = 1) \\
	&= \sum_x p(y^{(a, R_Y=1)} \mid x, R_Y=1) \times p(x \mid R_Y=1) \\
	&=  \sum_x p(y^{(a, R_Y=1)} \mid x, A=a,  R_Y=1) \times p(x \mid R_Y=1) \\
	&= \sum_x p(y \mid x, A=a, R_Y=1)\times p(x \mid R_Y=1).
\end{align*}
%
The first equality follows from $Y^{(a, R_Y=1)} \ci R_Y$, the second from rules of probability, the third from $Y^{(a, R_Y=1)} \ci A \mid X, R_Y$, and the final equality follows from consistency.

\section{Additional Results and Proofs}
\label{app:proofs}

\subsection{Proposition \ref{prop:miss-id}}

\begin{proof}
	Since $\de_{\mathcal{G}}(R_i) \cap {L}^{({1})}$, the vertex set ${L}^{({1})}$ is ancestral in $\mathcal{G}_m$
	This implies $p({l}^{({1})})$ is equal to
	{\small
	\begin{align*}
		\sum_{{r} \cup {l}} \ p({l},{r},{l}^{({1})}) = \sum_{{r} \cup {l}} \big( \prod_{v_k \in {r} \cup 
			{l}} p(v_k \mid \pa_{\mathcal{G}_m}(v_k)) \big) \times \big( \prod_{v_k \in 
			{l}^{({1})}} p(v_k \mid \pa_{\mathcal{G}_m}(v_k)) \big) 
		= \prod_{v_k \in {l}^{({1})}} p(v_k \mid \pa_{\mathcal{G}_m}(v_k)).
	\end{align*}%
	}
	Further, using Bayes rule, we conclude the second equality in (\ref{eq:mDAG-bayes}) by noting that $p({l},{r},{l}^{({1})})\vert_{{r}={1}} = p({l},{r})\vert_{{r}={1}}$ (by consistency) and 
	{\small
	\begin{align*}
		p(r, l \mid l^{(1)})\vert_{r=1} = \prod_{l_k \in l} p(l_k \mid r_k = 1, l^{(1)}_k) \times \prod_{r_k \in r} p(r_k \mid \pa_{{\cal G}_m}(r_k))\vert_{{r}={1}} = \prod_{r_k \in r} p(r_k \mid \pa_{{\cal G}_m}(r_k))\vert_{{r}={1}}.
	\end{align*}%
	}
\end{proof}


\subsection{Lemma \ref{lem:invariance}} 

\begin{proof}
	Given a set $R^* \subseteq R$ and the corresponding set of counterfactuals $L^{*(1)}$, the distribution
	\begin{align*}
		p(l^{(1)}\setminus l^{*(1)}, r \setminus r^*, l \ \Vert\  r^*=1) \coloneqq \frac{p(l^{(1)}\setminus l^{*(1)}, r, l)}{\prod_{r_k \in r^*} p(r_k \mid \pa_{{\cal G}_m}(r_k))}\Bigg\vert_{R^*=1}
	\end{align*}%
	factorizes with respect to a \emph{conditional DAG} (CDAG) $\widetilde{{\cal G}}_m(L^{(1)} \cup \{R\setminus R^*\} \cup L, R^*)$, which is a DAG containing random vertices $L^{(1)} \cup \{R\setminus R^*\} \cup L$ and fixed vertices $R^*$ with the property that all fixed vertices can only have outgoing directed edges. $\widetilde{{\cal G}}_m$ is constructed from original m-DAG ${\cal G}_m$ by removing all edges with arrowheads into $R^*,$ marking $R^*$ as fixed vertices, and treating each $L_k^{(1)} \in L^{*(1)}$ as equivalent to its corresponding proxy (by consistency). The CDAG factorization of any $p(v \ \Vert\ w)$ with respect to a CDAG ${\cal G}(V, W)$ is a straightforward generalization of the DAG factorization:
	\begin{align*}
		p(v \ \Vert\ w) = \prod_{v_i \in v} p(v_i \mid \pa_{\cal G}(v_i) \setminus w \ \Vert\ \pa_{\cal G}(v_i) \cap w),
	\end{align*}%
	where conditioning in $p(v \ \Vert\  w)$ is defined as in \eqref{eqn:kernel}.
	
	If $p(v \ \Vert\ w)$ factorizes with respect to ${\cal G}(V, W)$, it obeys the local Markov property which states that for each variable $V_i$, the distribution $p(v_i \mid \past_{\cal G}(v_i) \setminus w \ \Vert\  w \cap \pa_{\cal G}(v_i))$ is only a function of $V_i$ and its direct causes $\pa_{\cal G}(V_i).$ This immediately implies the conclusion, since the kernel $p(l^{(1)}\setminus l^{*(1)}, r \setminus r^*, l \ \Vert\  r^*=1)$ factorizes according to the CDAG $\widetilde{{\cal G}}_m(L^{(1)} \cup \{R\setminus R^*\} \cup L, R^*)$ where all direct causes of each $R_k \not\in R^*$ are preserved. See \cite{richardson2023nested} for more details on conditional DAG factorization.
	
\end{proof}


\subsection{Identification under rank preservation in a $K$ variable block-parallel model}

In Section~\ref{sec:diss} we saw how missing data identification strategies can be applied in conjunction with additional assumptions, such as rank preservation, to attain identification in a two variable causal analogue of the block-parallel missing data model. The following theorem shows how this applies to any causal model endowed with rank preservation that is analogous to a $K$ variable block-parallel model (as well as any sub models of it). For simplicity, we will assume all treatment variables are binary though the result trivially extends to non-binary treatments. 

\begin{theorem} 
	Given a causal model that encodes the following independence restrictions: for each $k \in \{1, \dots, K\}$
	\begin{align*}
		A_k \ci L_k^{(0)}, L_k^{(1)}, A_{-k} \mid L_{-k}^{(0)}, L_{-k}^{(1)},
	\end{align*}%
	and the following rank preservation assumptions: for each $k \in \{1, \dots, K\}$ and $j = \{0, 1\}$ there exists a bijection $g_k$ such that $L_k^{(1-j)} = g_k(L_k^{(j)})$. The counterfactual distribution $p(l_1^{(a_1)}, \dots l_2^{(a_K)})$, where each $a_k \in \{0, 1\}$, is identified and given by the following functional:
	\begin{align*}
		\frac{p(l_1, \dots, l_K, A_1=a_1, \dots, A_K = a_K)}{\prod_{A_k \in A} p(A_k = a_k \mid l_{-k}, A_{-k}=a_{-k})}.
	\end{align*}%
\end{theorem}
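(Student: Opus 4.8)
The plan is to prove the identity by a sequential inverse-probability-weighting argument, in which the decisive work is done not by the factorization itself but by rank preservation, which collapses the cross-world conditioning sets of the hypothesized independences into single-world quantities that are recoverable from the observed data. Throughout I would write $L^{(a)} = (L_1^{(a_1)}, \dots, L_K^{(a_K)})$ and assume the usual positivity so the stated denominator is nonzero. The first and most important step is to rewrite the independence restrictions using rank preservation. Since the bijection $g_k$ makes $L_k^{(0)}$ and $L_k^{(1)}$ deterministic functions of one another, they generate the same information as the single variable $L_k^{(a_k)}$; hence $(L_{-k}^{(0)}, L_{-k}^{(1)})$ generates the same $\sigma$-algebra as $L_{-k}^{(a_{-k})}$, and the hypothesis $A_k \ci L_k^{(0)}, L_k^{(1)}, A_{-k} \mid L_{-k}^{(0)}, L_{-k}^{(1)}$ becomes
\[
	A_k \ci L_k^{(a_k)}, A_{-k} \mid L_{-k}^{(a_{-k})}. \qquad (\star)
\]
This is exactly the point where rank preservation is indispensable: without it, $(L_{-k}^{(0)}, L_{-k}^{(1)})$ is a genuine cross-world object with no observed counterpart and identification fails, whereas $(\star)$ conditions on a single-world quantity that, on the event $A_{-k}=a_{-k}$, equals the observed $L_{-k}$ by consistency.

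Next I would factorize the assignment mechanism given the full vector of potential outcomes. By the reduction above, $L^{(a)}$ carries all the cross-world information, and $(\star)$ says precisely that the full conditional $p(A_k \mid A_{-k}, L^{(a)})$ depends only on $L_{-k}^{(a_{-k})}$, i.e.\ $p(A_k \mid A_{-k}, L^{(a)}) = p(A_k \mid L_{-k}^{(a_{-k})})$. Fixing any order on the treatments and expanding $p(A=a \mid L^{(a)}=l)$ by the chain rule, each factor $p(A_k=a_k \mid A_{1:k-1}, L^{(a)}=l)$ is obtained by marginalizing this full conditional over the later treatments; since it does not depend on those treatments, the marginalization leaves it unchanged, so that
\[
	p(A=a \mid L^{(a)} = l) = \prod_{k=1}^K p\bigl(A_k = a_k \mid L_{-k}^{(a_{-k})} = l_{-k}\bigr).
\]

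To assemble the functional I would use consistency, which gives $L^{(a)} = L$ on the event $A=a$, so the numerator becomes observable, $p(L^{(a)}=l, A=a) = p(l_1, \dots, l_K, A_1=a_1, \dots, A_K=a_K)$. Dividing the joint by the conditional just derived yields
\[
	p\bigl(l_1^{(a_1)}, \dots, l_K^{(a_K)}\bigr) = \frac{p(l_1, \dots, l_K, A_1=a_1, \dots, A_K=a_K)}{\prod_{k} p\bigl(A_k=a_k \mid L_{-k}^{(a_{-k})}=l_{-k}\bigr)}.
\]
Finally I would identify each propensity factor: the marginal $A_k \ci A_{-k} \mid L_{-k}^{(a_{-k})}$ contained in $(\star)$ lets me append the event $A_{-k}=a_{-k}$ without changing the probability, and on that event consistency gives $L_{-k}=L_{-k}^{(a_{-k})}$, whence $p(A_k=a_k \mid L_{-k}^{(a_{-k})}=l_{-k}) = p(A_k=a_k \mid l_{-k}, A_{-k}=a_{-k})$. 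Substituting produces the stated expression.

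The hard part is entirely conceptual and sits in the first step: the restrictions condition on the \emph{pair} of potential outcomes of the other units, a cross-world quantity with no single-world realization, and the whole argument hinges on rank preservation turning this into an observable single-world object (and, correspondingly, on carefully tracking consistency to pass between $L^{(a)}$ and $L$ on the event $A=a$). Once that reduction is in place, the factorization of the assignment mechanism and the identification of the propensity scores are routine inverse-probability-weighting manipulations that mirror the block-parallel g-formula identification of Proposition~\ref{prop:miss-id}.
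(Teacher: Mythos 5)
Your proof is correct and follows essentially the same route as the paper's: Bayes' rule, factorization of the joint assignment mechanism via the stated independence restrictions, rank preservation to eliminate the cross-world conditioning on $(L^{(0)}, L^{(1)})$, and consistency to land on observed quantities. The only difference is cosmetic --- you apply rank preservation up front to collapse the conditioning sets, whereas the paper carries the sum over $l^{(1-a)}$ explicitly and collapses it with an indicator function in the fifth and sixth equalities.
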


\begin{proof}
	The counterfactual distribution $p(l_1^{(a_1)}, \dots, l_k^{(a_K)})$ is identified via the following identities following a very similar strategy to the one used in the main text. In the following we will use $l^{(a)}$ and $l^{(1-a)}$ as short hand for $l_1^{(a_1)}, \dots, l_K^{(a_K)}$ and $l_1^{(1-a_1)}, \dots, l_K^{(1-a_K)}$ respectively:
		\begin{align*}
			p(l^{(a)}) &=
			\frac{
				p({l}^{(a_1)}_1, \dots, {l}^{(a_K)}_K, A_1 = a_1, \dots, A_K = a_K)
			}{
				p(A_1 = a_1, \dots, A_K = a_K \mid {l}^{(a_1)}_1, \dots,  {l}^{(a_K)}_K)
			} \\
		&= 	\frac{
				p({l}^{(a)}, A_1 = a_1, \dots, A_K = a_K)
			}{
				p(A_1 = a_1, \dots, A_K = a_K \mid {l}^{(a)})
			} \\
			&=
			\frac{
				p(l^{(a)}, A_1 = a_1, \dots, A_K = a_K)
			}{
				\displaystyle{\sum_{l^{(1-a)} }} p(A_1 = a_1, \dots, A_K = a_K \mid  l^{(a)}, l^{(1-a)}) \times  p( {l}^{(1-a)} \mid  l^{(a)})
			}\\
			&=
			\frac{
				p( l^{(a)}, A_1 = a_1, \dots, A_K = a_K)
			}{
				\displaystyle{\sum_{ l^{(1-a)}}} \ \displaystyle{ \prod_{A_k \in A}} p(A_k=a_k \mid l^{(a)}, l^{(1-a)}, A_{-k}= a_{-k}) \times p( {l}^{(1-a)} \mid  {l}^{(a)})
			}\\
			&=\!\!
				\Bigg\{
				\displaystyle{\sum_{ l^{(1-a)}}} \ \displaystyle{\prod_{A_k \in A}} p(A_k \! = \! a_k | l^{(a)}, l^{(1-a)}, A_{-k} \! = \! a_{-k})\! \times \! p( {l}^{(1-a)} |  {l}^{(a)}) \! \\ 
				&\hspace{1cm} \times \! \mathbb{I}( l_1^{(1-a_1)} \! = \! g_1(l_1^{(a_1)}), \dots, l_K^{(1-a_K)} \!= \! g_K(l_K^{(a_K)}))
			\Bigg\}^{-1} \\ 
			&\hspace{2cm} \times p( l^{(a)}, A_1 = a_1, \dots, A_K = a_K) \\
			&=
			\frac{
				p( l^{(a)}, A_1 = a_1, \dots, A_K = a_K)
			}{
				\displaystyle{\prod_{A_k \in A} } p(A_k=a_k \mid l_{-k}^{(a)}, A_{-k}= a_{-k})
			}\\
			&=
			\frac{p(l_1, \dots, l_K, A_1=a_1, \dots, A_K = a_K)}{ \displaystyle{ \prod_{A_k \in A}} p(A_k = a_k \mid l_{-k}, A_{-k}=a_{-k})}.
		\end{align*}
	The first equality follows from Bayes rule, the second from our notational convention, the third from rules of probability, the fourth from applying the chain rule of factorization and using the independence restrictions implied by the model, the fifth and sixth from rank preservation as well as restrictions encoded by the model, and finally, the last equality follows from consistency.
\end{proof}



{\small
\bibliographystyle{chicago}      
\bibliography{references}   
}

\makeatletter\@input{link.tex}\makeatother